\newtheorem{theorem}{Theorem}[section] 
\newtheorem{lemma}[theorem]{Lemma}
\theoremstyle{definition}
\newtheorem{definition}[theorem]{Definition}
\newtheorem{claim}[theorem]{Claim}
\newtheorem{example}[theorem]{Example}
\newcommand{\nc}[1]{\newcommand{#1}}
\newcommand{\rnc}[1]{\renewcommand{#1}}
\rnc{\leq}{\ensuremath{\leqslant}}
\rnc{\geq}{\ensuremath{\geqslant}}
\rnc{\le}{\leq}
\rnc{\ge}{\geq}
\nc{\isdef}{\ensuremath{:=}}
\nc{\deff}{\isdef}
\nc{\defi}{\isdef}
\nc{\set}[1]{\ensuremath{\{#1\}}}
\nc{\setsize}[1]{\ensuremath{|#1|}}
\nc{\Setsize}[1]{\ensuremath{\big|#1\big|}}
\nc{\Set}[1]{\ensuremath{\big\{#1\big\}}}
\nc{\setc}[2]{\set{#1 \ : \ #2}}
\nc{\Setc}[2]{\Set{#1 \ : \ #2}}
\nc{\aufgerundet}[1]{\ensuremath{\lceil #1 \rceil}}
\nc{\abgerundet}[1]{\ensuremath{\lfloor #1 \rfloor}}
\nc{\dcup}{\ensuremath{\dot\cup}}
\nc{\ov}[1]{\ensuremath{\overline{#1}}}
\nc{\NN}{\ensuremath{\mathbb{N}}}
\nc{\NNpos}{\ensuremath{\NN_{\scriptscriptstyle\geq 1}}}
\nc{\RR}{\ensuremath{\mathbb{R}}}
\nc{\RRpos}{\ensuremath{\RR_{\scriptscriptstyle\geq 0}}}
\nc{\und}{\ensuremath{\wedge}}
\nc{\Und}{\ensuremath{\bigwedge}}
\nc{\oder}{\ensuremath{\vee}}
\nc{\Oder}{\ensuremath{\bigvee}}
\nc{\nicht}{\ensuremath{\neg}}
\nc{\impl}{\ensuremath{\to}}
\nc{\gdw}{\ensuremath{\leftrightarrow}}
\newcommand{\uund}{\,\und\,}
\newcommand{\bbody}[1]{\;\big(\,#1\,\big)}
\nc{\free}{\ensuremath{\textrm{\upshape free}}}
\nc{\ar}{\ensuremath{\operatorname{ar}}}
\nc{\Structure}[1]{\ensuremath{\mathcal{#1}}}
\nc{\A}{\Structure{A}}
\nc{\B}{\Structure{B}}
\nc{\C}{\Structure{C}}
\nc{\isom}{\ensuremath{\cong}}
\nc{\querycont}{\ensuremath{\sqsubseteq}}
\nc{\eval}[2]{\ensuremath{#1(#2)}}
\nc{\semantik}[1]{\ensuremath{\left\llbracket#1\right\rrbracket}}
\nc{\CanDB}[1]{\ensuremath{\A_{#1}}} 
\nc{\CanTup}[1]{\ensuremath{t_{#1}}} 
\newcommand{\tupleVariables}{\ensuremath{\ov{v}}}
\newcommand{\inds}{s}
\newcommand{\queryphi}{\varphi}
\newcommand{\varv}{v}
\newcommand{\varx}{x}
\newcommand{\vary}{y}
\newcommand{\varz}{z}
\newcommand{\varw}{w}
\newcommand{\sgpsi}{\psi} 
\newcommand{\relS}{S} 
\newcommand{\relT}{T} 
\newcommand{\relE}{E} 
\newcommand{\relR}{R} 
\newcommand{\smalleps}{\varepsilon}
\newcommand{\arityr}{r}
\newcommand{\actdomsize}{n}
\newcommand{\verta}{a} 
\newcommand{\vertb}{b} 
\newcommand{\vertc}{c}
\nc{\Vars}{\ensuremath{\textrm{\upshape vars}}}
\nc{\vars}{\Vars}
\nc{\Cons}{\ensuremath{\textrm{\upshape cons}}}
\nc{\cons}{\Cons}
\nc{\atoms}{\ensuremath{\textrm{\upshape atoms}}}
\nc{\Adom}{\ensuremath{\textrm{\upshape adom}}}
\nc{\adom}[1]{\ensuremath{\Adom(#1)}} 
\nc{\dom}[1]{\ensuremath{\textrm{\upshape dom}(#1)}} 
\newcommand{\poly}{\operatorname{\textit{poly}}}
\newcommand{\qhier}{q-hie\-rar\-chi\-cal\xspace}
\newcommand{\OMv}{OMv\xspace}
\newcommand{\OMvcon}{\OMv{}-conjecture\xspace}
\newcommand{\OV}{OV\xspace}
\newcommand{\OVcon}{\OV-conjecture\xspace}
\newcommand{\indi}{i}
\newcommand{\indj}{j}
\newcommand{\indt}{t} 
\newcommand{\vecu}{\vec{u}}
\newcommand{\vecv}{\vec{v}}
\newcommand{\matM}{M}
\newcommand{\dimn}{n}
\newcommand{\DBone}[1]{}
\newcommand{\bigoh}{O}
\newcommand{\bigOh}{\bigoh}
\newcommand{\trans}{^{\,\mkern-1.5mu\mathsf{T}}}
\newcommand{\parent}{\pointerfont{parent}}
\nc{\arrayfont}[1]{\ensuremath{\texttt{#1}}}
\newcommand{\query}{\ensuremath{\varphi}}
\newcommand{\qatom}{\ensuremath{\psi}}
\newcommand{\size}[1]{\ensuremath{|\!|#1|\!|}}
\nc{\card}[1]{\ensuremath{|#1|}}
\newcommand{\assign}{\ensuremath{\alpha}}
\newcommand{\potenzmengeof}[1]{2^{#1}}
\newcommand{\sgpsix}{\sgpsi^\varx}
\newcommand{\sgpsiy}{\sgpsi^\vary}
\newcommand{\sgpsixy}{\sgpsi^{\varx,\vary}}
\newcommand{\partPfull}{\mathcal P}
\newcommand{\iotasubij}{\iota_{\indi,\indj}} 
\newcommand{\homh}{h}
\newcommand{\homDBtoquery}{g}
\nc{\insertp}{\textsc{Insert}}
\nc{\cleanup}{\textsc{cleanUp}}
\nc{\cleanups}{\textsc{cleanUp'}}
\nc{\myparagraph}[1]{\medskip\noindent\textbf{#1}. }
\nc{\Yes}{\texttt{yes}}
\nc{\No}{\texttt{no}}
\nc{\Dom}{\ensuremath{\textbf{dom}}}
\nc{\Var}{\ensuremath{\textbf{var}}}
\nc{\schema}{\ensuremath{\sigma}}
\nc{\DB}{\ensuremath{D}} 
\nc{\DBstrich}{\ensuremath{D'}} 
\nc{\DBstart}{\ensuremath{{\DB_0}}} 
\nc{\DBempty}{\ensuremath{{\DB_{\emptyset}}}} 
\nc{\DS}{\ensuremath{\mathtt{D}}} 
\rnc{\phi}{\queryphi}
\nc{\UpdateFont}[1]{\ensuremath{\textsf{#1}}}
\nc{\Delete}{\UpdateFont{delete}}
\nc{\Insert}{\UpdateFont{insert}}
\nc{\Update}{\UpdateFont{update}}
\nc{\AlgoFont}[1]{\ensuremath{\textbf{#1}}}
\nc{\PREPROCESS}{\AlgoFont{preprocess}}
\nc{\INIT}{\AlgoFont{init}}
\nc{\UPDATE}{\AlgoFont{update}}
\nc{\ENUMERATE}{\AlgoFont{enumerate}}
\nc{\COUNT}{\AlgoFont{count}}
\nc{\ANSWER}{\AlgoFont{answer}}
\nc{\TEST}{\AlgoFont{test}}
\nc{\EOE}{\texttt{\upshape EOE}\xspace} 
\nc{\preprocessingtime}{\ensuremath{t_p}}
\nc{\inittime}{\ensuremath{t_i}}
\nc{\delaytime}{\ensuremath{t_d}}
\nc{\updatetime}{\ensuremath{t_u}}
\nc{\answertime}{\ensuremath{t_a}}
\nc{\countingtime}{\ensuremath{t_c}}
\nc{\testingtime}{\ensuremath{t_t}}
\nc{\preprocessingtimehat}{\ensuremath{\hat{t}_p}}
\nc{\inittimehat}{\ensuremath{\hat{t}_i}}
\nc{\delaytimehat}{\ensuremath{\hat{t}_d}}
\nc{\updatetimehat}{\ensuremath{\hat{t}_u}}
\nc{\answertimehat}{\ensuremath{\hat{t}_a}}
\nc{\countingtimehat}{\ensuremath{\hat{t}_c}}
\nc{\testingtimehat}{\ensuremath{\hat{t}_t}}
\nc{\phiBTypical}{\ensuremath{\phi'_{\relS\text{-}\relE\text{-}\relT}}}
\nc{\phiJTypical}{\ensuremath{\phi_{\relS\text{-}\relE\text{-}\relT}}}
\nc{\phiET}{\ensuremath{\phi_{\relE\text{-}\relT}}}
\nc{\restrict}[2]{\ensuremath{{#1}_{|#2}}}
\nc{\extend}[3]{\ensuremath{{#1}\tfrac{#3}{#2}}}
\nc{\valuation}{\ensuremath{\beta}}
\nc{\emptyassign}{\ensuremath{\emptyset}}
\nc{\Assign}[2]{\ensuremath{\frac{#2}{#1}}}
\nc{\vroot}{\ensuremath{\varv_{\textsl{root}}}}
\nc{\pointerfont}[1]{\textit{#1}}
\nc{\varitem}[1]{\ensuremath{v^{#1}}}
\nc{\assitem}[1]{\ensuremath{\assign^{#1}}}
\nc{\constitem}[1]{\ensuremath{a^{#1}}}
\nc{\parentitem}[1]{\ensuremath{\parent^{#1}}}
\nc{\childitem}[2]{\ensuremath{\pointerfont{child}^{#1}_{#2}}}
\nc{\llist}[2]{\ensuremath{\mathcal{L}_{#2}^{#1}}}
\nc{\startlist}{\ensuremath{\mathcal{L}_{\text{\upshape start}}}\xspace}
\nc{\nextlistitem}[1]{\ensuremath{\pointerfont{next-listitem}^{#1}}}
\nc{\prevlistitem}[1]{\ensuremath{\pointerfont{prev-listitem}^{#1}}}
\nc{\countitem}[1]{\ensuremath{C_{\textit{below}}^{#1}}}
\nc{\desc}[1]{\ensuremath{\text{desc}}}
\nc{\Null}{\ensuremath{0}}
\nc{\arrayA}{\arrayfont{A}}
\nc{\arrayB}{\arrayfont{B}}
\nc{\arrayC}{\arrayfont{C}}
\nc{\arrayE}{\arrayfont{E}}
\nc{\ITEMS}{\mathcal{I}}
\nc{\NIL}{\textsc{nil}}
\nc{\TupleSet}{\ensuremath{\mathcal{T}}}
\nc{\ResultSet}{\ensuremath{\mathcal{R}}}
\nc{\SkipArrayNext}[1]{\ensuremath{\mathsf{skip}[#1].\mathsf{next}}}
\nc{\SkipArrayPrev}[1]{\ensuremath{\mathsf{skip}[#1].\mathsf{prev}}}
\nc{\AlgoA}{\ensuremath{\mathbb{A}}}
\nc{\nil}{\texttt{nil}\xspace}
\nc{\SkipStart}{\ensuremath{\mathsf{sk{-}start}}}
\nc{\tup}{\ensuremath{\ov{t}}}
\nc{\tups}{\ensuremath{\ov{s}}}
\nc{\prozvisit}{\ensuremath{\textsc{Visit}}}
\nc{\prozvisitrev}{\ensuremath{\textsc{Visit}^{-1}}}
\nc{\tut}{\ensuremath{t}}
\nc{\enumprev}{\ensuremath{\vartriangleleft}}
\nc{\SkipLast}{\ensuremath{\mathsf{sk{-}last}}}
\nc{\lllist}{\ensuremath{\mathcal{L}}}
\nc{\pllist}{\ensuremath{\mathcal{L}^+}}
\nc{\milist}{\ensuremath{\mathcal{L}^-}}
\nc{\cilist}{\ensuremath{\mathcal{L}^\circ}}
\nc{\numitmpl}{\ensuremath{+{-}\text{on}{-}\text{path}}}
\nc{\numitmmi}{\ensuremath{-{-}\text{on}{-}\text{path}}}
\nc{\numitmci}{\ensuremath{\circ{-}\text{on}{-}\text{path}}}
\nc{\DBnew}{\ensuremath{\DB_{\text{new}}}}
\nc{\DBold}{\ensuremath{\DB_{\text{old}}}}
\nc{\liitmpl}{\ensuremath{\mathcal{L}^{+{-}\text{on}{-}\text{path}}}}
\nc{\liitmmi}{\ensuremath{\mathcal{L}^{-{-}\text{on}{-}\text{path}}}}
\nc{\liitmci}{\ensuremath{\mathcal{L}^{\circ{-}\text{on}{-}\text{path}}}}
\nc{\ITEMSres}[1]{\ensuremath{\ITEMS|_{#1}}}
\nc{\prVisit}{\textsc{Visit}}
\nc{\prVisitRes}{\textsc{VisitRes}}
\nc{\prEnumWithItem}{\textsc{EnumWithItem}}
\nc{\prFindItems}{\textsc{FindItems}}
\nc{\emptytuple}{\ensuremath{()}}
\nc{\emptyword}{\ensuremath{\varepsilon}}
\nc{\proj}{\ensuremath{\pi}}
\nc{\FD}{\ensuremath{\delta_{\textit{fd}}}} 
\nc{\IND}{\ensuremath{\delta_{\textit{ind}}}} 
\nc{\INDtilde}{\ensuremath{\tilde{\delta}_{\textit{ind}}}}
\nc{\SD}{\ensuremath{\delta_{\textit{sd}}}} 
\nc{\CC}{\ensuremath{\delta_{\textit{cc}}}}
\nc{\DEP}{\ensuremath{\delta}} 
\nc{\CONSTR}{\ensuremath{\Sigma}} 
\nc{\qSET}{\ensuremath{q_{\textit{S-E-T}}}}
\nc{\pSET}{\ensuremath{p_{\textit{S-E-T}}}}
\nc{\qET}{\ensuremath{q_{\textit{E-T}}}}
\title{%
 Answering UCQs under updates
 \\ 
 and in the presence of integrity constraints\thanks{Funded by the Deutsche
                  Forschungsgemeinschaft (DFG, German Research Foundation) -- SCHW 837/5-1.}
}
\author{%
  Christoph Berkholz, 
  Jens Keppeler, 
  Nicole Schweikardt 
  \\
  Humboldt-Universität zu Berlin \\
  \texttt{\{berkholz,keppelej,schweika\}@informatik.hu-berlin.de}
}
\begin{document}
\maketitle{}

\begin{abstract}
We investigate the query evaluation problem for fixed queries over
fully dynamic databases where tuples can be inserted or deleted.
The task is to design a dynamic data structure that can immediately
report the new result of a fixed query after every database update.
We consider unions of conjunctive queries (UCQs) and focus on the
query evaluation tasks
\emph{testing} (decide whether an input tuple $\ov{a}$ belongs to the
query result), 
\emph{enumeration} (enumerate, without repetition,
all tuples in the query result), 
and \emph{counting} (output the number of tuples in the
query result).

We identify three increasingly restrictive classes of UCQs which we
call \emph{t-hierarchical}, \emph{q-hierarchical}, and
\emph{exhaustively q-hierarchical} UCQs.
Our main results provide the following dichotomies:
If the query's homomorphic core is t-hierarchical (q-hierarchical,
exhaustively q-hierarchical), then the testing (enumeration, counting)
problem can be solved
with constant update time and constant testing time (delay, counting time). Otherwise, it
cannot be solved with sublinear update time and sublinear testing
time (delay, counting time).$^\ast$

We also study the complexity of query evaluation in the dynamic setting
in the presence of integrity constraints, and we obtain according
dichotomy results
for the special case of small domain constraints (i.e., constraints which state that
all values in a particular column of 
a relation belong to a fixed domain of constant size).

\smallskip
$^\ast)$ To be precise:
our lower bound for the enumeration problem
is obtained only for queries that are self-join free, 
with sublinear we mean $O(n^{1-\smalleps})$ for $\smalleps>0$ and
where $n$ is the size of the active domain of the current database, 
and all our
lower bounds rely on the OV-conjecture 
and/or the OMv-conjecture,
two algorithmic conjectures on the hardness of the Boolean orthogonal vectors
problem and the Boolean online matrix-vector multiplication problem.
\end{abstract}

\section{Introduction}\label{section:introduction}

\emph{Dynamic query evaluation} refers to a setting where a fixed
query $q$ has to be evaluated against a database that is constantly
updated \cite{IdrisUgarteVansummeren.2017}. 
In this paper, we study dynamic query evaluation for
unions of conjunctive queries (UCQs)
on relational databases that may be updated by inserting or deleting tuples.
A dynamic algorithm for evaluating a query $q$
receives an initial database and performs a preprocessing
phase which builds a data structure that contains a suitable
representation of the database and the result of $q$ on this database.
After every database update, the data structure is updated so that it
suitably represents the new database $\DB$ and the result $q(\DB)$ of $q$ on
this database.

To solve the \emph{counting problem}, such an algorithm is required to quickly
report the number $|q(\DB)|$ of tuples in the current query result,
and the \emph{counting time} is the time used to compute this number.
To solve the \emph{testing problem}, the algorithm has to be able to check for an arbitrary
input tuple $\ov{a}$ if $\ov{a}$ belongs to the current query
result, and the \emph{testing time} is the time used to perform this check.
To solve the \emph{enumeration problem}, the algorithm has to enumerate $q(\DB)$
without repetition and with a bounded \emph{delay} between the output tuples.
The \emph{update time} is the time used for updating the data
structure after having received a database update.
We regard the counting (testing, enumeration) problem of a query $q$
to be \emph{tractable under updates} if it can be solved by a dynamic
algorithm with linear preprocessing time, constant update time, and
constant counting time (testing time, delay).

This setting has been studied for conjunctive queries (CQs)
in our previous paper \cite{BKS_enumeration_PODS17}, which identified
a class of CQs called \emph{\qhier} that precisely characterises the
tractability frontier of the counting problem and the enumeration
problem for CQs under updates:
For every q-hierarchical CQ,
the counting problem and the enumeration problem can be solved with 
linear preprocessing time, constant update time, constant counting
time, and constant delay.
And for every CQ that is not equivalent to a \qhier CQ, 
the counting problem (and for the case of self-join free queries, the enumeration problem) 
cannot be solved with
sublinear update time and sublinear counting time (delay),
unless the OMv-conjecture or the OV-conjecture (the OMv-conjecture)
fails.
The latter are well-known algorithmic conjectures on the hardness of the
Boolean online matrix-vector multiplication problem (OMv) and the
Boolean orthogonal vectors problem (OV)
\cite{Henzinger.2015,Abboud.2015},
and ``sublinear'' means $\bigOh(n^{1-\epsilon})$, where $\epsilon>0$
and $n$ is the
size of the active domain of the current database.

\myparagraph{Our contribution}
We identify a new subclass of CQs 
which we call \emph{t-hierarchical}, 
which contains and properly extends the class of q-hierarchical CQs, 
and
which precisely characterises the tractability frontier of the testing
problem for CQs under updates 
(see Theorem~\ref{thm:testing}): 
For every t-hierarchical CQ, the testing problem
can be solved by a dynamic algorithm with linear preprocessing time,
constant update time, and constant testing time.
And for every CQ that is not equivalent to a t-hierarchical CQ,
the testing problem cannot be solved with arbitrary preprocessing
time, sublinear update time, and sublinear testing time, unless the
OMv-conjecture fails.

Furthermore, we transfer the notions of t-hierarchical and
q-hierarchical queries to unions of conjunctive queries (UCQs)
and identify a further class of UCQs 
which we call \emph{exhaustively \qhier}, yielding three increasingly
restricted subclasses of UCQs.
In a nutshell, our main contribution concerning UCQs shows that these
notions precisely characterise the tractability frontiers of the
testing problem, the enumeration problem, and the counting problem for
UCQs under updates
(see the Theorems~\ref{thm:UCQtesting}, \ref{thm:enumUCQ}, \ref{thm:countUCQ}):
For every t-hierarchical (q-hierarchical,
exhaustively q-hierarchical) UCQ, the testing (enumeration, counting)
problem can be solved with linear preprocessing time,
constant update time, and constant testing time (delay, counting
time). 
And for every UCQ that is not equivalent to a t-hierarchical (\qhier,
exhaustively \qhier) UCQ, the testing (enumeration, counting)
problem cannot be solved with sublinear update time and sublinear testing
time (delay, counting time); to be precise, the lower bound for
enumeration is obtained only for self-join free queries, 
the lower bounds for testing and enumeration are conditioned on 
the OMv-conjecture,
and the lower bound for counting is conditioned on the OMv-conjecture and the
OV-conjecture.

Finally, we transfer our results to a scenario where
databases are required to satisfy a set of \emph{small domain constraints}
(i.e., constraints stating that all values which occur in a particular column of
a relation belong to a fixed domain of constant size), leading to a
precise characterisation of the UCQs for which the testing
(enumeration, counting) problem under updates is tractable in this scenario (see
Theorem~\ref{thm:SD-constraints}). 

\myparagraph{Further related work} 
The complexity of evaluating CQs and UCQs in the \emph{static} setting
(i.e., without database updates) 
is well-studied. In particular, there are characterisations of
``tractable'' queries known for Boolean queries
\cite{DBLP:conf/stoc/GroheSS01,Grohe.2007,Marx.2013} as well as for
the task of counting the result tuples
\cite{Dalmau.2004,Chen.2015,Durand.2015,Greco.2014,ChenMengel.2016}. 
In \cite{Bagan.2007},
the fragment of self-join
free CQs that can be enumerated with constant delay after linear
preprocessing time has been identified, but almost nothing
is known about the complexity of the enumeration problem for UCQs on static databases.
Very recent papers also studied the complexity of CQs with respect to
a given set of integrity constraints \cite{DBLP:journals/jacm/GottlobLVV12,DBLP:conf/pods/KhamisNS16,DBLP:conf/pods/BarceloGP16}.
The \emph{dynamic} query evaluation problem has been
considered from different angles, including \emph{descriptive dynamic 
complexity} \cite{Patnaik.1997,Schwentick.2016,Zeume.2014}
and, somewhat closer to what we are aiming for, \emph{incremental
view maintenance} 
\cite{Gupta.1993,
DBLP:journals/ftdb/ChirkovaY12,
DBLP:conf/pods/Koch10,
DBLP:conf/pods/0001LT16,
DBLP:conf/sigmod/NikolicD016}. 
In \cite{IdrisUgarteVansummeren.2017}, the
enumeration and testing problem under updates has been studied 
for q-hierarchical and (more general) acyclic CQs in 
a setting that is very similar to our setting and the setting of \cite{BKS_enumeration_PODS17}; 
the \emph{Dynamic Constant-delay Linear Representations} (DCLR) of \cite{IdrisUgarteVansummeren.2017} are
data structures that use at most linear update time and solve 
the enumeration problem and the testing problem with constant delay and constant testing time.

\myparagraph{Outline}
The rest of the paper is structured as follows.
Section~\ref{section:preliminaries} provides basic notations
concerning databases, queries, and dynamic algorithms for query
evaluation.
Section~\ref{section:CQs} is devoted to CQs and proves
our dichotomy result concerning the {testing} problem for CQs.
Section~\ref{section:UCQs} focuses on UCQs and proves our dichotomies
concerning the {testing}, {enumeration}, and
{counting} problem for UCQs.
Section~\ref{section:QueriesWithIntegrityConstraints} is devoted to
the setting in which integrity constraints may cause a query whose
evaluation under updates is hard in general to be tractable on  
databases that satisfy the constraints.

\section{Preliminaries}\label{section:preliminaries}

\myparagraph{Basic notation}
We write $\NN$ for the set of non-negative integers and let 
$\NNpos\deff\NN\setminus\set{0}$ and $[n]\deff\set{1,\ldots,n}$ for
all $n\in\NNpos$.
By $\potenzmengeof{S}$ we denote the power set of a set $S$.
We write $\vec{v}_i$ to denote the $i$-th component of an
$n$-dimensional vector $\vec{v}$,
and we write $M_{i,j}$ for the entry in row $i$ and column $j$ of a matrix
$M$.
By $\emptytuple$ we denote the empty tuple, i.e., the unique tuple of arity 0.
For an $r$-tuple $t=(t_1,\ldots,t_r)$ and indices $i_1,\ldots,i_m\in\set{1,\ldots,r}$ we write
$\proj_{i_1,\ldots,i_m}(t)$ to denote the projection of $t$ to the components $i_1,\ldots,i_m$, i.e.,
the $m$-tuple $(t_{i_1},\ldots,t_{i_m})$, and in case that $m=1$ we identify the $1$-tuple $(t_{i_1})$ with the
element $t_{i_1}$.
For a set $T$ of $r$-tuples we let $\proj_{i_1,\ldots,i_m}(T)\deff\setc{\proj_{i_1,\ldots,i_m}(t)}{t\in T}$. 

\myparagraph{Databases}
We fix a countably infinite set $\Dom$, the \emph{domain} of potential
database entries. Elements in $\Dom$ are called \emph{constants}.
A \emph{schema} is a finite set $\schema$ of relation symbols, where
each $R\in\schema$ is equipped with a fixed \emph{arity} $\ar(R)\in\NN$ (note that here we explicitly allow 
relation symbols of arity 0). 
Let us fix a schema $\schema=\set{R_1,\ldots,R_s}$, and let
$r_i\deff\ar(R_i)$ for $i\in[s]$.
A \emph{database} $\DB$ of schema $\schema$ ($\schema$-db, for short), 
is of the form $\DB=(R_1^\DB,\ldots,R_s^\DB)$, where $R_i^\DB$ is a finite subset of 
$\Dom^{r_i}$.
The \emph{active domain} $\adom{\DB}$ of $\DB$ is the smallest subset
$A$ of $\Dom$ such that $R_i^\DB\subseteq A^{r_i}$ for all $i\in [s]$.

\myparagraph{Queries}
We fix a countably infinite set $\Var$ of \emph{variables}.
We allow queries to use variables as well as constants.
An \emph{atomic formula} (for short: \emph{atom}) $\qatom$ of schema $\schema$ is of the form 
$R v_1 \cdots v_r$ with $R\in\schema$, $r=\ar(R)$, and
$v_1,\ldots,v_r\in\Var\cup\Dom$.
A \emph{conjunctive formula} of schema $\schema$ is of the form 
\begin{equation}\label{eq:CF}\tag{$*$}
  \exists y_1\,\cdots\,\exists y_\ell \bbody{
   \qatom_1 \uund \cdots \uund \qatom_d
  }
\end{equation}
where $\ell\geq 0$, $d\geq 1$, $\qatom_j$ is an atomic
formula of schema $\schema$ for every $j\in [d]$, and
$y_1,\ldots,y_\ell$ are pairwise distinct elements in $\Var$.
For a conjunctive formula $\phi$ of the form \eqref{eq:CF} we let
$\Vars(\phi)$ (and $\Cons(\phi)$, respectively) be the set of all variables (and constants, respectively) occurring in
$\phi$. 
The set of \emph{free} variables of $\phi$ is
$\free(\phi)\deff \Vars(\phi)\setminus\set{y_1,\ldots,y_\ell}$.
For every variable $x\in\Vars(\phi)$ we let $\atoms_\phi(x)$ (or
$\atoms(x)$, if $\phi$ is clear from the context) be the set of all atoms $\qatom_j$ of $\phi$ such that
$x\in\Vars(\psi_j)$. The formula $\phi$ is called \emph{quantifier-free}
if $\ell=0$, and it is called \emph{self-join free} if no relation
symbol occurs more than once in $\phi$.

For $k\geq 0$, a \emph{$k$-ary conjunctive query} ($k$-ary CQ, for
short) is of the form
\begin{equation}\label{eq:karyCQ}\tag{$**$}
 \{ \ (u_1,\ldots,u_k) \ : \ \phi \ \}
\end{equation}
where $\phi$ is a conjunctive formula of schema $\schema$,
$u_1,\ldots,u_k\in\free(\phi)\cup\Dom$, and $\set{u_1,\ldots,u_k}\cap\Var=\free(\phi)$.
We often write $q_{\phi}(\ov{u})$ for $\ov{u}=(u_1,\ldots,u_k)$ (or $q_\phi$ if $\ov{u}$ is clear from the context) 
to denote such a query.
We let $\Vars(q_{\phi})\deff\Vars(\phi)$,
$\free(q_{\phi})\deff\free(\phi)$, and $\Cons(q_{\phi})\deff\Cons(\phi)\cup(\set{u_1,\ldots,u_k}\cap\Dom)$.
For every $x\in\Vars(q_\phi)$ we let $\atoms_{q_\phi}(x)\deff
\atoms_\phi(x)$, and if $q_\phi$ is clear from the context, we omit
the subscript and simply write
$\atoms(x)$. The CQ $q_\phi$ is called \emph{quantifier-free}
(\emph{self-join free}) if $\phi$ is quantifier-free (self-join free).

The semantics are defined as usual: \
A \emph{valuation} is a mapping $\valuation:\Vars(q_\phi)\cup\Dom\to\Dom$
with $\beta(a)=a$ for every $a\in\Dom$.
A valuation $\beta$ is a \emph{homomorphism} from $q_\phi$ to a $\schema$-db
$\DB$
if for every atom $Rv_1\cdots v_r$ in $q_\phi$ we have 
$\big(\beta(v_1),\ldots,\beta(v_r)\big)\in R^\DB$. 
We sometimes write $\beta:q_\phi\to\DB$ to
indicate that $\beta$ is a homomorphism from $q_\phi$ to $\DB$.
The \emph{query result} $q_{\phi}(\DB)$ of 
a $k$-ary CQ $q_{\phi}(u_1,\ldots,u_k)$ on the $\schema$-db $\DB$ is
defined as the set 
$\{\,\big(\beta(u_1),\ldots,\beta(u_k)\big) \ : \ \text{ $\beta$ is a
  homomorphism from $q_\phi$ to $\DB$}\}$.
If $\ov{x}=(x_1,\ldots,x_k)$ is a list of the free variables of
$\phi$ and $\ov{a}\in\Dom^k$, we sometimes write $\DB\models\phi[\ov{a}]$ to indicate that
there is a homomorphism $\beta:q\to\DB$ with
$\ov{a}=\big(\beta(x_1),\ldots,\beta(x_k)\big)$, for the query $q=q_\phi(x_1,\ldots,x_k)$.

A \emph{$k$-ary union of conjunctive queries} ($k$-ary UCQ)
is of the form
\ $q_{1}(\ov{u}_1) \cup \cdots  \cup  q_{d}(\ov{u}_d)$ \
where $d\geq 1$ and $q_{i}(\ov{u}_i)$ is a $k$-ary CQ of schema $\schema$ for every $i\in [d]$.
The query result of such a $k$-ary UCQ $q$ on a $\schema$-db $\DB$ is 
\,$q(\DB)\deff\bigcup_{i=1}^d q_{i}(\DB)$.

For a $k$-ary query $q$ we write $\Vars(q)$ (and $\Cons(q)$) to denote the set of all 
variables (and constants) that occur in $q$.
Clearly,
$q(\DB)\subseteq(\Adom(\DB)\cup\Cons(q))^k$.

A \emph{Boolean} query is a query of arity $k=0$.
As usual, for Boolean queries $q$ we will write $q(\DB)=\Yes$
instead of $q(\DB)\neq\emptyset$, and $q(\DB)=\No$ instead of 
$q(\DB)=\emptyset$.
Two $k$-ary queries $q$ and $q'$ are 
\emph{equivalent} ($q\equiv q'$, for short) 
if $q(\DB)=q'(\DB)$ for every $\schema$-db $\DB$.

\myparagraph{Homomorphisms}
We use standard notation concerning homomorphisms
(cf., e.g.\ \cite{AHV-Book}).
The notion of a homomorphism $\beta:q\to\DB$ from a CQ $q$ to a
database $\DB$ has already been defined above.
A homomorphism $g:\DB\to q$ from a database $\DB$ to a CQ $q$ is a
mapping from $\Adom(\DB)\to\Vars(q)\cup\Cons(q)$ such that whenever
$(a_1,\ldots,a_r)$ is a tuple in some relation $R^\DB$ of $\DB$, then 
$Rg(a_1)\cdots g(a_r)$ is an atom of $q$.

Let  $q(u_1,\ldots,u_k)$ and $q'(v_1,\ldots,v_k)$ be two
$k$-ary CQs. A \emph{homomorphism} from $q$ to $q'$ is a
mapping $h\colon \Vars(q)\cup\Dom \to \Vars(q')\cup\Dom$ with
$h(a)=a$ for all $a\in\Dom$ and $h(u_i)=v_i$ for all $i\in [k]$ such
that for every atom $Rw_1\cdots w_r$ in $q$ there is an atom
$Rh(w_1)\cdots h(w_r)$ in $q'$. We sometimes write $h:q\to q'$ to
indicate that $h$ is a homomorphism from $q$ to $q'$.
Note that by \cite{DBLP:conf/stoc/ChandraM77}
there is a homomorphism from $q$
to $q'$ if and only if for every database $\DB$ it holds that
$q(\DB)\supseteq q'(\DB)$.
A CQ $q$ is a \emph{homomorphic core} if there is no homomorphism
from $q$ into a proper subquery of $q$. 
Here, a \emph{subquery} of a CQ $q_\phi(\ov{u})$ where $\phi$ is of
the form \eqref{eq:CF} is a CQ $q_{\phi'}(\ov{u})$ where
$\phi'$ is of the form $\exists y_{i_1}\cdots\exists y_{i_m}\;
(\psi_{j_1} \uund \cdots\uund \psi_{j_n})$ with
$i_1,\ldots,i_m\in[\ell]$, $j_1,\ldots,j_n\in[d]$,
and $\free(\phi')=\free(\phi)$.

We say that a UCQ is
a \emph{homomorphic core}, if every CQ in the union is a homomorphic core and there is no
homomorphism between two distinct CQs.
It is well-known that every CQ and every UCQ is equivalent to a unique
(up to renaming of variables) homomorphic core, which is therefore
called \emph{the core of} the query (cf., e.g., \cite{AHV-Book}). 

\myparagraph{Sizes and Cardinalities}
The \emph{size} $\size{\schema}$ of a schema $\schema$ is 
$|\schema|+\sum_{R\in\schema}\ar(R)$.
The size $\size{q}$ of a query $q$ of schema $\schema$ is 
the length of $q$ when viewed as a word over the alphabet 
$\schema\cup\Var\cup\Dom\cup\set{\,\und\,,\exists\,,(\,,)\,,\{\,,\}\,,:\,,\cup\,}\cup\set{\,,}$.
For a $k$-ary query $q$ and a $\sigma$-db $\DB$, the 
\emph{cardinality of the query result} is the number $|q(\DB)|$ of
tuples in $q(\DB)$.
The \emph{cardinality} $\card{\DB}$ of a $\schema$-db $\DB$ is defined
as the number of tuples stored in $\DB$, i.e.,
$\card{\DB}\deff\sum_{R\in\schema} |R^{\DB}|$. 
The \emph{size} $\size{\DB}$ of $\DB$ is defined as
$\size{\schema}+|\Adom(\DB)|+\sum_{R\in\schema} \ar(R){\cdot}
|R^D|$ and corresponds to the size of a reasonable encoding of $\DB$.

\medskip

The following notions concerning updates, dynamic algorithms for query
evaluation, and algorithmic conjectures are taken almost verbatim from
\cite{BKS_enumeration_PODS17}.

\myparagraph{Updates}
We allow to update a given database of schema $\schema$ by inserting or deleting
tuples as follows. An \emph{insertion} command is of the form
 \Insert\ $R(a_1,\ldots,a_r)$
for $R\in\schema$, $r=\ar(R)$, and $a_1,\ldots,a_r\in \Dom$. When
applied to a $\schema$-db $\DB$, it results in the updated $\schema$-db
$\DB'$ with $R^{\DB'}\deff R^{\DB}\cup\set{(a_1,\ldots,a_r)}$ and
$S^{\DB'}\deff S^{\DB}$ for all $S\in\schema\setminus\set{R}$.
A \emph{deletion} command is of the form
 \Delete\ $R(a_1,\ldots,a_r)$
for $R\in\schema$, $r=\ar(R)$, and $a_1,\ldots,a_r\in \Dom$. When
applied to a $\schema$-db $\DB$, it results in the updated $\schema$-db
$\DB'$ with $R^{\DB'}\deff R^{\DB}\setminus\set{(a_1,\ldots,a_r)}$ and
$S^{\DB'}\deff S^{\DB}$ for all $S\in\schema\setminus\set{R}$.
Note that both types of commands may change the database's active domain.

\myparagraph{Dynamic algorithms for query evaluation}
Following \cite{Cormen.2009}, we use Random Access Machines (RAMs)
with $\bigoh(\log n)$ word-size and a uniform cost 
measure to analyse our algorithms.
We will assume that the RAM's memory is initialised to $\Null$. In
particular, if an algorithm uses an array, we will assume
that all array entries are initialised to $\Null$, and this initialisation
comes at no cost (in real-world computers this can be achieved by using the
\emph{lazy array initialisation technique}, cf.\ e.g.\ \cite{MoretShapiro}). 
A further assumption is that for every fixed
dimension $k\in\NNpos$ we have available an unbounded number of
$k$-ary arrays $\arrayA$ such that for given $(n_1,\ldots,n_k)\in\NN^k$
the entry $\arrayA[n_1,\ldots,n_k]$ 
at position $(n_1,\ldots,n_k)$ can be accessed in constant
time.\footnote{While this can be accomplished easily in the RAM-model,  
for an implementation on real-world
computers one would probably have to resort to replacing our use of
arrays by using suitably designed hash functions.}
For our purposes it will be convenient to assume that $\Dom=\NNpos$.

Our algorithms will take as input 
a $k$-ary query $q$
and a $\schema$-db $\DBstart$.
For all query evaluation problems considered in this paper, we aim at
routines $\PREPROCESS$ and $\UPDATE$ which achieve the following.
Upon input of $q$
and $\DBstart$, the $\PREPROCESS$ routine builds a data
structure $\DS$ which represents $\DBstart$ (and which is designed in
such a way that it supports the evaluation of $q$ on $\DBstart$).
Upon input of a command $\Update\ R(a_1,\ldots,a_r)$ (with
$\Update\in\set{\Insert,\Delete}$), 
calling $\UPDATE$  modifies the data structure $\DS$ such that it
represents the updated database $\DB$.
The \emph{preprocessing time} $\preprocessingtime$ is the
time used for performing $\PREPROCESS$.
The \emph{update time} $\updatetime$ is the time used for performing
an $\UPDATE$, and in this paper we aim at algorithms where $\updatetime$ is independent of the size
of the current database $\DB$.
By $\INIT$ we denote the particular case of the routine $\PREPROCESS$
upon input of a query $q$
and the \emph{empty} database
$\DBempty$, where $R^{\DBempty}=\emptyset$ for all $R\in\schema$.
The \emph{initialisation time} $\inittime$
is the time used for performing $\INIT$.
In all algorithms presented in this paper, the $\PREPROCESS$ routine
for input of $q$ and $\DBstart$ 
will carry out the $\INIT$ routine for $q$ 
and then perform a sequence of $\card{\DBstart}$ update operations to
insert all the tuples of $\DBstart$ into the data structure.
Consequently, $\preprocessingtime = \inittime +  \card{\DBstart}\cdot\updatetime$.

In the following, $\DB$ will always denote the database that is
currently represented by the data structure $\DS$.
To solve the \emph{enumeration problem under updates}, 
apart from the routines $\PREPROCESS$ and $\UPDATE$,
we aim at a routine $\ENUMERATE$ such that
calling $\ENUMERATE$ invokes an enumeration of all tuples,
\emph{without repetition}, that belong to the query result $q(\DB)$.
The \emph{delay} $\delaytime$ is the maximum time
used during a call of $\ENUMERATE$
\begin{itemize}
\item until the output of the first tuple (or the end-of-enumeration
  message $\EOE$, 
  if $q(\DB)=\emptyset$),
\item between the output of two consecutive tuples, and 
\item between the output of the last tuple and the end-of-enumeration
  message $\EOE$.
\end{itemize}

To \emph{test} if a given tuple belongs to the query result,
instead of $\ENUMERATE$ we aim at a routine $\TEST$ which
upon input of a tuple $\ov{a}\in\Dom^k$ checks whether $\ov{a}\in
q(\DB)$.
The \emph{testing time} $\testingtime$ is the time used for
performing a $\TEST$.
To solve the \emph{counting problem under updates}, 
we aim at a routine $\COUNT$ which outputs the cardinality
$|q(\DB)|$ of the query result.
The \emph{counting time} $\countingtime$ is the time used for
performing a $\COUNT$.
To \emph{answer} a \emph{Boolean} query under updates,
we aim at a routine $\ANSWER$
that produces the answer $\Yes$ or $\No$ of $q$ on $\DB$.
The \emph{answer time} $\answertime$ is the time used for
performing $\ANSWER$.
Whenever speaking of a \emph{dynamic algorithm}, we mean an algorithm
that has routines $\PREPROCESS$ and $\UPDATE$ and, depending on the
problem at hand, at least one of the routines 
$\ANSWER$,
$\TEST$, 
$\COUNT$,
and $\ENUMERATE$.

Throughout the paper, we often adopt the view of \emph{data complexity} 
and suppress factors that may depend on the query $q$ 
but not on the database $\DB$. 
E.g., ``linear preprocessing time'' means 
$\preprocessingtime\leq f(q)\cdot\size{\DBstart}$ and 
``constant update time'' means  $\updatetime\leq f(q)$, for a
function $f$ with codomain $\NN$. 
When writing $\poly(n)$ we mean $n^{\bigOh(1)}$, and
for a query $q$ we often write $\poly(q)$ instead of $\poly(\size{q})$.

\myparagraph{Algorithmic conjectures}
Similarly as in \cite{BKS_enumeration_PODS17} we obtain hardness
results that are conditioned on algorithmic conjectures concerning the
hardness of the following problems. These problems deal with
\emph{Boolean} matrices and vectors, i.e., matrices and vectors over
$\set{0,1}$, and all the arithmetic is done over the Boolean semiring,
where multiplication means conjunction and addition means disjunction.

The \emph{orthogonal vectors
  problem} (OV-problem) is the following decision problem. Given two sets $U$
and $V$ of $n$ Boolean vectors of dimension $d$, decide whether there
are vectors $\vec{u}\in U$ and $\vec{v}\in V$ such that
$\vec{u}\trans\vec{v}=0$.
The \emph{OV-conjecture} states that there is no $\epsilon>0$ such that
the OV-problem for $d=\aufgerundet{\log^2 n}$ can be solved in time
$\bigOh(n^{2-\epsilon})$, see \cite{Abboud.2015}.

The \emph{online matrix-vector multiplication problem} (OMv-problem) is the
following algorithmic task. At first, the algorithm gets a Boolean
$n\times n$ matrix $M$ and is allowed to do some
preprocessing. Afterwards, the algorithm receives $n$ vectors
$\vec{v}^{\,1},\ldots,\vec{v}^{\,n}$ one by one and has to output
$M\vec{v}^{\,t}$ before it has access to $\vec{v}^{\,t+1}$ (for each
$t<n$).
The running time is the overall time the algorithm needs to produce
the output $M\vec{v}^{\,1},\ldots,M\vec{v}^{\,n}$.
The \emph{OMv-conjecture} \cite{Henzinger.2015} states that there is no
$\epsilon>0$ such that the OMv-problem can be solved in time $\bigOh(n^{3-\epsilon})$. 

A related problem is the \emph{OuMv-problem} where the
algorithm, again, is given a Boolean $n\times n$ matrix $M$ and is
allowed to do some preprocessing. Afterwards, the algorithm receives a
sequence of pairs of $n$-dimensional Boolean vectors
$\vec{u}^{\,t},\vec{v}^{\,t}$ for each $t\in[n]$, and the task is to
compute $(\vec{u}^{\,t})\trans M \vec{v}^{\,t}$ before accessing
$\vec{u}^{\,t+1},\vec{v}^{\,t+1}$.
The \emph{OuMv-conjecture} states that there is no $\epsilon>0$ such
that the OuMv-problem can be solved in time
$\bigOh(n^{3-\epsilon})$. 
It was shown in \cite{Henzinger.2015} that the OuMv-conjecture is
equivalent to the OMv-conjecture, i.e., 
the OuMv-conjecture fails if, and only if, the OMv-conjecture fails.

\section{Conjunctive queries}\label{section:CQs}

This section's aim is twofold: Firstly, we observe that the notions and
results of \cite{BKS_enumeration_PODS17} generalise to  
CQs with constants in a straightforward way.
Secondly, we identify a new subclass of CQs which precisely
characterises the CQs for which \emph{testing} can be done efficiently
under updates.

The definition of \emph{q-hierarchical} CQs can be taken verbatim from
\cite{BKS_enumeration_PODS17}:
\begin{definition}\label{def:qhierarchical}
A CQ $q$ is \emph{q-hierarchical} if for any two variables
$x,y\in\Vars(q)$ we have
\begin{enumerate}[(i)]
\item\label{eq:q-hier-cond-i}
 $\atoms(x)\subseteq\atoms(y)$ 
 \ or \
 $\atoms(y)\subseteq\atoms(x)$
 \ or \ 
 $\atoms(x)\cap\atoms(y)=\emptyset$,
 \ and
\item\label{eq:q-hier-cond-ii}
 if $\atoms(x)\varsubsetneq\atoms(y)$ and $x\in\free(q)$, then $y\in\free(q)$.
\end{enumerate}
\end{definition}

Obviously, it can be checked in time $\poly(q)$ whether a given CQ $q$
is q-hierarchical.
It is straightforward to see that if a CQ is q-hierarchical, then so
is its homomorphic core.
Using the main results of \cite{BKS_enumeration_PODS17}, it is not
difficult to show the following.

\begin{theorem}\label{thm:CQs}\label{thm:CQupper}\label{thm:CQlower}
\begin{enumerate}[(a)]
\item\label{item:thm:CQs:upper}
There is a dynamic algorithm that receives a q-hierarchical $k$-ary CQ $q$ and a $\schema$-db $\DBstart$, and computes
within $\preprocessingtime =\poly({q})\cdot\bigOh(\size{\DBstart})$ preprocessing time a data structure that can be
updated in time $\updatetime = \poly({q})$ and allows to
\begin{enumerate}[(i)]
\item\label{item:thm:CQs:count}
 compute the cardinality $|q(\DB)|$ in time $\countingtime =\bigOh(1)$,
\item\label{item:thm:CQs:enumerate} 
 enumerate $q(\DB)$ with delay $\delaytime = \poly({q})$,
\item\label{item:thm:CQs:test}
 test for an input tuple $\ov{a}\in\Dom^k$ if $\ov{a}\in q(\DB)$ within time
 $\testingtime = \poly({q})$,
\item\label{item:thm:CQs:next}
 and when given a tuple $\ov{a}\in q(\DB)$, the tuple $\ov{a}'$ (or
 the message $\EOE$) that the enumeration procedure of  
 \eqref{item:thm:CQs:enumerate} would output directly after
 having output $\ov{a}$, can be computed within time 
 $\poly({q})$.
\end{enumerate}
\item\label{item:thm:CQs:lower}
Let $\epsilon>0$ and let $q$ be a CQ whose homomorphic core is not
q-hierarchical 
(note that this is the case if, and only if, $q$ is not equivalent to a q-hierarchical CQ).
\begin{enumerate}[(i)]
\item\label{item:thm:CQlower:answering}
  If $q$ is Boolean, then there is no dynamic algorithm  with arbitrary
  preprocessing time and $\updatetime=\bigoh(\actdomsize^{1-\smalleps})$
  update time that
  answers 
  $\eval{q}{\DB}$ in time
$\answertime=\bigoh(\actdomsize^{2-\smalleps})$, unless the \OMvcon
  fails.
\item\label{item:thm:CQlower:counting}
  There is no dynamic algorithm  with arbitrary
  preprocessing time and $\updatetime=\bigoh(\actdomsize^{1-\smalleps})$
  update time that
  computes the cardinality $|q(\DB)|$ in time
$\countingtime=\bigoh(\actdomsize^{1-\smalleps})$, unless the \OMvcon
or the \OVcon fails.
\item\label{item:thm:CQs:enumeration} 
  If $q$ is self-join free, then there is no dynamic algorithm  with 
  arbitrary
  preprocessing time and  $\updatetime=\bigoh(\actdomsize^{1-\smalleps})$
  update time that
  enumerates
  $\eval{q}{\DB}$
   with delay $\delaytime = \bigoh(\actdomsize^{1-\smalleps})$, unless the \OMvcon
  fails.
\end{enumerate}
All lower bounds remain true, if we restrict ourselves to the class of
databases that map homomorphically into $q$.
\end{enumerate}
\end{theorem}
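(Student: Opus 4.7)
The plan is to derive both parts from the corresponding constant-free results of \cite{BKS_enumeration_PODS17} via a syntactic reduction that absorbs constants into the database schema.

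For the upper bound (a), the key step is to convert the given q-hierarchical CQ $q$ into an equivalent constant-free q-hierarchical CQ $\tilde q$ over a fresh schema $\tilde\schema$, together with a $\tilde\schema$-db $\tilde\DB$ of size $\bigOh(\size{\DB})$ that is maintainable in constant time per update of $\DB$. Concretely, for each atom $\psi = R w_1 \cdots w_r$ of $q$, partition $[r]$ into the set $I_\psi$ of positions carrying constants and $J_\psi = [r] \setminus I_\psi$ of positions carrying variables, introduce a fresh relation symbol $R_\psi$ of arity $|J_\psi|$, and set $R_\psi^{\tilde\DB} \deff \{\proj_{J_\psi}(t) : t \in R^\DB \text{ and } \proj_i(t) = w_i \text{ for all } i \in I_\psi\}$. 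Replace $\psi$ in $q$ by the constant-free atom whose variables are the $w_i$ with $i \in J_\psi$. The resulting query $\tilde q$ satisfies $\tilde q(\tilde\DB) = q(\DB)$ and inherits q-hierarchicality because for each $x \in \Vars(q)$ the set $\atoms_{\tilde q}(x)$ is in bijection with $\atoms_q(x)$. Each \Insert\ or \Delete\ on $\DB$ triggers at most $\size{q}$ updates on $\tilde\DB$, each in $\bigOh(1)$ time, and feeding $(\tilde q, \tilde\DB)$ to the dynamic algorithm from \cite{BKS_enumeration_PODS17} then yields \eqref{item:thm:CQs:count}--\eqref{item:thm:CQs:next}.

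For the lower bound (b), I would first observe that since $q \equiv \text{core}(q)$, any dynamic algorithm for $q$ solves the same task for $\text{core}(q)$ within identical bounds, so we may assume that $q$ itself is a non-q-hierarchical homomorphic core. The plan is then to lift the \OMv and \OV reductions of \cite{BKS_enumeration_PODS17} to constants: the hard instances constructed there are parametrised by a pair of variables $x, y$ witnessing the violation of Definition~\ref{def:qhierarchical} and fill the relational atoms by plugging OMv/OV inputs into the coordinates of $x$ and $y$ while assigning dummy values elsewhere. In the present setting one uses exactly the dictated constant on each constant coordinate of each atom; this preserves the reduction and ensures, moreover, that the constructed databases are sub-databases of the canonical database of $q$, so that they map homomorphically into $q$, giving the promised strengthening. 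An equivalent route is to apply the schema transformation of part (a) in reverse: an efficient algorithm for $q$ on the restricted class of databases induces an efficient algorithm for the constant-free query $\tilde q$ on arbitrary $\tilde\schema$-dbs, contradicting \cite{BKS_enumeration_PODS17}; the self-join freeness needed in case~\eqref{item:thm:CQs:enumeration} transfers to $\tilde q$ because distinct atoms of $q$ yield distinct symbols $R_\psi$ of $\tilde\schema$.

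The main obstacle is the combinatorial bookkeeping verifying that q-hierarchicality, self-join freeness, and being a homomorphic core all transfer cleanly in both directions of the translation between $q$ and $\tilde q$, in particular when the same constant occurs in several atoms or when a constant coincides with a free-variable target. Once that lemma is established, both directions of the theorem reduce immediately to the corresponding results of \cite{BKS_enumeration_PODS17}.
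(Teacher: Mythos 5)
Your proposal matches the paper's own proof essentially step for step: the upper bound is obtained by introducing one fresh relation symbol $R_\psi$ per atom whose interpretation pre-filters the constant positions (and repeated-variable equalities) of $R^\DB$, reducing to the constant-free algorithm of \cite{BKS_enumeration_PODS17}, and the lower bounds are obtained by observing that the OMv/OV reductions of \cite{BKS_enumeration_PODS17} carry over when constants are mapped to themselves. The only cosmetic differences are that the paper projects each new atom onto the \emph{distinct} variables of $\psi$ and explicitly handles constants occurring in the output tuple via the correspondence $q(\DB)=\setc{(\ov{a},\ov{b})}{\ov{a}\in\hat{q}(\hat{\DB})}$, details your sketch glosses over but that pose no difficulty.
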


\begin{proof}
From \cite{BKS_enumeration_PODS17} we already know that the theorem's
statements \eqref{item:thm:CQs:count} and
\eqref{item:thm:CQs:enumerate} 
and
\eqref{item:thm:CQlower:answering}--\eqref{item:thm:CQs:enumeration}
are true for all CQs $q$ with $\Cons(q)=\emptyset$,
and a close look at the dynamic algorithm provided in
\cite{BKS_enumeration_PODS17} shows that also the statements 
\eqref{item:thm:CQs:test} and \eqref{item:thm:CQs:next} are true for
all CQs $q$ with $\Cons(q)=\emptyset$.
Furthermore, a close inspection of the proofs provided in
\cite{BKS_enumeration_PODS17} for the statements 
\eqref{item:thm:CQlower:answering}--\eqref{item:thm:CQs:enumeration}
for constant-free CQs $q$ shows that with only very minor
modifications these proofs carry over to the case of CQs $q$ with
$\Cons(q)\neq \emptyset$.

All that remains to be done is to transfer the results 
\eqref{item:thm:CQs:count}--\eqref{item:thm:CQs:next}
from constant-free CQs to CQs $q$ with $\Cons(q)\neq \emptyset$.
To establish this, 
let us consider an arbitrary CQ $q$ of schema $\schema$ with
$\Cons(q)\neq\emptyset$.
Without loss of generality we can assume that 
\begin{equation}
 q \ = \ \ 
 \{ \ (x_1,\ldots,x_k, b_1, \ldots, b_\ell) \ : \ \phi \ \}
\end{equation}
where $\phi$ is a conjunctive formula of schema $\schema$,  
$\free(\phi) = \set{x_1,\ldots,x_k}$, and
$b_1,\ldots,b_\ell~\in~\Dom$. Let $\ov{x}\deff (x_1,\ldots,x_k)$ and
$\ov{b}\deff (b_1,\ldots,b_\ell)$.

In the following, we construct a new schema $\hat{\schema}$ (that
depends on $q$) and a
constant-free CQ $\hat{q}$ of schema $\hat{\schema}$ and of size
$\poly(\size{q})$ such that the
following is true:
\begin{enumerate}
\item
$\hat{q}$ is q-hierarchical \ $\iff$ \ $q$ is q-hierarchical. 
\item
A dynamic algorithm for evaluating $\hat{q}$ on
$\hat{\schema}$-dbs with initialisation time 
$\inittimehat$, update time $\updatetimehat$, counting time
$\countingtimehat$ (delay $\delaytimehat$, testing time $\testingtimehat$)
can be used to obtain a dynamic algorithm for
evaluating $q$ on $\schema$-dbs with 
initialisation time 
$\inittimehat$, update time $\updatetimehat{\cdot}\poly(\size{q})$, counting time
$\countingtimehat$ (delay $\bigOh(\delaytimehat)+\poly(\size{q})$,
testing time $\bigOh(\testingtimehat)+\poly(\size{q})$). 
\end{enumerate}

For each atom $\qatom$ of $q$ we introduce a new relation symbol
$R_\qatom$ of arity $|\Vars(\qatom)|$, and we let
$\hat{\schema}\deff \set{R_{\qatom}\ :\ \qatom $ is an atom in $\phi}$.
For each atom $\qatom$ of $q$ let us fix a tuple
$\tupleVariables^{\qatom}=(v_1,\ldots,v_m)$ of pairwise
distinct variables such that  $\Vars(\qatom)= \set{v_1,\ldots,v_m}$.
The CQ $\hat{q}$ is defined as 
\[
 \hat{q} \ \deff \ \  \{ \ (x_1,\ldots,x_k) \ : \ \hat{\phi} \ \}\,,
\]
where the conjunctive formula $\hat{\phi}$ is obtained from $\phi$ by
replacing every atom $\qatom$ with the atom $R_\qatom(\tupleVariables^\qatom)$. 
Obviously, $\hat{q}$ is a CQ of schema $\hat{\schema}$, 
$\Cons(\hat{q})=\emptyset$, $\free(\hat{q})=\free(q)$, and
$\Vars(\hat{q})=\Vars(q)$.
Furthermore, for every variable $y\in\Vars(q)$ we
have
$\atoms_{\hat{q}}(y) = \setc{R_\qatom}{\qatom\in\atoms_q(y)}$ (and,
equivalently,
$\atoms_{q}(y) = \setc{\qatom}{R_\qatom\in\atoms_{\hat{q}}(y)}$).
Therefore, $\hat{q}$ is q-hierarchical if and only if $q$ is q-hierarchical.

With every $\schema$-db $\DB$ we associate a $\hat{\schema}$-db
$\hat{\DB}$ as follows:
Consider an atom $\qatom$ of $q$ and let $\qatom$ be of the form
$Sw_1\cdots w_s$. Thus, $\set{w_1,\ldots,w_s}\cap\Var=\Vars(\qatom)
=\set{v_1,\ldots,v_m}$ for
$(v_1,\ldots,v_m)\deff\tupleVariables^{\qatom}$. Then, the relation symbol
$R_\psi$ is interpreted in $\hat{\DB}$ by the relation
\[
  (R_\qatom)^{\hat\DB} \ \isdef \ \ 
  \setc{\;
    \big(\valuation(v_1),\ldots,\valuation(v_{m})\big)}{
    \text{$\beta$ is a valuation with } \big(\valuation(w_1),\ldots,\valuation(w_s)\big)\in S^\DB
  \;}\,.
\]
It is straightforward to verify that for every $\schema$-db $\DB$ we
have
\begin{equation}\label {eq:thm:CQs:proof-upper}
  q(\DB) \quad = \quad
  \setc{\;(\ov{a},\ov{b})\;}{\;\ov{a}\ \in\ \hat{q}(\hat{\DB})\;}\,.
\end{equation}

Now, assume we have available a 
dynamic algorithm $\mathcal{A}$ for evaluating $\hat{q}$ on
$\hat{\schema}$-dbs with preprocessing time 
$\preprocessingtimehat$, update time $\updatetimehat$, counting time
$\countingtimehat$ (delay $\delaytimehat$, testing time
$\testingtimehat$).
We can use this algorithm to obtain a dynamic algorithm
$\mathcal{B}$ for
evaluating $q$ on $\schema$-dbs as follows.

The $\INIT$ routine of $\mathcal{B}$ performs the 
$\INIT$ routine of $\mathcal{A}$.
The $\UPDATE$ routine of $\mathcal{B}$ proceeds as follows.
Upon input of an update command of the form 
$\Update\;S(c_1,\ldots,c_s)$ for some $S\in\schema$,
we consider all atoms $\qatom$ of $q$ of the form
$Sw_1\cdots w_s$.
For each such atom we check if
\begin{itemize}
\item
for all $i\in[s]$ with $w_i\in\Dom$ we have $w_i=c_i$, and
\item
for all $i,j\in[s]$ with $w_i=w_j$ we have $c_i=c_j$.
\end{itemize}
If this is true, we carry out the $\UPDATE$ routine of $\mathcal{A}$ for
the command $\Update\,R_\qatom(c_{j_1},\ldots,c_{j_m})$, where
$(w_{j_1},\ldots,w_{j_m})=(v_1,\ldots,v_m)= \tupleVariables^{\qatom}$.
Thus, one call of the $\UPDATE$ routine of $\mathcal{A}$ performs
$\poly(\size{q})$ calls of the $\UPDATE$ routine of
$\mathcal{B}$. This takes time $\updatetimehat{\cdot}\poly(\size{q})$
and ensures that afterwards, the data structure of $\mathcal{B}$ has
stored the information concerning the $\hat{\schema}$-db $\hat{\DB}$
associated with the updated $\schema$-db $\DB$.

The $\COUNT$ routine of $\mathcal{B}$ simply calls the $\COUNT$
routine of $\mathcal{A}$, and we know that the result is correct since
$|q(\DB)|=|\hat{q}(\hat{\DB})|$ due to \eqref{eq:thm:CQs:proof-upper}.
For the same reason, the $\ENUMERATE$ routine of $\mathcal{B}$ can
call the $\ENUMERATE$ routine of $\mathcal{A}$ and output the tuple
$(\ov{a},\ov{b})$ for each output tuple $\ov{a}$ of $\mathcal{A}$.
The $\TEST$ routine of $\mathcal{B}$ upon input of a tuple
$(c_1,\ldots,c_{k+\ell})\in\Dom^{k+\ell}$ outputs $\Yes$ if
$(c_{k+1},\ldots,c_{k+\ell})=\ov{b}$ and the $\TEST$ routine of
$\mathcal{A}$ returns $\Yes$ upon input of the tuple $(c_1,\ldots,c_k)$.
For statement  \eqref{item:thm:CQs:next} of Theorem~\ref{thm:CQs},
when given a tuple $(\ov{a},\ov{b})\in q(\DB)$ we know that 
$\ov{a}\in\hat{q}(\hat{\DB})$. Thus, we can use $\ov{a}$ and call the according routine of 
$\mathcal{A}$ for \eqref{item:thm:CQs:next} and obtain a tuple
$\ov{a}'\in \hat{q}(\hat{\DB})$ (or the message $\EOE$) and know that $(\ov{a}',\ov{b})$ is
the next tuple that the $\ENUMERATE$ routine of $\mathcal{B}$ will output
after having output the tuple $(\ov{a},\ov{b})$ (or that there is no
such tuple).

Note that this suffices to transfer the statements 
\eqref{item:thm:CQs:count}--\eqref{item:thm:CQs:next} from a
q-hierarchical CQ $\hat{q}$ with $\Cons(\hat{q})=\emptyset$ to 
the q-hierarchical CQ $q$ with $\Cons(q)\neq\emptyset$.
This completes the proof of Theorem~\ref{thm:CQs}.
\end{proof}

Note that neither the results of \cite{BKS_enumeration_PODS17} nor 
Theorem~\ref{thm:CQupper} provide a precise
characterisation of the CQs for which \emph{testing} can be done efficiently
under updates.
Of course, according to Theorem~\ref{thm:CQupper}\,\eqref{item:thm:CQs:test}, the testing
problem can be solved with constant update time and constant testing
time for every
\emph{q-hierarchical} CQ. But the same holds true, for example, for
the non-q-hierarchical CQ 
\ $
 \pSET  \deff  \{\,(x,y) \, : \, Sx\uund Exy\uund Ty\,\}\,
 $.
The according dynamic algorithm simply uses 1-dimensional arrays
$\arrayfont{A}_S$ and
$\arrayfont{A}_T$ and a 2-dimensional array $\arrayfont{A}_E$ and that
for all $a,b\in\Dom$ we have
$\arrayfont{A}_S[a]=1$ if $a\in S^\DB$, and $\arrayfont{A}_S[a]=0$ otherwise,
$\arrayfont{A}_T[a]=1$ if $a\in T^\DB$, and $\arrayfont{A}_T[a]=0$
   otherwise, and
$\arrayfont{A}_E[a,b]=1$ if $(a,b)\in E^\DB$, and $\arrayfont{A}_E[a,b]=0$ otherwise.
When given an update command, the arrays can be updated within
constant time. And when given a tuple $(a,b)\in\Dom^2$, the $\TEST$
routine simply looks up the array entries $\arrayfont{A}_S[a]$,
$\arrayfont{A}_E[a,b]$, $\arrayfont{A}_T[b]$ and returns the correct
query result accordingly.
To characterise the conjunctive queries for which testing can be done
efficiently under updates, we introduce the following notion of
\emph{t-hierarchical} CQs.

\begin{definition}\label{def:thierarchical}
A CQ $q$ is \emph{t-hierarchical} if the following is satisfied:
\begin{enumerate}[(i)]
\item\label{eq:t-hier-cond-i}
 for all $x,y\in\Vars(q)\setminus\free(q)$, we have
 
 $\atoms(x)\subseteq\atoms(y)$ 
 \ or \
 $\atoms(y)\subseteq\atoms(x)$
 \ or \ 
 $\atoms(x)\cap\atoms(y)=\emptyset$,
 \ and
\item\label{eq:t-hier-cond-ii}
 for all $x\in\free(q)$ and all $y\in\Vars(q)\setminus\free(q)$, we
 have

 $\atoms(x)\cap\atoms(y)=\emptyset$ \ or \
 $\atoms(y)\subseteq \atoms(x)$.
\end{enumerate}
\end{definition}
 
Obviously, it can be checked in time $\poly(q)$ whether a given CQ $q$
is t-hierarchical.
Note that every q-hierarchical CQ is t-hierarchical, and
a \emph{Boolean} query is t-hierarchical if and only if it is q-hierarchical.
The queries $\pSET$ and 
\ $
  p_{\textit{E-E-R}} \deff 
  \{\, 
  (x,y) \, : \, 
  \exists v_1\exists v_2\exists v_3 \, (\,
    Exv_1 \uund Eyv_2 \uund Rxyv_3
   \,)
  \, \}
$ \ 
are examples for queries that are t-hierarchical but not
q-hierarchical. 
It is straightforward to verify that if a CQ is t-hierarchical, then so
is its homomorphic core.
This section's main result shows that the
\emph{t-hierarchical} CQs 
precisely characterise the CQs for which the \emph{testing} problem
can be solved efficiently under updates:

\begin{theorem}\label{thm:testing}
\begin{enumerate}[(a)]
\item\label{item:thm:testing:upperbound}
There is a dynamic algorithm that receives a t-hierarchical 
$k$-ary CQ $q$ and a $\schema$-db $\DBstart$, and computes within 
$\preprocessingtime =\poly({q})\cdot\bigOh(\size{\DBstart})$ preprocessing time a data structure that can be
updated in time $\updatetime = \poly({q})$ and allows to 
test for an input tuple $\ov{a}\in\Dom^k$ if $\ov{a}\in q(\DB)$ within
time $\testingtime=\poly({q})$.

\item\label{item:thm:testing:lowerbound}
Let $\epsilon >0$ and let $q$ be a $k$-ary CQ whose homomorphic core
is not t-hierarchical (note that this is the case if, and only
if, $q$ is not equivalent to a t-hierarchical CQ).
There is no dynamic algorithm with arbitrary preprocessing time and $\updatetime =\bigOh(n^{1-\epsilon})$
update time that can test for any input tuple $\ov{a}\in\Dom^k$ if
$\ov{a}\in q(\DB)$ within testing time
$\testingtime=\bigOh(n^{1-\epsilon})$, unless the OMv-conjecture
fails.
The lower bound remains true, if we restrict ourselves to the class of
databases that map homomorphically into $q$.
\end{enumerate}
\end{theorem}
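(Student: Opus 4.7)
The upper bound in~(a) I would prove by a structural decomposition of $q$ that isolates a q-hierarchical sub-problem within each piece. My first observation is that t-hierarchicality \emph{confines} each existential variable $y$: either $\atoms(y)$ contains no atom with a free variable, or, by condition~(ii), for any free $x$ with $\atoms(x)\cap\atoms(y)\neq\emptyset$ we have $\atoms(y)\subseteq\atoms(x)$, which forces every atom containing $y$ to also contain $x$; in particular, $y$ cannot appear in both a ``free atom'' (one containing at least one free variable) and a ``Boolean atom''. I would then partition the atoms of $q$ into a Boolean part and into groups $G_1,\dots,G_m$, where two free atoms belong to the same group if they transitively share an existential. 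Using condition~(ii) together with the confinement property, any two atoms in the same group carry exactly the same set of free variables, which I call $F_{G_i}$. I then form the CQ $q_{G_i}$ with free variables $F_{G_i}$, existentials those of $G_i$, and body the conjunction of the atoms of $G_i$. Because every variable in $F_{G_i}$ appears in every atom of $G_i$, the nontrivial instances of q-hierarchical condition~(i) for $q_{G_i}$ reduce to pairs of existentials, which is inherited from t-hierarchical~(i); condition~(ii) holds vacuously. The Boolean part is analogously a q-hierarchical Boolean CQ. Applying Theorem~\ref{thm:CQs}\,\eqref{item:thm:CQs:upper} I maintain a q-hierarchical data structure for each $q_{G_i}$ and for the Boolean part, and a $\TEST$ call on $\ov{a}$ reduces to one $\TEST$ per group on $\ov{a}$ projected to $F_{G_i}$ together with a check of the Boolean part.

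For the lower bound in~(b) I may replace $q$ by its homomorphic core and therefore assume that $q$ itself is non-t-hierarchical. My plan is to reduce OMv (or equivalently OuMv) to a sequence of updates and $\TEST$ calls on an instance of~$q$. The canonical base case is the ``projection'' query $q_{\text{E-S}}(x)=\exists y(Exy\uund Sy)$, which violates condition~(ii) and directly captures OMv: encode $M$ in $E^\DB$ during preprocessing, then in each round $t$ insert $\{j:v^t_j=1\}$ into $S^\DB$, perform $n$ $\TEST$ calls to read off the entries of $M\vec v^{\,t}$, and delete the tuples again; sublinear $\updatetime$ and $\testingtime$ would give an $\bigOh(\actdomsize^{3-\smalleps})$ OMv algorithm.

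For a general non-t-hierarchical core $q$ I split into the two possible violations of Definition~\ref{def:thierarchical}. If condition~(ii) fails, I fix a witness $(x,y,\psi_1,\psi_2)$ with $x$ free, $y$ existential, $\psi_1\in\atoms(x)\cap\atoms(y)$, $\psi_2\in\atoms(y)\setminus\atoms(x)$, and mimic the base case: encode the matrix on the relation of $\psi_1$ via the positions of $x$ and $y$, the vector on the relation of $\psi_2$ via the position of $y$, and vary $x$ in the test tuple. If condition~(i) fails between two existentials $y,z$, I fix $\psi_1\in\atoms(y)\cap\atoms(z)$, $\psi_2\in\atoms(y)\setminus\atoms(z)$, $\psi_3\in\atoms(z)\setminus\atoms(y)$ and reduce from OuMv, encoding $M$ on $\psi_1$ via the $(y,z)$-positions and the two vectors on $\psi_2$ and $\psi_3$; a single $\TEST$ per round then suffices to read $(\vecu^{\,t})\trans M \vecv^{\,t}$. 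The main obstacle, shared by both cases, is to neutralise the \emph{other} atoms of~$q$: I would handle this by preprocessing, once and for all, a fixed padding instance of the remaining relations obtained from the canonical database of~$q$, which maps homomorphically into~$q$ by construction and so also yields the lower bound in the restricted class mentioned at the end of the statement. In the presence of self-joins a single relation symbol may be shared between the encoding atoms and the padded atoms; the padding must then be designed carefully so that the dynamic updates encoding~$M$, $\vecu^{\,t}$ and $\vecv^{\,t}$ do not create spurious matches for the padded atoms, which is essentially the same technical issue handled in the analogous hardness proofs of~\cite{BKS_enumeration_PODS17}. Once this is overcome, sublinear $\updatetime$ and $\testingtime$ deliver an $\bigOh(\actdomsize^{3-\smalleps})$ algorithm for OMv/OuMv, contradicting the OMv-conjecture.
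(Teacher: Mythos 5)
Your proposal follows essentially the same route as the paper: for part~\eqref{item:thm:testing:upperbound} you decompose a t-hierarchical CQ into a conjunction of q-hierarchical and quantifier-free pieces with pairwise disjoint existential variables (the paper's Lemma~\ref{lemma:t-hierCQ-Decomp} groups atoms by their exact set of free variables while you group by transitive sharing of existentials, but the resulting pieces, the use of Theorem~\ref{thm:CQs}\,\eqref{item:thm:CQs:upper}, and the array lookups for the quantifier-free part are the same), and for part~\eqref{item:thm:testing:lowerbound} you run the same two OuMv reductions keyed to which condition of Definition~\ref{def:thierarchical} fails, with the core property ruling out spurious homomorphisms exactly as in the paper's Appendix. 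The only point to sharpen is the padding: it is not a single copy of the canonical database but the union over all $\indi,\indj\in[\dimn]$ of the images $\iotasubij(\sgpsi)$ of the remaining atoms (so that a non-witness atom that happens to contain $\varx$ or $\vary$ is instantiated at every $\verta_\indi,\vertb_\indj$); with that reading your sketch matches the paper's argument.
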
 

\begin{proof}
To avoid notational clutter, and without loss of generality, we
restrict attention
to queries $q_\phi(u_1,\ldots,u_k)$ where
$(u_1,\ldots,u_k)$ is of the form $(z_1,\ldots,z_k)$ for pairwise
distinct variables $z_1,\ldots,z_k$.

For the proof of \eqref{item:thm:testing:upperbound}, we combine the
array construction described above for the example query $\pSET$ with
the dynamic algorithm provided by
Theorem~\ref{thm:CQupper}\,\eqref{item:thm:CQs:upper} 
and the
following Lemma~\ref{lemma:t-hierCQ-Decomp}.
To formulate the lemma, we need the following notation.
A $k$-ary \emph{generalised CQ} is of the form
\ $
  \{\,
  (z_1,\ldots,z_k) \, : \, 
  \phi_1 \uund \cdots \uund \phi_m
  \, \} 
$ \
where $k\geq 0$, $z_1,\ldots,z_k$ are pairwise distinct variables, $m\geq 1$, $\phi_j$ is a
conjunctive formula for each $j\in[m]$, 
$\free(\phi_1)\cup\cdots\cup\free(\phi_m) =
\set{z_1,\ldots,z_k}$, and the quantified
variables of $\phi_j$ and $\phi_{j'}$ are pairwise disjoint for all
$j,j'\in [m]$ with $j\neq j'$ and disjoint from
$\set{z_1,\ldots,z_k}$.
For each $j\in[m]$ let $\ov{z}^{(j)}$ be the
sublist of $\ov{z}\deff (z_1,\ldots,z_k)$ that only contains the variables in $\free(\phi_j)$.
I.e., $\ov{z}^{(j)}$ is obtained from $\ov{z}$ by deleting all variables
that do not belong to $\free(\phi_j)$. Accordingly, for a tuple
$\ov{a}=(a_1,\ldots,a_k)\in\Dom^k$ by $\ov{a}^{(j)}$ we denote the tuple
that contains exactly those $a_i$ where $z_i$ belongs to $\ov{z}^{(j)}$.
The query result of $q$ on a $\schema$-db $\DB$ is the set 
\[
  q(\DB) \ \deff \ \ 
  \setc{\ \ov{a}\in\Dom^k}{\DB\models\phi_j[\ov{a}^{(j)}] \text{ \ for
      each $j\in[m]$} \ }\,,
\]
where $\DB\models\phi_j[\ov{a}^{(j)}]$ means that
there is a homomorphism
$\beta_j:q_j\to\DB$ for the query
$q_j\deff\{\,\ov{z}^{(j)}\;:\;\phi_j\,\}$, with $\beta_j(z_i)=a_i$
for every $i$ with $z_i\in\free(\phi_j)$. 
For example, 
\ $
  p'_{\textit{E-E-R}} \deff  
  \{\; 
  (x,y) \; : \; 
  \exists v_1\, Exv_1
  \ \und \
  \exists v_2\, Eyv_2
  \ \und \
  \exists v_3\, Rxyv_3
  \; \}
$ \
is a generalised CQ that is equivalent to the CQ $p_{\textit{E-E-R}}$.

\begin{lemma}\label{lemma:t-hierCQ-Decomp}
Every t-hierarchical CQ $q_\phi(z_1,\ldots,z_k)$ is equivalent to a
generalised CQ 
\ $
  q'  =  \{\,
  (z_1,\ldots,z_k) \, : \,
  \phi_1 \uund \cdots \uund \phi_m
  \, \} 
$ \
such that for each $j\in[m]$ the CQ
\ $
  q_j \deff \{\;
   \ov{z}^{(j)} \, : \, \phi_j
  \; \}$ \
is q-hierarchical or quantifier-free.
Furthermore, there is an algorithm which decides in time
$\poly({q_\phi})$ whether $q_\phi$ is \mbox{t-hierarchical}, and if so,
outputs an according $q'$.
\end{lemma}
\begin{proof}
Along Definition~\ref{def:thierarchical} it is straightforward to
construct an algorithm which decides in time $\poly({q})$ whether
a given CQ $q$ is t-hierarchical.

Let $q\deff q_\phi(z_1,\ldots,z_k)$ be a given t-hierarchical CQ.
Let $A_0$ be the set of all atoms $\psi$ of $q$ with
$\vars(\psi)\subseteq\free(q)$, and let $\phi_0$ be the
quantifier-free conjunctive formula
\[
  \phi_0 \ \deff \ \ \Und_{\psi\in A_0} \psi\,.
\]
For each $Z\subseteq \free(q)$ let $A_Z$ be the set of all atoms
$\psi$ of $q$ such that $\vars(\psi)\varsupsetneq
\vars(\psi)\cap\free(q) = Z$.
Let $Z_1,\ldots,Z_n$ (for $n\geq 0$) be a list of all those
$Z\subseteq \free(q)$ with $A_Z\neq \emptyset$.
For each $j\in[n]$ let $A_j\deff A_{Z_j}$ and let 
$Y_j\deff \big( \bigcup_{\psi\in A_j} \vars(\psi) \big)\setminus Z_j$.

\smallskip

\begin{claim}\label{claim1:lemma:t-hierCQ-Decomp}
$Y_j\cap Y_{j'}=\emptyset$ for all $j,j'\in[n]$ with $j\neq j'$.
\end{claim}
\begin{proof}
We know that $Z_j\neq Z_{j'}$. W.l.o.g.\ there is a $z\in Z_j$ with
$z\not\in Z_{j'}$.

For contradiction, assume that $Y_j\cap Y_{j'}$ contains some variable
$y$.
Then, $y\in\vars(\psi)$ for some $\psi\in A_{j}$ and
$y\in\vars(\psi')$ for some $\psi'\in A_{j'}$.
By definition of $A_j$ we know that $\vars(\psi)\cap\free(q)=Z_j$, and
hence $z\in\vars(\psi)$.
By definition of $A_{j'}$ we know that $\vars(\psi')\cap\free(q)=Z_{j'}$, and
hence $z\not\in\vars(\psi')$.
Hence, $\psi\in\atoms(z)$ and $\psi'\not\in\atoms(z)$.
Since $\psi\in\atoms(y)$ and $\psi'\in\atoms(y)$, we obtain that
$\atoms(z)\cap\atoms(y)\neq\emptyset$ and
$\atoms(y)\not\subseteq\atoms(z)$.
But by assumption, $q$ is t-hierarchical, and this contradicts
condition~\eqref{eq:t-hier-cond-ii} of 
Definition~\ref{def:thierarchical}.
\end{proof}
For each $j\in[n]$ consider the
conjunctive formula
\[
 \phi_j \ \deff \ \ 
 \exists y_1^{(j)}\cdots\exists y_{\ell_j}^{(j)}\ \Und_{\psi\in A_{j}} \psi\,,
\]
where
$\ell_j\deff |Y_j|$ and
$(y_1^{(j)},\ldots,y_{\ell_j}^{(j)})$ is a list of all variables in $Y_j$.
Using Claim~\ref{claim1:lemma:t-hierCQ-Decomp}, it is straightforward
to see that
\[
  q' \ \deff \ \ 
  \{\
    (z_1,\ldots,z_k) \ : \ 
    \phi_0 \uund \Und_{j\in[n]}\phi_j
  \ \}
\]
is a generalised CQ that is equivalent to $q$.
Furthermore, $q'$ can be constructed in time $\poly({q})$.
To complete the proof of Lemma~\ref{lemma:t-hierCQ-Decomp} we
consider for 
each $j\in[n]$ the CQ
\[
  q_j \ \deff \ \
  \{\ 
   \ov{z}^{(j)} \ : \ \phi_j
  \ \}\,,
\] 
where $\ov{z}^{(j)}$ is a tuple of length $|Z_j|$
consisting of all the variables in $Z_j$.

\smallskip

\begin{claim}\label{claim2:lemma:t-hierCQ-Decomp}
$q_j$ is q-hierarchical, for each $j\in[n]$.
\end{claim}

\begin{proof}
First of all, note that $q_j$ satisfies condition~\eqref{eq:q-hier-cond-ii} of
Definition~\ref{def:qhierarchical}, since
$\free(q_j)=Z_j$, $\atoms_{q_j}(z)=A_j$ for every $z\in Z_j$, and
$\atoms_{q_j}(y)\subseteq A_j$ for every $y\in Y_j=\vars(q_j)\setminus\free(q_j)$.

For contradiction, assume that $q_j$ is not q-hierarchical.
Then, $q_j$ violates condition~\eqref{eq:q-hier-cond-i} of
Definition~\ref{def:qhierarchical}. I.e.,
there are variables $x,x'\in Z_j\cup Y_j$ and atoms $\psi_1,
\psi_2, \psi_3 \in A_j$ such that
$\vars(\psi_1)\cap \set{x,x'}=\set{x}$,
$\vars(\psi_2)\cap\set{x,x'}=\set{x'}$, and
$\vars(\psi_3)\cap\set{x,x'}=\set{x,x'}$.
Since $\vars(\psi)\cap\free(q)=Z_j$ for all $\psi\in A_j$, we
know that $x,x'\not\in \free(q)$. Therefore,
$x,x'\in\vars(q)\setminus\free(q)$, and hence $\psi_1$, $\psi_2$,
$\psi_3$ are atoms of $q$ which witness that condition~\eqref{eq:t-hier-cond-i} of 
Definition~\ref{def:thierarchical} is violated. This contradicts the
assumption that $q$ is t-hierarchical.
\end{proof}
\noindent
This completes the proof of Lemma~\ref{lemma:t-hierCQ-Decomp}.
\end{proof}

\noindent
The proof of
Theorem~\ref{thm:testing}\,\eqref{item:thm:testing:upperbound} now
follows easily: When given a t-hierarchical CQ
$q_\phi(z_1,\ldots,z_k)$, use the algorithm provided by
Lemma~\ref{lemma:t-hierCQ-Decomp} to compute an equivalent generalised CQ $q'$ of
the form $\{ (z_1,\ldots,z_k) \ : \  \phi_1\und\cdots\und \phi_m\}$
and let $q_j\deff \{\ov{z}^{(j)} \ : \ \phi_j \}$ for
each $j\in[m]$.
W.l.o.g.\ assume that there is an $m'\in\set{0,\ldots,m}$ such that 
$q_j$ is q-hierarchical for each $j\leq m'$ and $q_j$ is
quantifier-free for each $j>m'$.
We use in parallel, for each $j\leq m'$, the data structures provided
by Theorem~\ref{thm:CQs}\,\eqref{item:thm:CQs:upper} for the
q-hierarchical CQ $q_j$. 
In addition to this, we use an $r$-dimensional array
$\arrayfont{A}_R$ for each relation symbol $R\in\schema$ of arity
$r\deff\ar(R)$, and we ensure
that for all $\ov{b}\in \Dom^{r}$ we have $\arrayfont{A}_R[\ov{b}]=1$
if $\ov{b}\in R^\DB$, and $\arrayfont{A}_R[\ov{b}]=0$ otherwise.
When receiving an update command $\Update\,R(\ov{b})$, 
we let $\arrayfont{A}_R[\ov{b}]\deff 1$ if $\Update=\Insert$, and
$\arrayfont{A}_R[\ov{b}]\deff 0$ if $\Update=\Delete$, and in addition to
this, we call
the $\UPDATE$ routines of the data structure for $q^{(j)}$ for each $j\leq m'$.
Upon input of a tuple $\ov{a}\in\Dom^k$, the $\TEST$ routine 
proceeds as follows.
For each $j\leq m'$, it calls the $\TEST$ routine of the data
structure for $q^{(j)}$ upon input $\ov{a}^{(j)}$. And additionally,
it uses the arrays $\arrayfont{A}_R$ for all $R\in\schema$ to
check if for each $j>m'$ the quantifier-free query $q^{(j)}$ is
satisfied by the tuple $\ov{a}^{(j)}$. All this is done within time
$\poly({q})$, and we know that $\ov{a}\in q(\DB)$ if, and only
if, all these tests succeed.
This completes the proof of part \eqref{item:thm:testing:upperbound} of Theorem~\ref{thm:testing}.

\bigskip

Let us now turn to the proof of part \eqref{item:thm:testing:lowerbound} of Theorem~\ref{thm:testing}.
We are given a query $q \deff q_\phi(z_1,\ldots,z_k)$ and without loss
of generality we assume that $q$ is a homomorphic
core and $q$ is not t-hierarchical.
Thus, $q$ violates condition \eqref{eq:t-hier-cond-i} or
\eqref{eq:t-hier-cond-ii} of Definition~\ref{def:thierarchical}.
In case that it violates condition \eqref{eq:t-hier-cond-i}, the proof
is virtually identical to the proof of Theorem~3.4 in
\cite{BKS_enumeration_PODS17}; for the reader's convenience, the proof
details are given in Appendix~\ref{appendix:TestingLowerBound}.

Let us consider the case where $q$ violates
condition~\eqref{eq:t-hier-cond-ii} of Definition~\ref{def:thierarchical}.
In this case, there are two variables $x\in\free(q)$ and
$y\in\Vars(q)\setminus\free(q)$
and two atoms
$\sgpsixy$ and $\sgpsiy$ of $q$ 
with
$\Vars(\sgpsixy)\cap\set{\varx,\vary}=\set{\varx,\vary}$ and
$\Vars(\sgpsiy)\cap\set{\varx,\vary}=\set{\vary}$.  
The easiest example of a query for which this is true is 
\ $
\qET \deff  \{ \,
  (x) \, : \, 
  \exists y\, (\, Exy \uund Ty\,)
\, \}\,.
$ \
Here, we illustrate the proof idea for the particular query
$\qET$; a proof for the general case is given in Appendix~\ref{appendix:TestingLowerBound}.

Assume that there is a dynamic algorithm that solves
the testing problem for $\qET$ with update time
$\updatetime=\bigOh(n^{1-\epsilon})$ and testing time
$\testingtime=\bigOh(n^{1-\epsilon})$ on databases whose active domain
is of size $\bigOh(n)$.
We show how this algorithm can be used to
solve the OuMv-problem.

For the OuMv-problem, we receive as input
an $n\times n$ matrix $M$. We start the
preprocessing phase of our testing algorithm for $\qET$ with the
empty database $\DB=(E^\DB,T^\DB)$ where
$E^\DB=T^\DB=\emptyset$.
As this database has constant size, the preprocessing is finished in
constant time.
We then apply $\bigOh(n^2)$ update steps to ensure that
$E^\DB=\setc{(i,j)}{M_{i,j}=1}$. 
All this takes time at most $\bigOh(n^2)\cdot
\updatetime = \bigOh(n^{3-\epsilon})$.
Throughout the remainder of the construction, we will never change
$E^{\DB}$, and we will always ensure that 
$T^{\DB}\subseteq [n]$.

When we receive two vectors $\vec{u}^{\,t}$ and $\vec{v}^{\,t}$ in the
dynamic phase of the OuMv-problem, we proceed as follows. First, 
we perform the update commands $\Delete\,T(j)$ for each $j\in [n]$
with $\vec{v}^{\,t}=0$, and
the update commands $\Insert\,T(j)$ for each $j\in [n]$
with $\vec{v}_j^{\,t}=1$.
This is done within time $n\cdot\updatetime
=\bigOh(n^{2-\epsilon})$.
By construction of $\DB$ we know that for every $i\in[n]$ we have
\[
 i\ \in\ \qET(\DB)
 \quad \iff \quad
 \text{%
  there is a $j\in[n]$ such that $M_{i,j}=1$ and
  $\vec{v}_j^{\,t}=1$\,.
 }%
\]
Thus,
\ $
  (\vec{u}^{\,t})\trans M \vec{v}^{\,t} = 1
  \iff 
  \text{%
   there is an $i\in[n]$ with $\vec{u}_i^{\,t}=1$ and $i\in\qET(\DB)$.
  }%
$\,
Therefore, after having called the $\TEST$ routine for $\qET$ for each
$i\in[n]$ with $\vec{u}_i^{\,t}=1$, we can output the correct result
of $(\vec{u}^{\,t})\trans M \vec{v}^{\,t}$.
This takes time at most $n\cdot \testingtime=\bigOh(n^{2-\epsilon})$.
I.e., for each $t\in[n]$ after receiving the vectors $\vec{u}^{\,t}$
and $\vec{v}^{\,t}$, we can output $(\vec{u}^{\,t})\trans
M\vec{v}^{\,t}$ within time $\bigOh(n^{2-\epsilon})$. 
Consequently, the overall running time for solving the OuMv-problem is bounded by $\bigOh(n^{3-\epsilon})$.

By using the technical machinery of \cite{BKS_enumeration_PODS17},
this approach can be generalised from $\qET$ to all queries $q$ that 
violate condition~\eqref{eq:t-hier-cond-ii} of
Definition~\ref{def:thierarchical}; see
Appendix~\ref{appendix:TestingLowerBound} for details.
This completes the proof of Theorem~\ref{thm:testing}.
\end{proof}

\section{Unions of conjunctive queries}\label{section:UCQs}

In this section we consider dynamic query evaluation for UCQs.
To transfer our notions of \emph{hierarchical} queries from CQs to
UCQs, we say that a UCQ  $q(\ov{u})$ of the form $q_{1}(\ov{u}_1)
\,\cup\,\cdots \,\cup \,q_{d}(\ov{u}_d)$
is \qhier (t-hierarchical) if every CQ $q_{i}(\ov{u}_i)$ in
the union is \qhier (t-hierarchical).
Note that for \emph{Boolean} queries (CQs as well as UCQs) the notions
of being q-hierarchical and being t-hierarchical coincide,
and for a $k$-ary UCQ $q$ it can be checked in time $\poly(q)$ if
$q$ is q-hierarchical or t-hierarchical.

The following theorem generalises the statement of
Theorem~\ref{thm:testing} from CQs to UCQs.
Its proof follows 
easily from the Theorems~\ref{thm:testing} and \ref{thm:CQs}.

\begin{theorem}\label{thm:UCQtesting}
\begin{enumerate}[(a)]
\item\label{item:thm:UCQtesting:upperbound}
There is a dynamic algorithm that receives a t-hierarchical $k$-ary
UCQ $q$ and a $\schema$-db $\DBstart$, and computes
within $\preprocessingtime =\poly({q})\cdot\bigOh(\size{\DBstart})$ preprocessing time a data structure that can be
updated in time $\updatetime = \poly({q})$ and allows to test for an
input tuple $\ov{a}\in\Dom^k$ if $\ov{a}\in q(\DB)$ within time 
$\testingtime = \poly({q})$. Furthermore, the algorithm allows to
answer a q-hierarchical Boolean UCQ within time $\answertime = \bigOh(1)$.
\item\label{item:thm:UCQtesting:lowerbound}
Let $\epsilon>0$ and let $q$ be a $k$-ary UCQ whose
homomorphic core is not t-hierarchical 
(note that this is the case if, and only
  if, $q$ is not equivalent to a t-hierarchical UCQ).
There is no dynamic algorithm with arbitrary preprocessing time and $\updatetime =\bigOh(n^{1-\epsilon})$
update time that can test for any input tuple $\ov{a}\in\Dom^k$ if
$\ov{a}\in q(\DB)$ within testing time
$\testingtime=\bigOh(n^{1-\epsilon})$, unless the OMv-conjecture fails.
Furthermore, if $k=0$ (i.e., $q$ is a Boolean UCQ), then there 
is no dynamic algorithm with arbitrary preprocessing time and
$\updatetime=\bigoh(\actdomsize^{1-\smalleps})$
update time that
answers 
$\eval{q}{\DB}$ in time
$\answertime=\bigoh(\actdomsize^{2-\smalleps})$, unless the \OMvcon
fails.
\end{enumerate}
\end{theorem}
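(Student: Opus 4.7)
The plan is to reduce both parts of Theorem~\ref{thm:UCQtesting} to the corresponding CQ statements in Theorems~\ref{thm:testing} and~\ref{thm:CQs}, exploiting two structural facts: a UCQ is t-hierarchical (resp.\ q-hierarchical) if and only if every CQ in the union is, and the homomorphic core of a UCQ decomposes as a union $q_1\cup\cdots\cup q_{d'}$ of core CQs with no homomorphism between distinct ones.

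For part~\eqref{item:thm:UCQtesting:upperbound}, given a t-hierarchical UCQ $q=q_1\cup\cdots\cup q_d$, I would maintain in parallel one instance of the data structure from Theorem~\ref{thm:testing}\,\eqref{item:thm:testing:upperbound} for each $q_i$. Since $d$ depends only on $q$, dispatching preprocessing, update, and test calls to all $d$ instances preserves the claimed time bounds in data complexity; a test $\ov a\in q(\DB)$ returns $\Yes$ iff at least one per-CQ $\TEST$ call does. For the Boolean q-hierarchical case each $q_i$ is q-hierarchical, so I would instead plug in the counting routines of Theorem~\ref{thm:CQs}\,\eqref{item:thm:CQs:upper}\,\eqref{item:thm:CQs:count}, maintaining $|q_i(\DB)|$ in $\bigOh(1)$ per counter and answering $\Yes$ precisely when some counter is positive, which yields $\answertime=\bigOh(1)$.

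For part~\eqref{item:thm:UCQtesting:lowerbound}, since t-hierarchicality of UCQs is defined CQ-by-CQ, the assumption gives a distinguished core CQ $q_i$ that is itself not t-hierarchical. I would apply Theorem~\ref{thm:testing}\,\eqref{item:thm:testing:lowerbound} to $q_i$ and use that its hard-instance construction only produces databases $\DB$ admitting a homomorphism $g\colon\DB\to q_i$. The key reduction step is that on such $\DB$, for any $j\neq i$, a homomorphism $\beta\colon q_j\to\DB$ witnessing some $\ov a\in q_j(\DB)$ would, composed with $g$, yield a homomorphism $q_j\to q_i$ matching the free-variable targets dictated by $\ov a$, contradicting the hom-incomparability of the core. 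Hence $q(\DB)=q_i(\DB)$ on the hard instances, so any dynamic tester for $q$ with sublinear update and testing time would be one for $q_i$, contradicting the OMv-conjecture. The Boolean lower bound is obtained by the same reduction with Theorem~\ref{thm:CQs}\,\eqref{item:thm:CQs:lower}\,\eqref{item:thm:CQlower:answering} in place of Theorem~\ref{thm:testing}\,\eqref{item:thm:testing:lowerbound}, using that t-hierarchical and q-hierarchical coincide for Boolean queries (where the disjunct-killing composition is free of any free-variable bookkeeping).

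The only non-routine step is this disjunct-killing argument: everything else is a direct parallel- or black-box composition of the per-CQ results. The point to verify is that the free-variable alignment needed to compose $\beta$ and $g$ into a genuine CQ-homomorphism is enforced by the specific shape of the testing tuple $\ov a$ produced by the hard-instance construction of Theorem~\ref{thm:testing}; once this is in place, the homomorphic core property of the UCQ decomposition carries the lower bound through verbatim.
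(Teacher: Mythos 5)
Your proposal is correct and follows essentially the same route as the paper: part~(a) is the identical parallel maintenance of per-CQ data structures from Theorem~\ref{thm:testing}\,\eqref{item:thm:testing:upperbound} (with the counters of Theorem~\ref{thm:CQs}\,\eqref{item:thm:CQs:count} for the Boolean answer-time claim), and part~(b) is the same disjunct-killing reduction that restricts to databases mapping homomorphically into the non-t-hierarchical core disjunct $q_i$ and composes $\beta\colon q_j\to\DB$ with $g\colon\DB\to q_i$ to contradict the core property, exactly exploiting the ``databases that map homomorphically into $q$'' rider of Theorems~\ref{thm:testing} and~\ref{thm:CQs}. The free-variable-alignment subtlety you flag for the non-Boolean case is real, and the paper treats it the same way (implicitly, via the specific shape of the test tuple in the hard-instance construction of Appendix~\ref{appendix:TestingLowerBound}).
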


\begin{proof}
  Part~\eqref{item:thm:UCQtesting:upperbound} 
  follows immediately from 
  Theorem~\ref{thm:testing}\,\eqref{item:thm:testing:upperbound} (and
  Theorem~\ref{thm:CQs}\,\eqref{item:thm:CQs:count} for the statement on
  \emph{Boolean} UCQs),
  as we
  can maintain all CQs in the union in parallel and then decide
  whether at least one of them is satisfied by the current database
  and the given tuple.

  For the proof of \eqref{item:thm:UCQtesting:lowerbound} let $q'$ be
  the homomorphic core of $q$, and let $q'$ be of the form
  $q_1\cup\cdots\cup q_m$ for $k$-ary CQs $q_1,\ldots,q_m$.
  We first consider the case that $q$ is a \emph{Boolean} UCQ.
  Then, by assumption, $q'$ is not q-hierarchical, and hence there
  exists an $i\in[m]$ such that the Boolean CQ $q_{i}$ a
  is not \qhier. Suppose for
  contradiction that there exists a dynamic algorithm that evaluates
  $q$ with $\updatetime=\bigoh(\actdomsize^{1-\smalleps})$
  update time and
  $\answertime=\bigoh(\actdomsize^{2-\smalleps})$ answer time.
  Since $q'$ is equivalent to $q$, it
  follows that the algorithm also evaluates $q'$ with the same time
  bounds. Now consider the class of databases that map homomorphically
  into $q_i$.
  For every database $\DB$ in this class it holds that $q'(\DB)=\Yes$ if, and
  only if, $q_i(\DB)=\Yes$. This is because every other CQ $q_j$ in
  $q'$ is not satisfied by $\DB$, since otherwise there would be a
  homomorphism from $q_j$ to $\DB$ and therefore, since there is a homomorphism
  from $\DB$ to $q_i$, also from $q_j$ to $q_i$, contradicting that
  $q'$ is a core.
  Hence, the dynamic algorithm evaluates the non-q-hierarchical CQ
  $q_i$ (which is a homomorphic core), contradicting
  Theorem~\ref{thm:CQlower}\,\eqref{item:thm:CQlower:answering}.

  The statement of \eqref{item:thm:UCQtesting:lowerbound} concerning
  non-Boolean UCQs and the testing problem follows along the same
  lines when using 
  Theorem~\ref{thm:testing}\,\eqref{item:thm:testing:lowerbound}
  instead of
  Theorem~\ref{thm:CQlower}\,\eqref{item:thm:CQlower:answering}.
  \end{proof}
It turns out that 
similarly as \qhier CQs, also
\qhier UCQs allow for efficient enumeration under
updates. This, and the according lower bound, is stated in the
following Theorem~\ref{thm:enumUCQ}, which will be proven 
at the end of this section.
In contrast to Theorem~\ref{thm:UCQtesting}, the result does not follow
immediately from the tractability of the enumeration problem for \qhier CQs, because one has
to ensure that tuples from result sets of two different CQs are not
reported twice while enumerating their union.

\begin{theorem}\label{thm:enumUCQ}
\begin{enumerate}[(a)]
\item\label{item:thm:enumUCQ:upper}
  There is a dynamic algorithm that receives a q-hierarchical $k$-ary
  UCQ $q$
  and a $\schema$-db $\DBstart$, and computes within
  $\preprocessingtime =\poly(q)\cdot\bigOh(\size{\DBstart})$
  preprocessing time a data structure that can be updated in time
  $\updatetime = \poly(q)$ and allows to 
  enumerate $q(\DB)$ with delay $\delaytime = \poly(q)$.

\item\label{item:thm:enumUCQ:lower}
  Let $\epsilon>0$ and let $q$ be a $k$-ary UCQ whose homomorphic core
  is not \qhier and
  is a 
  union of self-join free CQs.
  There is no dynamic
  algorithm with arbitrary preprocessing time and
  $\updatetime=\bigoh(\actdomsize^{1-\smalleps})$ update time that enumerates 
  $\eval{q}{\DB}$ with
  delay $\delaytime=\bigoh(\actdomsize^{1-\smalleps})$, unless the \OMvcon
  fails. 
\end{enumerate}
\end{theorem}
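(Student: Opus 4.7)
For part~\eqref{item:thm:enumUCQ:upper}, the plan is to maintain, for each CQ $q_i$ in the q-hierarchical UCQ $q = q_1 \cup \cdots \cup q_d$, the dynamic data structure provided by Theorem~\ref{thm:CQs}\,\eqref{item:thm:CQs:upper}. Updates are forwarded to all $d$ data structures, so $\preprocessingtime$ stays linear in $\size{\DBstart}$ and $\updatetime$ stays $\poly(q)$. For enumeration, one processes the CQs in the fixed order $i=1,\ldots,d$: while enumerating the tuples of $q_i(\DB)$ via Theorem~\ref{thm:CQs}\,\eqref{item:thm:CQs:enumerate}, each candidate tuple $\ov{a}$ is tested, using the routine of Theorem~\ref{thm:CQs}\,\eqref{item:thm:CQs:test}, for membership in $q_j(\DB)$ for every $j<i$; if $\ov{a}$ is new, it is output, and otherwise it is skipped. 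Correctness is immediate and every tuple of $q(\DB)$ is produced exactly once.

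The principal obstacle in part~\eqref{item:thm:enumUCQ:upper} is bounding the worst-case \emph{delay}, since a naive implementation can encounter arbitrarily long runs of consecutive skipped duplicates whenever $q_i(\DB)$ is largely covered by earlier $q_j(\DB)$. To overcome this, the plan is to compile the duplicate filter into the $d$ data structures themselves: for every enumerable tuple $\ov{a}$ of $q_i(\DB)$, a boolean flag records whether $i = \min\{j : \ov{a}\in q_j(\DB)\}$, and the flagged tuples are chained together in an auxiliary linked list, which the enumeration routine traverses in $\poly(q)$ time per step using the ``next tuple'' routine of Theorem~\ref{thm:CQs}\,\eqref{item:thm:CQs:next}. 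The delicate point is maintaining these flags under updates: the update routine must identify exactly which flags change when a single tuple is inserted into or deleted from some relation, which should be feasible in $\poly(q)$ amortised time per update by locally invoking the $\TEST$ routines of all $d$ data structures on the bounded set of affected tuples.

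For part~\eqref{item:thm:enumUCQ:lower}, the plan is to reduce from the enumeration lower bound for self-join free non-q-hierarchical CQs in Theorem~\ref{thm:CQs}\,\eqref{item:thm:CQs:enumeration}. Write the homomorphic core of $q$ as $q_1\cup\cdots\cup q_m$. Since by assumption this core is a union of self-join free CQs that is not q-hierarchical, some $q_i$ is self-join free and not q-hierarchical, and $q_i$ is itself a homomorphic core. The hard databases $\DB$ produced by Theorem~\ref{thm:CQs}\,\eqref{item:thm:CQs:enumeration} for $q_i$ may be chosen to map homomorphically into $q_i$ via some $g\colon\DB\to q_i$, and the plan is to show $q(\DB) = q_i(\DB)$: the inclusion $q_i(\DB)\subseteq q(\DB)$ is trivial, and conversely, given $\ov{a}\in q_j(\DB)$ with $j\neq i$, the witnessing homomorphism $\beta_j\colon q_j\to\DB$ composed with $g$ yields a homomorphism of relational structures $q_j\to q_i$. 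The crux, and the main technical difficulty in this direction, is to ensure---by a suitable minor modification (e.g.\ padding with auxiliary atoms) of the hard databases used in the proof of Theorem~\ref{thm:CQs}\,\eqref{item:thm:CQs:enumeration}---that $g$ fixes the output coordinates (i.e., $g(\ov{a}) = \ov{u}_i$), so that $g\circ\beta_j$ becomes a genuine \emph{query} homomorphism $q_j\to q_i$; this contradicts the core property of $q'$, forcing $j=i$ and hence $\ov{a}\in q_i(\DB)$. Given $q(\DB)=q_i(\DB)$, any dynamic enumeration algorithm for $q$ within the forbidden time bounds would yield one for $q_i$, contradicting the \OMvcon.
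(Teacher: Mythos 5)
Your lower-bound argument is essentially the paper's own: pick a self-join free non-q-hierarchical CQ $q_i$ in the homomorphic core, restrict attention to databases that map homomorphically into $q_i$, argue that $q_j(\DB)=\emptyset$ for $j\neq i$ (so $q(\DB)=q_i(\DB)$), and invoke Theorem~\ref{thm:CQs}\,\eqref{item:thm:CQs:enumeration} together with its final clause that the CQ lower bounds already hold on the class of databases mapping homomorphically into the query. Your extra worry about whether $g\circ\beta_j$ respects the output coordinates is legitimate but is taken care of by the way the hard instances are built in the appendix (the canonical homomorphism $g$ sends the enumerated constants to the corresponding head variables), so this part is fine.

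Part~\eqref{item:thm:enumUCQ:upper}, however, has a genuine gap. Your plan rests on maintaining, \emph{under updates}, a per-result-tuple flag recording whether $i=\min\{j:\ov{a}\in q_j(\DB)\}$ plus a linked list of the flagged tuples, and you justify the update cost by claiming that only a ``bounded set of affected tuples'' must be re-examined. That premise is false: a single insertion or deletion of one database tuple can change the status of $\Theta(n)$ result tuples. For the q-hierarchical UCQ $\{(x,y):Sx\wedge Exy\}\cup\{(x,y):Exy\wedge Ty\}$, inserting $a$ into $S$ moves every tuple $(a,b)$ with $(a,b)\in E^\DB$ and $b\in T^\DB$ from ``first seen in the second CQ'' to ``first seen in the first CQ'', and alternating insert/delete of $a$ defeats any amortisation. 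So the flags and the list cannot be maintained in $\poly(q)$ update time --- this is precisely why the constant-delay CQ data structures never materialise the result set. The paper's proof maintains nothing about duplicates across updates. Instead it proves a general combinatorial lemma (Lemma~\ref{lem:union}): if each set $T_j$ supports constant-time membership testing, a start element, and a ``next'' iterator --- exactly what Theorem~\ref{thm:CQs}\,\eqref{item:thm:CQs:upper}\,\eqref{item:thm:CQs:enumerate}--\eqref{item:thm:CQs:next} provides for $q_j(\DB)$ --- then $T_1\cup\cdots\cup T_\ell$ can be enumerated without repetition with $\bigOh(\ell)$ delay. All deduplication happens at enumeration time: whenever an element is output, a constant-time \textsc{exclude} operation splices it into a skip interval of each later set, so that subsequent passes jump over a maximal run of already-reported elements in one step rather than testing them one by one. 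Without this (or an equivalent) device your construction does not bound the delay while keeping constant update time.
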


Note that according to Theorem~\ref{thm:CQs}, for CQs the enumeration
problem as well as the counting problem can be solved by efficient
dynamic algorithms if, and (modulo algorithmic conjectures) only if, the query is \qhier.
In contrast to this, it turns out that for UCQs computing the number
of output tuples can be much harder than enumerating the query
result. 
To characterise the UCQs that allow for efficient dynamic counting algorithms, we
use the following notation.
For two $k$-ary CQs $q_{\phi}(u_1,\ldots,u_k)$ and
$q_{\psi}(v_1,\ldots,v_k)$ we define the intersection $q \deff
q_{\phi}\cap q_{\psi}$ to be the following $k$-ary query. If
there is an $i\in[k]$ such that $u_i$ and $v_i$ are distinct elements
from $\Dom$, then $q := \emptyset$ (and this query is q-hierarchical
by definition). Otherwise, we let $w_1,\ldots,w_k$ be elements from
$\Var\cup\Dom$ which satisfy the following for all $i,j\in[k]$ and all
$a\in\Dom$:
\begin{equation*}
\text{
 $\big($ $w_i=a$ $\iff$ $u_i=a$ \,or \,$v_i=a$ $\big)$
 \quad and \quad 
 $\big($ $w_i=w_j$ $\iff$  $u_i=u_j$ \,or \,$v_i=v_j$ $\big)$.
}%
\end{equation*}
We obtain $\phi'$ from $\phi$ (and $\psi'$ from $\psi$) by replacing every
$u_i\in\{u_1,\ldots, u_k\}\cap \free(\phi)$ (and $v_i\in\{v_1,\ldots, v_k\}\cap \free(\psi)$) by $w_i$.
Finally, we let $q =  \{ \ (w_1,\ldots,w_k) \ : \
\phi'\,\wedge\,\psi' \ \}$, where we can assume \mbox{w.l.o.g.} that
$\phi'\,\wedge\,\psi'$ is a conjunctive formula of the form \eqref{eq:CF}.
Note that for every database $\DB$ it holds that $q(\DB) =
q_{\phi}(\DB)\cap q_{\psi}(\DB)$.

\begin{definition}\label{def:exhaustively-qhier}
A UCQ $q$ of the form
$\bigcup_{i\in[d]}q_{i}(\ov{u}_i)$
 is \emph{exhaustively \qhier} if for every
$I\subseteq [d]$ the intersection
$q_{I}=\bigcap_{i\in I} q_{i}$ is equivalent to a \qhier CQ.
\end{definition}

It is not difficult to see that 
a \emph{Boolean} UCQ is exhaustively \qhier if and only if its
homomorphic core is
\qhier.
In the non-Boolean case, being exhaustively \qhier is a stronger
requirement than being \qhier as the following example shows: the UCQ
\ $\{\,
  (x,y) \, : \,
    Sx \uund Exy
    \, \}
    \  \cup\ 
      \{\,
  (x,y) \, : \,
    Exy \uund Ty
  \, \} 
$ \
is \qhier, but not exhaustively \qhier.
In contrast to the \qhier property, the straightforward way of deciding whether a UCQ $q$ is exhaustively
\qhier requires $2^{\poly(q)}$ and it is open whether this can be improved.
The next theorem shows
that the exhaustively \qhier queries are precisely those UCQs that allow for efficient dynamic
counting algorithms. 

\begin{theorem}\label{thm:countUCQ}
\begin{enumerate}[(a)]
\item\label{item:thm:countUCQ:upper}
  There is a dynamic algorithm that receives an exhaustively
  q-hierarchical UCQ $q$
  and a $\schema$-db $\DBstart$, and computes within 
  $\preprocessingtime =2^{\poly(q)}\cdot\bigOh(\size{\DBstart})$
  preprocessing time a data structure that can be updated in time
  $\updatetime = 2^{\poly(q)}$ 
  and computes $\setsize{q(\DB)}$ in time $\countingtime=\bigOh(1)$.

\item\label{item:thm:countUCQ:lower}
  Let $\epsilon>0$ and let $q$ be a UCQ
  whose homomorphic core is not exhaustively \qhier.
  There is no dynamic algorithm with arbitrary preprocessing time and
  $\updatetime=\bigoh(\actdomsize^{1-\smalleps})$ update time that computes
  $\setsize{\eval{q}{\DB}}$ 
  in time $\countingtime=\bigoh(\actdomsize^{1-\smalleps})$,
  unless the
  \OMvcon or the \OVcon fails.
\end{enumerate}
\end{theorem}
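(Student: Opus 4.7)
For the upper bound in part~\eqref{item:thm:countUCQ:upper}, the plan is to apply the inclusion--exclusion identity
\[
 |q(\DB)| \ = \ \sum_{\emptyset \neq I \subseteq [d]} (-1)^{|I|+1}\,|q_I(\DB)|,
\]
where $q_I=\bigcap_{i\in I}q_i$. During preprocessing, for each of the $2^d-1 \leq 2^{\poly(q)}$ non-empty subsets $I\subseteq[d]$, compute (by Chandra--Merlin-style core computation in $2^{\poly(q)}$ time) a \qhier CQ equivalent to $q_I$; such a CQ exists because $q$ is exhaustively \qhier. Then, using Theorem~\ref{thm:CQs}\,\eqref{item:thm:CQs:upper}, maintain the integer $|q_I(\DB)|$ in parallel for every such $I$. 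Each update command triggers $\poly(q)$ work inside each of the $2^{\poly(q)}$ maintained sub-structures, giving update time $2^{\poly(q)}$; the $\COUNT$ routine returns the alternating sum above using $2^{\poly(q)}$ additions, which is $\bigOh(1)$ in data complexity since $q$ is fixed.

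For the lower bound~\eqref{item:thm:countUCQ:lower}, let the homomorphic core of $q$ be $q'=\bigcup_{i\in[d]}q_i$ and pick $I\subseteq[d]$ inclusion-\emph{minimal} such that $q_I$ is not equivalent to a \qhier CQ (such an $I$ exists because $q'$ is not exhaustively \qhier). By minimality, every $q_J$ with $\emptyset \neq J \subsetneq I$ is equivalent to a \qhier CQ and hence, by part~\eqref{item:thm:countUCQ:upper} (or directly by Theorem~\ref{thm:CQs}\,\eqref{item:thm:CQs:upper}), its cardinality can be maintained with $\poly(q)$ update time and $\bigOh(1)$ counting time. The homomorphic core of $q_I$ is not \qhier, so Theorem~\ref{thm:CQs}\,\eqref{item:thm:CQlower:counting} tells us that $|q_I(\DB)|$ cannot be maintained with sublinear update and counting time unless the \OMvcon or the \OVcon fails; moreover, this hardness already holds on databases that map homomorphically into $q_I$.

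Next I transfer this hardness from $q_I$ to $q$. Applying inclusion--exclusion to the restricted sub-union $\bigcup_{i\in I}q_i$ gives
\[
  |q_I(\DB)| \ = \ (-1)^{|I|+1}\Big[\,\big|{\textstyle\bigcup_{i\in I}}q_i(\DB)\big|\,-\,\sum_{\emptyset\neq J\subsetneq I}(-1)^{|J|+1}|q_J(\DB)|\,\Big],
\]
so it suffices to show that maintaining $|\bigcup_{i\in I}q_i(\DB)|$ is hard. If $I=[d]$ this is exactly $|q(\DB)|$ and the reduction is done. Otherwise, I restrict the reduction to databases $\DB$ that admit a homomorphism into $q_I$ (the class on which the hardness of $q_I$ already holds). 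Since $q'$ is a homomorphic core, there is no homomorphism from any $q_j$ with $j\in[d]\setminus I$ into any $q_i$ with $i\in I$; composing with a witnessing homomorphism $\DB\to q_I$ shows that no such $q_j$ admits a homomorphism into $\DB$ either, so $q_j(\DB)=\emptyset$ for every $j\notin I$ automatically. Consequently $|q(\DB)|=|\bigcup_{i\in I}q_i(\DB)|$ on these inputs, and a fast dynamic counting algorithm for $q$ would yield a fast dynamic counting algorithm for $q_I$, the desired contradiction.

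The main obstacle is verifying that the ``map homomorphically into $q_I$'' restriction is preserved by the reduction and that the hard instances constructed in \cite{BKS_enumeration_PODS17} for $q_I$ (the \OMv{}/\OV{} reductions underlying Theorem~\ref{thm:CQs}\,\eqref{item:thm:CQlower:counting}) indeed fall into this class. A careful inspection of those reductions is needed to confirm that the databases they produce can be arranged so as to admit a homomorphism into $q_I$; together with the core argument above, this simultaneously ensures $q_j(\DB)=\emptyset$ for $j\notin I$ and hence that $|q(\DB)|$ equals the quantity we need in order to recover $|q_I(\DB)|$ via inclusion--exclusion.
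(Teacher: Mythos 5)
Your part~(a) matches the paper's proof essentially verbatim (inclusion--exclusion over all non-empty $I\subseteq[d]$, cores of the intersections, Theorem~\ref{thm:CQs}\,\eqref{item:thm:CQs:count} for each), so nothing to add there.

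For part~(b) there is a genuine gap at the step where you argue that $q_j(\DB)=\emptyset$ for all $j\in[d]\setminus I$ on databases $\DB$ that map homomorphically into $q_I$. Composing $h'\colon q_j\to\DB$ with $h\colon\DB\to q_I$ gives a homomorphism from $q_j$ to the \emph{intersection} $q_I$, but the core property of the union only forbids homomorphisms from $q_j$ to the individual CQs $q_i$ with $i\in I$; a homomorphism into $q_I$ need not factor through any single $q_i$, because $q_I$ contains the atoms of all the $q_i$ glued along the shared free variables. Concretely, take $q_1=\{(x,y):Ax\wedge Cxy\}$, $q_2=\{(x,y):By\wedge Cxy\}$, $q_3=\{(x,y):Ax\wedge By\}$. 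This union is a homomorphic core (no relation symbol of one CQ's ``missing'' atom occurs in the others), each $q_i$ is \qhier, and the inclusion-minimal non-\qhier intersection is $q_{\{1,2\}}=\{(x,y):Ax\wedge By\wedge Cxy\}$, so your $I=\{1,2\}$. Yet $q_3$ maps into $q_{\{1,2\}}$ via the identity, and on the database with $A=\{a\}$, $B=\{b\}$, $C=\emptyset$ (which maps into $q_{\{1,2\}}$) we have $q_3(\DB)\neq\emptyset$ while $q_1(\DB)=q_2(\DB)=\emptyset$; hence $|q(\DB)|\neq|q_1(\DB)\cup q_2(\DB)|$ and your reduction does not recover $|q_I(\DB)|$ from $|q(\DB)|$.

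The paper avoids this by choosing $I$ differently: it groups the index sets into equivalence classes ($I\cong J$ iff $q_I\equiv q_J$), writes $|q(\DB)|=\sum_{I}a_I|q_I(\DB)|$ over class representatives with $a_I=\sum_{J\cong I}(-1)^{|J|+1}$, and picks a non-\qhier representative $I$ that is \emph{minimal in the homomorphism preorder} -- i.e.\ such that no other representative $q_J$ admits a homomorphism into $q_I$ -- which is exactly the property needed to make the unwanted terms vanish on databases mapping into $q_I$ (it is obtained by following the chain of homomorphisms, which must terminate since distinct representatives are inequivalent). Your inclusion-minimality buys something the paper's argument does not spell out, namely that every $q_J$ with $\emptyset\neq J\subsetneq I$ is equivalent to a \qhier CQ and hence its count is maintainable and can be subtracted off; but it gives no control over intersections involving indices outside $I$. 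A correct argument needs both ingredients: choose $I$ so that no \emph{other} problematic intersection maps into $q_I$ (a condition on the homomorphism order, not on set inclusion), and dispose of the remaining, \qhier, terms by maintaining their counts via part~(a) rather than by claiming they are empty. As written, your proposal does not establish the lower bound.
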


\newcommand{\coreof}[1]{\widetilde{#1}}
\newcommand{\ucq}{\query}
\newcommand{\cq}{\query}
\newcommand{\cqpsi}{\psi}
\newcommand{\IndSet}{{I}}
\newcommand{\IndSetJ}{{J}}
\newcommand{\IndSets}{{\mathcal I}}
\newcommand{\exqhier}{exhaustively q-hie\-rar\-chi\-cal\xspace}

\begin{proof}
To prove part~\eqref{item:thm:countUCQ:upper} we use the principle of
inclusion-exclusion along with the upper bound of
Theorem~\ref{thm:CQs}\,\eqref{item:thm:CQs:count}.
  Let $q=\bigcup_{i\in[d]}q_{i}(\ov{u}_i)$ be an
  exhaustively \qhier UCQ.
  Our dynamic algorithm for solving the counting problem for $q$
  proceeds as follows.
  In the preprocessing phase we first compute for every non-empty
  $I\subseteq[d]$ the homomorphic core $\coreof{q_I}$ of the CQ
  $q_{I}\deff\bigcap_{i\in I}q_{i}$. This can be done in time
  $2^{\poly(q_{I})}$.
  Since $q$ is exhaustively \qhier, every $\coreof{q_{I}}$ is \qhier
  and we can apply Theorem~\ref{thm:CQs}\,\eqref{item:thm:CQs:count} to
  determine the number of result tuples $\setsize{\coreof{q_{I}}(\DB)}=\setsize{q_{I}(\DB)}$ for
  every $I\subseteq[d]$
  with $\sum_{I\subseteq[d]}\poly(q_{I}) = 2^{\poly(q)}$ update time.
  By the principle of inclusion-exclusion we have
  \begin{equation*}
    \setsize{q(\DB)} 
    \ \ = \ \
    |\bigcup_{i\in[d]} q_i(\DB)|
    \ \ = \ \
    \sum_{\emptyset\neq I\subseteq
      [d]}(-1)^{\setsize{I}+1}\,{\cdot}\,\setsize{\bigcap_{i\in I}q_{i}(\DB)}
    \ \ = \ \
    \sum_{\emptyset\neq I\subseteq [d]}(-1)^{\setsize{I}+1}\,{\cdot}\,\setsize{q_{I}(\DB)}.
  \end{equation*}
  Therefore, we can compute the number of result tuples in
  $q(\DB)$ by maintaining all $2^d-1$ numbers
  $\setsize{\coreof{q_{I}}(\DB)}$ in parallel (for all non-empty
  $I\subseteq[d]$).

For the proof of part~\eqref{item:thm:countUCQ:lower}
let $\coreof{q}=\bigcup_{i\in[d]}q_{i}(\ov{u}_i)$ be the
 homomorphic core of $q$. Consider the CQs $q_{I}\deff\bigcap_{i\in
    I}q_{i}$ and their homomorphic cores $\coreof{q_I}$ for all non-empty $I\subseteq [d]$. First we take care of equivalent queries and write $I \cong
  J$ if  $q_{I}\equiv q_{J}$.
  Let $\IndSets$ be a set of index sets $I$ that contains one
  representative from each equivalence class
  $I/_{\cong}$. By the principle of inclusion-exclusion we have
  \begin{equation*}
   \setsize{q(\DB)} 
   \ \ = \ \ 
   \setsize{\coreof{q}(\DB)} 
   \ \ = \ \  
   \sum_{\emptyset\neq\IndSet\subseteq
     [d]}(-1)^{\setsize{\IndSet}+1}\,{\cdot}\,\setsize{q_{\IndSet}(\DB)}
   \ \ = \ \ 
   \sum_{\IndSet\in\IndSets}a_{\IndSet}\,{\cdot}\,\setsize{q_{\IndSet}(\DB)}\,,
  \end{equation*}
  where
  $a_{\IndSet}\defi \sum_{\IndSetJ\colon
    \IndSetJ\cong\IndSet}(-1)^{\setsize{\IndSetJ}+1}$. Because $q$ is
  not exhaustively \qhier, we can
  choose a set $\IndSet\in\IndSets$ such that $\coreof{q}_{\IndSet}$ is a
  non-q-hierarchical query, which is \emph{minimal} in the sense that for
  every $\IndSetJ\in \IndSets\setminus \{\IndSet\}$ there is no
  homomorphism from ${q}_\IndSetJ$ to ${q}_\IndSet$. Note that such a
  minimal set $I$ exists since otherwise we could find two distinct
  $J, J' \in \IndSets$ such that
  $q_\IndSetJ \equiv q_{\IndSetJ'}$.

  Now suppose that $\DB$ is a database from the class of databases
  that map homomorphically into $q_I$ and let $h\colon \DB\to q_\IndSet$
  be a homomorphism. For every
  $\IndSetJ\in\IndSets\setminus\set{\IndSet}$ it holds that there is no
  homomorphism $h'\colon q_\IndSetJ\to\DB$, since otherwise $h\circ
  h'$ would be a homomorphism from $q_\IndSetJ$ to $q_\IndSet$.
  Hence, $q_\IndSetJ(\DB)=\emptyset$ for all
  $\IndSetJ\in\IndSets\setminus\set{\IndSet}$ and thus
  $\setsize{q(\DB)}=a_\IndSet\cdot\setsize{q_\IndSet(\DB)}$. It
  follows that we can compute
  $\setsize{q_\IndSet(\DB)}=\setsize{\coreof{q}_\IndSet(\DB)}$  
  by maintaining the value for
  $\setsize{q(\DB)}$ and dividing it by $a_\IndSet$.
  Since $\coreof{q}_\IndSet$ is a non-\qhier homomorphic core, the lower bound for
  maintaining $\setsize{q(\DB)}$ follows from
  Theorem~\ref{thm:CQs}\,\eqref{item:thm:CQlower:counting}.
This completes the proof of Theorem~\ref{thm:countUCQ}.
\end{proof}

\newcommand{\setT}{T}
\newcommand{\elemt}{t}
\newcommand{\nextelement}{\mathsf{next}}
\newcommand{\startelement}{\mathsf{start}}
\newcommand{\skipforth}{\mathsf{skip}}
\newcommand{\skipback}{\mathsf{skipback}}

The remainder of this section is devoted to the proof of  Theorem~\ref{thm:enumUCQ}.
To prove
Theorem~\ref{thm:enumUCQ}\,\eqref{item:thm:enumUCQ:upper}, we first develop a general
method for enumerating the union of sets.
We say that a data structure for a set $T$ \emph{allows to skip}, if it is
possible to test whether $t\in T$ in constant time and for some 
ordering $t_1,\ldots,t_n$ of the elements in $T$ there is
\begin{itemize}
\item a function $\startelement$, which returns $t_1$ in constant time
  and
\item a function $\nextelement(t_i)$, which returns $t_{i+1}$ (if
  $i<n$) or \EOE (if $i=n$) in constant time.
\end{itemize}

Note that a data structure that allows to skip enables constant
delay enumeration of $t_i$, $t_{i+1}$, \ldots, $t_n$ starting from an
arbitrary element $t_i\in T$ (but we do
not have control over the underlying order).
An example of such a data structure is an explicit
representation of the elements of $T$ in a linked list with constant access.
Another example is the data structure of the enumeration
algorithm for the result $\setT\deff q(\DB)$ of a \qhier
CQ $q$, provided by Theorem~\ref{thm:CQs}\,\eqref{item:thm:CQs:enumerate}\&\eqref{item:thm:CQs:next}.
The next lemma states that we can use these data structures for sets
$\setT_j$ to enumerate the union $\bigcup_j \setT_j$ with constant delay and
without repetition.

\begin{lemma}\label{lem:union}
 Let $\ell\geq 1$ and
 let $\setT_1,\ldots,\setT_\ell$ be sets such that for each
 $j\in[\ell]$ there is a data structure for $\setT_j$ that allows to skip.
 Then there is an algorithm that enumerates, without repetition, all
 elements in
 $\setT_1\cup\cdots\cup\setT_\ell$ with $\bigOh(\ell)$ delay.
\end{lemma}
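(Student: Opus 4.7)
The plan is to maintain, for each $j\in[\ell]$, a pointer $p_j$ into $T_j$, initialised to $\text{start}(T_j)$, and to traverse the $T_j$'s jointly using the $\text{next}$ functions. An element $t$ encountered at $p_j$ is declared \emph{new} if $t\notin T_{j'}$ for every $j'<j$ (this test costs $O(\ell)$ time by the constant-time membership tests of the individual skip data structures), and it is declared \emph{stale} otherwise. The algorithm outputs exactly the new elements, advancing $p_j$ with $\text{next}$ each time an element is inspected; stale elements are skipped. Correctness and no-repetition are immediate, since each $t\in T_1\cup\cdots\cup T_\ell$ is output exactly in the phase of its smallest index $j^{*}(t)\deff\min\{j:t\in T_j\}$, and it is skipped in every $T_j$ with $j>j^{*}(t)$.

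The main obstacle is the \emph{worst-case} delay bound of $O(\ell)$: a plain sequential scan that processes $T_1$, then $T_2$, etc.\ can encounter arbitrarily long stretches of stale elements in some $T_j$ before producing the next output, so the interval between two consecutive outputs is not bounded by $O(\ell)$. To circumvent this I would interleave the advancement of all $\ell$ pointers: after producing an output (or at the very beginning) the algorithm advances each $p_j$ by at most one $\text{next}$-step, performing the $O(\ell)$ staleness test on each new position, and outputs as soon as some pointer is seen to rest on a new element. Since the total number of pointer advancements is $\sum_{j}|T_j|\le\ell\,|T_1\cup\cdots\cup T_\ell|$, distributing them round-robin ensures that at most $\ell$ advancements, contributing $O(\ell)$ work in total, take place between two consecutive outputs.

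The formal argument is a charging argument: each stale element $t$ visited in phase $j$ has already been output in the unique earlier phase $j^{*}(t)<j$, and its $O(1)$ skip-cost in phase $j$ is charged against the output event of $t$, which carries at most $\ell-1$ such charges (one per later phase containing $t$). Combined with the $O(\ell)$ per-inspection cost of the staleness test and the one-pointer-per-round scheduling sketched above, this yields $O(\ell)$ delay both between the first and last output tuples and between the final output and $\EOE$. The algorithm uses no preprocessing beyond what the skip data structures already provide, so the lemma follows.
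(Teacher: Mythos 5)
Your setup (one pointer per set, membership tests against the earlier sets to decide whether an element is ``new'', outputting each $t$ only in the phase of its minimal index) is sound, and you correctly identify the crux: bounding the \emph{worst-case} gap between consecutive outputs when a scan runs into a long stretch of stale elements. But the argument you give for the round-robin scheme does not close that gap. The claim ``the total number of pointer advancements is $\sum_j |T_j|\le \ell\,|T_1\cup\cdots\cup T_\ell|$, so distributing them round-robin ensures at most $\ell$ advancements between two consecutive outputs'' is an averaging argument: it bounds the advancements per \emph{round}, not per \emph{output}, and nothing in it prevents many consecutive rounds in which every pointer lands on a stale element. Concretely, take $\ell=2$, $T_1$ enumerated as $a_1,\ldots,a_n$ and $T_2$ enumerated as $b_1,\ldots,b_n,a_1,\ldots,a_n,c$. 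Your algorithm (``output as soon as some pointer rests on a new element'') emits both $a_i$ and $b_i$ in round $i$; after round $n$ the pointer into $T_1$ is exhausted and the pointer into $T_2$ must traverse the $n$ stale elements $a_1,\ldots,a_n$ before reaching $c$, so $\Omega(n)$ advancements (and $\Omega(n)$ membership tests) separate the output of $b_n$ from the output of $c$. The same failure recurs under the other readings of your scheduling rule, and the final ``charging argument'' again only bounds \emph{total} work, which is irrelevant for a per-delay guarantee.

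The round-robin idea can in fact be repaired, but it needs two ingredients you do not supply: (i) a buffering discipline that advances \emph{every} non-exhausted pointer exactly once per round, queues all new elements encountered, and emits exactly one queued element per round; and (ii) a counting argument showing that under this discipline no round is ``dry'' --- essentially, that the number of stale elements of $T_j$ is at most $\sum_{j'<j}\nu_{j'}$ (where $\nu_{j'}$ counts the elements output from $T_{j'}$), which together with ``at most one output per round'' forces every prefix of rounds to contain enough new encounters. That argument is genuinely different from, and harder than, what you wrote. The paper avoids the issue entirely by scanning the sets sequentially and maintaining, for each later set $T_j$, skip pointers $\skipforth^j/\skipback^j$ that are updated whenever an element is reported (an $O(1)$ \textsc{exclude}$^j$ call per later set, hence $O(\ell)$ per output) and that coalesce each maximal run of already-reported elements into a single pointer, so that during the scan of $T_j$ any such run is jumped in one step. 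You should either adopt that data-structure-based route or supply the missing no-dry-round lemma for the interleaved scheme; as it stands the proof has a genuine gap.
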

\begin{proof}
  For each $i\in[\ell]$ let $\startelement^i$ and $\nextelement^i$ be the start element and
  the iterator for the set $T_i$.  The main idea for enumerating the
  union $\setT_1\cup\cdots\cup\setT_\ell$ is to first enumerate all
  elements in $T_1$, and then $T_2\setminus T_1$, \
  $T_3\setminus (T_1\cup T_2)$, \ \ldots, \
  $T_\ell\setminus (T_1\cup\cdots \cup T_{\ell-1})$.  In order to do
  this we have to exclude all elements that have already been
  reported from all subsequent sets. As we want to ensure constant
  delay enumeration, we cannot just ignore the elements in $T_i \cap
  (T_1\cup\cdots \cup T_{i-1})$ while enumerating $T_i$. As a remedy, we
  use an additional
  pointer to jump from an element that has already been reported
  to the least element that needs to be reported next.
  To do this we use 
  arrays $\skipforth^i$ (for all $i\in[\ell]$) to jump over excluded elements: if
  $\elemt_r,\ldots,\elemt_s$ is a maximal interval of elements in $\setT_i$ that
  have already been reported, then
  $\skipforth^i[\elemt_r]=\elemt_{s+1}$ (if
  $\elemt_s$ is the last element in $\setT_i$, then $\elemt_{s+1} :=
  \EOE$). For technical reasons we also need the array
  $\skipback^i$ which represents the inverse pointer, i.e., $\skipback^i[\elemt_{s+1}]=\elemt_{r}$.

  The enumeration algorithm is stated in
  Algorithm~\ref{alg:enum-set-disjunction}. It uses the procedure
  \textsc{exclude}$^j$ described in Algorithm~\ref{alg:delete} to update the
  arrays whenever an element $t$ has been reported. 
It is straightforward to verify that these algorithms provide the
desired functionality within the claimed time bounds.
\end{proof}

\begin{algorithm}[h!tb]
 \caption{The enumeration algorithm for $\setT_1\cup\cdots\cup\setT_\ell$}\label{alg:enum-set-disjunction} 
 \begin{algorithmic}
  \State \textbf{Input:} Data structures for sets $\setT_j$ with first
  element
  $\startelement^j$ and iterator $\nextelement^j$. 
  \State Pointer $\skipforth^j[t] = \skipback^j[t] = \nil$
  for all $j\in[\ell]$ and $t\in T_j$. 
  \State
  \For{$i = 1,\, \ldots,\, \ell$}
  \State $t = \startelement^i$
  \While{$t \neq \EOE$}
  \If{$\skipforth^i[t] == \nil$}
  \State Output element $t$
  \For{$j = i+1\ \to\ \ell$}
  \State \textsc{exclude}$^j(t)$ 
  \EndFor
  $t = \nextelement^i(t)$
  \Else
  \State $t = \skipforth^i[t]$
  \EndIf
  \EndWhile
  \EndFor
  \State Output the end-of-enumeration message $\EOE$.
 \end{algorithmic}
\end{algorithm}

\begin{algorithm}[h!tb]
 \caption{Procedure \textsc{exclude}$^j$ for excluding $\elemt$ from $\setT_j$}\label{alg:delete}
 \begin{algorithmic}
   \If{$\elemt \in \setT_j$}
     \If{$\skipback^j[\elemt] \neq \nil$}
     \State $\elemt^- = \skipback^j[\elemt]$
     \State $\skipback^j[\elemt] = \nil$
     \Else
     \State $\elemt^- = \elemt$ 
     \EndIf
      \If{$\skipforth^j[\nextelement^j(\elemt)] \neq \nil$}
      \State $\elemt^+ = \skipforth^j[\nextelement^j(\elemt)]$
      \State $\skipforth^j[\nextelement^j(\elemt)] = \nil$
     \Else
         \State $\elemt^+=\nextelement^j(\elemt)$
     \EndIf
     \State $\skipforth^j[\elemt^-] = \elemt^+$;\quad $\skipback^j[\elemt^+] = \elemt^-$
   \EndIf
 \end{algorithmic}
\end{algorithm}

Lemma~\ref{lem:union} enables us to prove the upper bound of
Theorem~\ref{thm:enumUCQ}, and the lower bound is proved by using
Theorem~\ref{thm:CQs}\,\eqref{item:thm:CQs:enumeration}.

\begin{proof}[Proof of Theorem~\ref{thm:enumUCQ}]
  The upper bound follows immediately from combining Lemma~\ref{lem:union} with
  Theorem~\ref{thm:CQs}\,\eqref{item:thm:CQs:next}.
  For the lower bound let $q_i$ be a self-join free non-\qhier CQ in
  the homomorphic core $q'$ of the UCQ $q$. For every database $\DB$ that maps
  homomorphically into $q_i$ it holds that $q_j(\DB)=\emptyset$ for
  every other CQ $q_j$ in $q'$ (with $j\neq i$), since otherwise there
  would be a homomorphism from $q_j$ to $\DB$ and hence to $q_i$,
  contradicting that $q'$ is a homomorphic core.
  It follows that every dynamic algorithm that enumerates the result
  of $q$ on a
  database $\DB$ which maps homomorphically into $q_i$ also enumerates
  $q_i(\DB)=q(\DB)$, contradicting Theorem~\ref{thm:CQs}\,\eqref{item:thm:CQs:enumeration}.
\end{proof}

\section{CQs and UCQs with integrity constraints}\label{section:QueriesWithIntegrityConstraints}

In the presence of integrity constraints, the characterisation of
tractable queries changes and depends on the query as well as on the
set of constraints.
When considering a scenario where databases are required to satisfy a
set $\CONSTR$ of constraints,  
we allow to execute a given $\UPDATE$ command only if the resulting 
database still satisfies all constraints in $\CONSTR$.
When speaking of \emph{$(\schema,\CONSTR)$-dbs} we mean $\schema$-dbs
$\DB$ that satisfy all constraints in $\CONSTR$.
Two queries $q$ and $q'$ are 
\emph{$\CONSTR$-equivalent} (for short: $q\equiv_\CONSTR q'$) if
$q(\DB)=q'(\DB)$ for every 
$(\schema,\CONSTR)$-db $\DB$.

We first consider \emph{small domain constraints}, i.e.,
constraints $\DEP$ of the form
\;$
  R[i]\subseteq C
$\;
where $R\in\schema$, $i\in\set{1,\ldots,\ar(R)}$, and $C\subseteq\Dom$
is a finite set.
A $\schema$-db $\DB$ \emph{satisfies} $\DEP$ if
$\proj_i(R^\DB)\subseteq C$.

For these constraints we are able to give a clear picture of
the tractability landscape by reducing CQs and UCQs with small domain
constraints to UCQs without integrity constraints and applying the
characterisations for UCQs achieved in Section~\ref{section:UCQs}.
We start with an example that illustrates how
a query can be simplified in the presence of
small domain constraints.

\begin{example}\label{example:SD-constraints}
Consider the Boolean query  
\ $
 \qSET \deff 
 \{ \, 
   \emptytuple \, : \, 
   \exists x \exists y \, ( \, Sx \uund Exy\uund Ty\,)
 \, \}\,,
$ \
which is not \qhier. By Theorem~\ref{thm:CQs} 
it cannot be answered by a dynamic algorithm with sublinear update
time and sublinear answer time, unless the OMv-conjecture fails. 
But in the presence of the small domain constraint
\,$
 \SD \deff S[1]\subseteq C
$
for a set $C\subseteq\Dom$ of the form $C=\set{a_1,\ldots,a_c}$, the
query $\qSET$ is $\set{\SD}$-equivalent to the q-hierarchical UCQ
\[
  q' \ \deff \ \ 
  \bigcup_{a_i\in C} \ 
  \{ \ 
     \emptytuple \ : \ 
     \exists y\ ( \; 
       Sa_i \uund Ea_iy \uund Ty
     \; ) 
  \ \}\,.
\]
Therefore, by Theorem~\ref{thm:UCQtesting}, $q'$ and hence $\qSET$ can be answered with constant
update time and constant answer time on all databases that satisfy $\SD$.
\end{example}

For handling the general case, assume we are given a
set $\CONSTR$ of small domain constraints and
an arbitrary $k$-ary CQ $q$
of the form \eqref{eq:karyCQ} where $\phi$ is of the form \eqref{eq:CF}.
We define a function $\textit{Dom}_{q,\CONSTR}$ that maps each $x\in\Vars(q)$ to a set 
$\textit{Dom}_{q,\CONSTR}(x)\subseteq\Dom$ as follows. 
As an initialisation let $f(x)=\Dom$ for each $x\in\Vars(q)$. Consider each constraint $\DEP$ in $\CONSTR$ and
let $S[i]\subseteq C$ be the form of $\DEP$. Consider each atom $\psi_j$ of $\phi$ and let 
$Rv_1\cdots v_r$ be the form of $\psi_j$. If $R=S$ and $v_i\in\Var$, then let
$f(v_i)\deff f(v_i)\cap C$. Let $\textit{Dom}_{q,\CONSTR}$ be the mapping $f$ obtained at the end of this process.
Note that $\textit{rvars}_\CONSTR(q) \deff \setc{x\in\Vars(q)}{\textit{Dom}_{q,\CONSTR}(x)\neq\Dom}$
consists of the variables of $q$ that are restricted by $\CONSTR$. 

Let $M_{q,\CONSTR}$ be the set of all mappings $\alpha:V\to\Dom$ with $V=\textit{rvars}_\CONSTR(q)$
and $\alpha(x)\in \textit{Dom}_{q,\CONSTR}(x)$ for each $x\in V$. Note that $M_{q,\CONSTR}$ is finite; and it is 
empty if, and only if, $\textit{Dom}_{q,\CONSTR}(x)=\emptyset$ for some $x\in \Vars(q)$.

For an arbitrary mapping $\alpha:V\to \Dom$ with $V\subseteq \Var$ we let 
$q_\alpha$ be the $k$-ary CQ obtained from $q$ as follows: 
for each $x\in V$, if present in $q$, the existential quantifier 
``$\exists x$'' is omitted, and afterwards
every occurrence of $x$ in $q$ is replaced with the constant
$\alpha(x)$.
It is straightforward to check that $q_\alpha(\DB)\subseteq q(\DB)$
for every $\schema$-db $\DB$.
With these notations, we obtain the following lemma.

\begin{lemma}\label{lemma:SD-constraints}
Let $q$ be a CQ and let $\CONSTR$ be a set of small domain constraints.
Let $M\deff M_{q,\CONSTR}$.
\\
If $M=\emptyset$, then $q(\DB)=\emptyset$ for every $(\schema,\CONSTR)$-db 
$\DB$.
\\
Otherwise, $q$ is $\CONSTR$-equivalent to the UCQ 
\ $
  q_{\CONSTR} \ \deff \ 
  \bigcup_{\alpha\in M}\, q_\alpha 
$\,.
\end{lemma}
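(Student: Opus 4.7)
The plan is a direct verification of both claims by unpacking the definitions of $\textit{Dom}_{q,\CONSTR}$ and $M_{q,\CONSTR}$ and using the fact that a $(\schema,\CONSTR)$-db satisfies every small domain constraint in $\CONSTR$. The core observation is:

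\medskip

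\textbf{Observation (\ensuremath{\star}).} \emph{If $\DB$ is a $(\schema,\CONSTR)$-db and $\beta\colon q\to\DB$ is a homomorphism, then $\beta(x)\in\textit{Dom}_{q,\CONSTR}(x)$ for every $x\in\Vars(q)$.}

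To see this, recall that $\textit{Dom}_{q,\CONSTR}(x)$ is obtained as the intersection of $\Dom$ with all sets $C$ coming from constraints $R[i]\subseteq C\in\CONSTR$ such that some atom $Rv_1\cdots v_r$ of $\phi$ has $v_i=x$. For each such pair, $(\beta(v_1),\ldots,\beta(v_r))\in R^\DB$, and since $\DB$ satisfies $R[i]\subseteq C$ we get $\beta(x)=\beta(v_i)\in\proj_i(R^\DB)\subseteq C$. Intersecting over all contributing constraints yields (\ensuremath{\star}).

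\medskip

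For the first statement of the lemma, $M_{q,\CONSTR}=\emptyset$ means that $\textit{Dom}_{q,\CONSTR}(x)=\emptyset$ for some $x\in\Vars(q)$. By (\ensuremath{\star}), no homomorphism from $q$ into any $(\schema,\CONSTR)$-db can exist, so $q(\DB)=\emptyset$.

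\medskip

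For the second statement, fix a $(\schema,\CONSTR)$-db $\DB$ and prove the two inclusions. The inclusion $q_{\CONSTR}(\DB)\subseteq q(\DB)$ is immediate from the remark, noted just before the lemma, that $q_\alpha(\DB)\subseteq q(\DB)$ for every mapping $\alpha$. For the converse, let $\ov{a}\in q(\DB)$ and pick a homomorphism $\beta\colon q\to\DB$ with $\ov{a}=(\beta(u_1),\ldots,\beta(u_k))$. Define $\alpha\colon V\to\Dom$ on $V\deff \textit{rvars}_\CONSTR(q)$ by $\alpha(x)\deff \beta(x)$. By (\ensuremath{\star}) we have $\alpha(x)\in\textit{Dom}_{q,\CONSTR}(x)$ for every $x\in V$, hence $\alpha\in M_{q,\CONSTR}$. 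Since $q_\alpha$ is obtained from $q$ by removing the existential quantifiers over variables in $V$ and replacing every occurrence of $x\in V$ by $\alpha(x)=\beta(x)$, the restriction of $\beta$ to $\Vars(q)\setminus V$ is a homomorphism witnessing that $\ov{a}\in q_\alpha(\DB)\subseteq q_{\CONSTR}(\DB)$.

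\medskip

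No serious obstacle is expected; the only point that deserves care is the definition of $q_\alpha$ when $\alpha(x)$ happens to coincide with a value already appearing as a constant of $q$ (this does not affect correctness, only notational bookkeeping in writing $q_\alpha$ as a CQ in the normal form of~\eqref{eq:karyCQ}), and the observation that if some $u_i\in V$ then the $i$-th component of every tuple in $q_\alpha(\DB)$ is the constant $\alpha(u_i)=\beta(u_i)$, which matches the corresponding component of $\ov{a}$.
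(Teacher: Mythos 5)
Your proof is correct and follows essentially the same route as the paper's: the central Observation~(\ensuremath{\star}) is exactly the inline argument the paper uses to show that any homomorphism $\beta\colon q\to\DB$ into a constraint-satisfying database has $\beta(x)\in\textit{Dom}_{q,\CONSTR}(x)$ for all $x$, from which both the emptiness claim (by contraposition) and the inclusion $q(\DB)\subseteq q_{\CONSTR}(\DB)$ via $\alpha\deff\beta_{|V}$ follow, with the reverse inclusion supplied by the remark preceding the lemma. The only cosmetic difference is that you fold the paper's separate treatment of the case $V=\emptyset$ into the uniform argument, which is fine.
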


\begin{proof}
For a set $\CONSTR$ of constraints and
a $\schema$-db $\DB$ we write $\DB\models\CONSTR$ to indicate that  
$\DB$ satisfies every constraint in $\CONSTR$.

\smallskip

\noindent
Let $V\deff\textit{rvars}_{\CONSTR}(q)$ and $M\deff M_{q,\CONSTR}$.
If $V=\emptyset$, then $M=\set{\alpha_\emptyset}$ where
$\alpha_\emptyset$ is the unique mapping with empty domain.
Thus, $q_\CONSTR=q_{\alpha_\emptyset}=q$ and we are done.
It remains to consider the case where $V\neq\emptyset$.

Consider an arbitrary $\schema$-db $\DB$ with $\DB\models\CONSTR$ and
let $q$ be of the form \eqref{eq:karyCQ}.
Consider an arbitrary tuple $\ov{b}=(b_1,\ldots,b_k)\in q(\DB)$. By definition of the
semantics of CQs, there is a
valuation $\beta:\Var\to\Dom$ such that
$(b_1,\ldots,b_k)=\big(\beta(u_1),\ldots,\beta(u_k)\big)$ and
for every atomic formula $Rv_1\cdots v_r$ in $q$ we have
$\big(\beta(v_1),\ldots,\beta(v_r)\big)\in R^\DB$.
If $x=v_i$ then $\beta(x)\in\pi_i(R^\DB)$; and if $\CONSTR$ contains a
constraint of the form $R[i]\subseteq C$ then, since
$\DB\models\CONSTR$, we have $\pi_i(R^\DB)\subseteq C$, and hence
$\beta(x)\in C$. This holds true for every occurrence of $x$ in
an atom of $q$, and hence $\beta(x)\in \textit{Dom}_{q,\CONSTR}(x)$
for every $x\in\Vars(q)$. In other words, the restriction $\beta_{|V}$
of $\beta$ to $V$ belongs to $M$, and
$\ov{b}\in q_{\beta_{|V}}(\DB)\subseteq
q_{\CONSTR}(\DB)$.
In particular, this implies that the following is true.
\begin{enumerate}
\item
If $q(\DB)\neq\emptyset$ for some $\schema$-db $\DB$ with
$\DB\models\CONSTR$, then $M\neq \emptyset$. Hence, by contraposition,
if $M=\emptyset$ then $q(\DB)=\emptyset$ for every $\schema$-db $\DB$
with $\DB\models\CONSTR$.
\item
If $M\neq\emptyset$, then $q(\DB)\subseteq q_{\CONSTR}(\DB)$ for every
$\schema$-db $\DB$ with $\DB\models\CONSTR$.
On the other hand,
since $q_\alpha(\DB)\subseteq q(\DB)$ for every $\alpha$ and every
$\schema$-db $\DB$, we have $q_\CONSTR(\DB)\subseteq q(\DB)$ for
every $\schema$-db $\DB$. Hence, $q$ is $\CONSTR$-equivalent to $q_\CONSTR$.
\end{enumerate}

\noindent
This completes the proof of Lemma~\ref{lemma:SD-constraints}.
\end{proof}

This reduction from a CQ $q$ to a UCQ $q_{\CONSTR}$ directly
translates to UCQs: if $q$ is a union of the CQs
$q_1,\ldots, q_d$, then we define the UCQ
$q_{\CONSTR} \deff \bigcup_{i\in [d]}\, (q_i)_{\CONSTR} $.  It is not
hard to verify that if the UCQ $q$ is a homomorphic core, then so is
$q_{\CONSTR}$. Therefore, the following dichotomy theorem for UCQs under small
domain constraints is a direct consequence of
Lemma~\ref{lemma:SD-constraints} and the
Theorems~\ref{thm:UCQtesting}, \ref{thm:enumUCQ}, and \ref{thm:countUCQ}.

\begin{theorem}\label{thm:SD-constraints}
Let $q$ be a UCQ that is a homomorphic core 
and $\CONSTR$ a set of small domain constraints 
with $M_{q,\CONSTR}\neq \emptyset$. 
Suppose that the \OMvcon and the \OVcon hold.
\begin{enumerate}[(1a)]
\item[(1a)] If $q_{\CONSTR}$ is t-hierarchical, then $q$ can be tested 
on $(\schema,\CONSTR)$-dbs
in constant time
with linear preprocessing time and constant update time.
\item[(1b)] If $q_{\CONSTR}$ is not t-hierarchical, then 
on the class of $(\schema,\CONSTR)$-dbs
testing in time $O(n^{1-\epsilon})$ is not possible 
with $O(n^{1-\epsilon})$ update time.
\item[(2a)] If $q_{\CONSTR}$ is \qhier, then there is data structure with linear
preprocessing and constant update time that allows to enumerate $q(D)$ with constant
delay
on $(\schema,\CONSTR)$-dbs.
\item[(2b)] If $q_{\CONSTR}$ is not \qhier and in addition self-join free, then
 $q(D)$ cannot be enumerated with
$O(n^{1-\epsilon})$ delay and $O(n^{1-\epsilon})$ update time
on $(\schema,\CONSTR)$-dbs.
\item[(3a)] If $q_{\CONSTR}$ is exhaustively \qhier, then there is data structure with linear
preprocessing and constant update time that allows to compute $|q(D)|$ in constant
time
on $(\schema,\CONSTR)$-dbs.
\item[(3b)] If $q_{\CONSTR}$ is not exhaustively \qhier, then computing
  $|q(D)|$ 
on $(\schema,\CONSTR)$-dbs
in time
  $O(n^{1-\epsilon})$ is not possible
with $O(n^{1-\epsilon})$ update time.
\end{enumerate}
\end{theorem}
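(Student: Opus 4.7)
The plan is to derive Theorem~\ref{thm:SD-constraints} by combining Lemma~\ref{lemma:SD-constraints} with the UCQ dichotomies of Theorems~\ref{thm:UCQtesting}, \ref{thm:enumUCQ}, and \ref{thm:countUCQ}. Since $M_{q,\CONSTR}\neq\emptyset$, Lemma~\ref{lemma:SD-constraints} yields $q\equiv_\CONSTR q_\CONSTR$, so on any $(\schema,\CONSTR)$-db $\DB$ the problems of testing, enumerating, and counting $q(\DB)$ coincide with those for $q_\CONSTR(\DB)$.

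For the upper bounds (1a), (2a), (3a) I would apply parts~(a) of Theorems~\ref{thm:UCQtesting}, \ref{thm:enumUCQ}, and \ref{thm:countUCQ} directly to $q_\CONSTR$; the resulting dynamic algorithm is designed for arbitrary $\schema$-dbs and in particular handles $(\schema,\CONSTR)$-dbs correctly. Checking that an incoming update preserves $\CONSTR$ reduces to checking, for a bounded number of fixed finite sets $C$, whether a constant lies in $C$, which takes $\bigOh(1)$ time and does not affect the stated bounds. For the lower bounds (1b), (2b), (3b) I would first confirm the claim already sketched in the text that $q_\CONSTR$ is a homomorphic core whenever $q$ is, by using the assumption that $q$ is a core to rule out homomorphisms between distinct $q_\alpha$'s and within a single $q_\alpha$. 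I would then feed $q_\CONSTR$ into parts~(b) of Theorems~\ref{thm:UCQtesting}, \ref{thm:enumUCQ}, and \ref{thm:countUCQ}: their proofs isolate a CQ $q_\alpha$ in the core of $q_\CONSTR$ that violates the corresponding hierarchical condition, and ultimately construct hard instances as databases mapping homomorphically into $q_\alpha$.

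The main obstacle is that these hard instances do not \emph{a priori} satisfy $\CONSTR$, whereas the assumed dynamic algorithm is only guaranteed to work on $(\schema,\CONSTR)$-dbs. This is resolved by the structural observation that in $q_\alpha$ every position covered by a constraint $S[i]\subseteq C$ in $\CONSTR$ carries a constant from $C$: any variable originally at such a position lies in $V=\textit{rvars}_\CONSTR(q)$ and has been substituted by $\alpha(x)\in\textit{Dom}_{q,\CONSTR}(x)\subseteq C$, while any original constant at such a position must itself lie in $C$ (otherwise $q_\alpha$ would be unsatisfiable on every $(\schema,\CONSTR)$-db and could be discarded without affecting the homomorphic core). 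Consequently the non-hierarchicality witnesses of $q_\alpha$ involve only \emph{unconstrained} positions, and the reductions from Theorems~\ref{thm:testing}(b) and \ref{thm:CQs}(b) can be executed by placing the prescribed constants of $C$ at the constrained positions and the OuMv/OV encoding at the unconstrained ``variable'' positions, producing databases that automatically satisfy $\CONSTR$. Carrying this verification through each of the three dichotomies then yields Theorem~\ref{thm:SD-constraints}.
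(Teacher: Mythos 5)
Your proposal matches the paper's own argument, which simply declares the theorem a direct consequence of Lemma~\ref{lemma:SD-constraints}, the observation that $q_{\CONSTR}$ is a homomorphic core whenever $q$ is, and the Theorems~\ref{thm:UCQtesting}, \ref{thm:enumUCQ}, and \ref{thm:countUCQ}. Your additional verification that the hard instances of the lower-bound reductions satisfy $\CONSTR$ (because every constrained position of $q_\alpha$ carries a constant from the corresponding small domain, and the constructed databases map homomorphically into $q_\alpha$ while fixing constants) is a detail the paper leaves implicit, and you supply it correctly.
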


In particular, this shows that the tractability of a UCQ $q$ on
$(\schema,\CONSTR)$-dbs only depends on the 
structure of the query $q_{\CONSTR}$. Note that while the size of
$q_{\CONSTR}$ might be $c^{O(q)}$, where $c$ is largest number of
constants in a small domain, it can be checked in time $\poly(q)$
whether $q_{\CONSTR}$ is t-hierarchical or q-hierarchical.

\smallskip

Let us take a brief look at two other kinds of constraints: inclusion
dependencies and functional dependencies, which both can also cause a hard
query to become tractable. 

An \emph{inclusion dependency $\DEP$} is of the form
\;$
  R[i_1,\ldots,i_m]\subseteq S[j_1,\ldots,j_m]
$\;
where $R,S\in\schema$, $m\geq 1$, $i_1,\ldots,i_m\in\set{1,\ldots,\ar(R)}$, and $j_1,\ldots,j_m\in\set{1,\ldots,\ar(S)}$.
A $\schema$-db $\DB$ \emph{satisfies} $\DEP$ if 
$\proj_{i_1,\ldots,i_m}(R^{\DB}) \subseteq \proj_{j_1,\ldots,j_m}(S^{\DB})$.
As an example consider the query $\qSET$
from Example~\ref{example:SD-constraints} and the inclusion
dependency $\IND \ \deff\  E[2]\subseteq T[1]$. Obviously, $\qSET$ is 
$\set{\IND}$-equivalent to the \qhier (and hence easy) CQ
$
 q' \deff \{ \ 
 \emptytuple \ : \ 
 \exists x\exists y\ (\; Sx \uund Exy \;)
 \ \}
 $.
 To turn this into a
general principle, we say that an inclusion dependency $\DEP$
of the form
$R[i_1,\ldots,i_m]\subseteq S[j_1,\ldots,j_m]$
\emph{can be applied} to a CQ $q$
if $q$ contains an atom $\psi_1$ of the form $Rv_1\cdots v_{r}$ and an
atom $\psi_2$ of the form $Sw_1\cdots w_{s}$ such that
\begin{enumerate}
\item
$(v_{i_1},\ldots,v_{i_m})=(w_{j_1},\ldots,w_{j_m})$, 
\item
for all
$j\in[s]\setminus\set{j_1,\ldots,j_m}$ we
have
$w_j\in\Var$,
$w_j\not\in\free(q)$, $\atoms(w_j)=\set{\psi_2}$, and
\item
for all $j,j'\in[s]\setminus\set{j_1,\ldots,j_m}$ with $j\neq j'$ we
have $w_j\neq w_{j'}$;
\end{enumerate} 

\noindent
and \emph{applying
$\DEP$ to $q$ at $(\psi_1,\psi_2)$} then yields the CQ $q'$ which is obtained from $q$ by
omitting the atom $\psi_2$ and omitting the quantifiers $\exists z$
for all $z\in\Vars(\psi_2)\setminus\set{w_{j_1},\ldots,w_{j_m}}$. 
By this construction we have 
$\Vars(q')=\Vars(q) \setminus \setc{w_j}{j\in [s]\setminus\set{j_1,\ldots,j_m}}$.

\begin{claim}\label{claim:indequiv}
$q'\equiv_{\set{\DEP}}q$, and if $q$ is q-hierarchical, then so is
$q'$. 
\end{claim}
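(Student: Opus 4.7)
The claim has two parts (equivalence modulo $\DEP$, and preservation of q-hierarchicality); I would prove them separately, starting with the equivalence.

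\textbf{Equivalence.} For an arbitrary $\schema$-db $\DB$ the inclusion $q(\DB)\subseteq q'(\DB)$ is immediate, because $q'$ is obtained from $q$ by dropping the atom $\psi_2$ together with existential quantifiers over variables that occur only in $\psi_2$: any homomorphism $\beta\colon q\to\DB$ restricts to a homomorphism witnessing the same output tuple for $q'$. For the converse on databases $\DB$ with $\DB\models\DEP$, take an arbitrary output tuple of $q'(\DB)$ with witnessing valuation $\beta'$. The atom $\psi_1=Rv_1\cdots v_r$ still belongs to $q'$, so $(\beta'(v_1),\ldots,\beta'(v_r))\in R^{\DB}$. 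By $\DEP$ its $(i_1,\ldots,i_m)$-projection lies in $\proj_{j_1,\ldots,j_m}(S^{\DB})$; since $v_{i_\ell}=w_{j_\ell}$ for all $\ell$, there is a tuple in $S^{\DB}$ whose $j_\ell$-th coordinate equals $\beta'(w_{j_\ell})$. Use the other coordinates of that tuple to define $\beta(w_j)$ for each $j\in[s]\setminus\{j_1,\ldots,j_m\}$, and set $\beta$ equal to $\beta'$ elsewhere. Conditions~(2) and~(3) in the definition of applicability guarantee that this extension is well-defined: those $w_j$ are pairwise distinct variables that appear nowhere else in $q$ and are not free, so the new assignments cannot conflict with anything in $\beta'$ or with each other. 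The extended $\beta$ is then a homomorphism from $q$ to $\DB$ producing the same output tuple.

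\textbf{q-hierarchicality.} The key observation is that $\free(q')=\free(q)$ (the removed variables are not free by condition~(2)) and that for every $z\in\Vars(q')$ we have $\atoms_{q'}(z)=\atoms_q(z)\setminus\{\psi_2\}$. From this, condition~\eqref{eq:q-hier-cond-i} of Definition~\ref{def:qhierarchical} transfers from $q$ to $q'$ immediately: removing the same element $\psi_2$ from both sides preserves $\subseteq$ and disjointness. For condition~\eqref{eq:q-hier-cond-ii}, suppose $\atoms_{q'}(x)\varsubsetneq\atoms_{q'}(y)$ and $x\in\free(q')=\free(q)$. A short case distinction shows $\atoms_{q'}(x)\neq\emptyset$: if $x\notin\Vars(\psi_2)$ this is clear, and if $x\in\Vars(\psi_2)\cap\Vars(q')$ then $x=w_{j_\ell}=v_{i_\ell}$ for some $\ell$, so $\psi_1\in\atoms_{q'}(x)$. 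Hence $\atoms_q(x)\cap\atoms_q(y)\neq\emptyset$, which together with condition~\eqref{eq:q-hier-cond-i} for $q$ forces $\atoms_q(x)\subseteq\atoms_q(y)$ (the opposite inclusion would contradict the strict containment in $q'$). An atom in $\atoms_{q'}(y)\setminus\atoms_{q'}(x)$ also lies in $\atoms_q(y)\setminus\atoms_q(x)$, so the inclusion is strict in $q$ as well. Condition~\eqref{eq:q-hier-cond-ii} for $q$ then yields $y\in\free(q)=\free(q')$, as desired.

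\textbf{Main obstacle.} The equivalence direction is essentially forced by the applicability conditions, so the only real care is in verifying that the reconstructed valuation $\beta$ is consistent; this is precisely what conditions~(2) and~(3) of applicability were designed to guarantee. The more delicate point is the q-hierarchicality argument, where one must rule out the degenerate case $\atoms_{q'}(x)=\emptyset$ before exploiting the intersection to invoke condition~\eqref{eq:q-hier-cond-i} of $q$; handling this correctly is the only subtlety in the otherwise mechanical transfer.
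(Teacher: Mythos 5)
Your proof is correct and follows essentially the same route as the paper: the equivalence direction extends the witnessing valuation $\beta'$ to the dropped atom $\psi_2$ via the tuple supplied by $\DEP$, and the hierarchy transfer rests on the same key identity $\atoms_{q'}(z)=\atoms_q(z)\setminus\set{\psi_2}$. The only (immaterial) difference is that for condition~\eqref{eq:q-hier-cond-ii} you argue directly from the non-empty intersection $\atoms_q(x)\cap\atoms_q(y)$ and condition~\eqref{eq:q-hier-cond-i} of $q$, whereas the paper runs a contradiction with a case split on whether $\psi_2\in\atoms_q(x)$ and $\psi_2\in\atoms_q(y)$; both arguments carry the same content, and your explicit check that $\atoms_{q'}(x)\neq\emptyset$ (which in fact holds automatically since $x\in\Vars(q')$) makes a step the paper leaves implicit.
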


\begin{proof}
For a set $\CONSTR$ of constraints and
a $\schema$-db $\DB$ we write $\DB\models\CONSTR$ to indicate that  
$\DB$ satisfies every constraint in $\CONSTR$.
For a constraint $\DEP$
we write $\DB\models\DEP$ instead of $\DB\models\set{\DEP}$.

\smallskip

\noindent
Obviously, $q(\DB)\subseteq q'(\DB)$ for every $\schema$-db $\DB$.
For the opposite direction, let $q$ be of the form \eqref{eq:karyCQ}, 
and consider a $\schema$-db $\DB$ with
$\DB\models\IND$ and a tuple $t\in
q'(\DB)$. Our goal is to show that $t\in q(\DB)$.
Since $t\in q'(\DB)$, there is a valuation $\beta'$ such that 
$t=\big(\beta'(u_1),\ldots,\beta'(u_k)\big)$ and
$(\DB,\beta')\models\psi$ for each atom $\psi$ of $q'$. In particular, 
$(\DB,\beta')\models Rv_1\cdots v_{r}$, i.e.,
$\big(\beta'(v_1),\ldots,\beta'(v_{r})\big)\in R^\DB$.
To show that $t\in q(\DB)$ it suffices to modify $\beta'$ into a
valuation $\beta$ which coincides with $\beta'$ on all variables in
$\Vars(q')$ and which also ensures that
$(\DB,\beta)\models Sw_1\cdots w_{s}$, i.e., that
$\big(\beta(w_1),\ldots,\beta(w_{s})\big)\in S^\DB$.

Since $\DB\models\IND$ we obtain from $\big(\beta'(v_1),\ldots,\beta'(v_{r})\big)\in R^\DB$
that $\big(\beta'(v_{i_1}),\ldots,\beta'(v_{i_m})\big)\in
\pi_{i_1,\ldots,i_m}(R^\DB)\subseteq \pi_{j_1,\ldots,j_m}(S^\DB)$. 
Since $(v_{i_1},\ldots,v_{i_m})=(w_{j_1},\ldots,w_{j_m})$, this
implies that
$\big(\beta'(w_{j_1}),\allowbreak \ldots,\allowbreak
\pi_{j_1,\ldots,j_m}(S^\DB)$.
Hence, there exists a tuple $(a_1,\ldots,a_s)\in S^\DB$ such that
$\big(\beta'(w_{j_1}),\allowbreak\ldots,\allowbreak \beta'(w_{j_m})\big) =
(a_{j_1},\ldots,a_{j_m})$.

We let $\beta$ be the valuation obtained from $\beta'$ by letting
$\beta(w_j)\deff a_j$ for every $j\in
[s]\setminus\set{j_1,\ldots,j_m}$.
With this choice we have $(\DB,\beta)\models Sw_1\cdots w_s$.
Note that $\beta$ differs from $\beta'$ only in variables $w_j$ for
which we know that
$\atoms(w_j)=\set{\psi_2}=\set{Sw_1\cdots w_s}$, i.e., variables
that occur in no other atom of $q$ than the atom $Sw_1\cdots w_s$.
Therefore, $(\DB,\beta)\models\psi$ for each atom $\psi$ of $q$, and
hence
$t=\big(\beta(u_1),\ldots,\beta(u_k)\big)\in q(\DB)$.

In summary, we obtain that $q'(\DB)\subseteq q(\DB)$ for every
$\schema$-db $\DB$ with $\DB\models \IND$. This completes the
proof showing that $q'\equiv_{\set{\IND}} q$.

To verify the claim's second statement, 
let $W\deff\setc{w_j}{j\in [s]\setminus\set{j_1,\ldots,j_m}}$ and
note that $\Vars(q')=\Vars(q)\setminus W$. 
For all $x\in W$ we have $\atoms_q(x)=\set{\psi_2}$,  
and for all $x\in\Vars(q')$ we have
$\atoms_{q'}(x)=\atoms_{q}(x)\setminus\set{\psi_2}$.
Using this, we obtain that if $q$ satisfies
condition \eqref{eq:q-hier-cond-i} of
Definition~\ref{def:qhierarchical} then so does $q'$. 

It remains to show that if $q$ is
q-hierarchical, then $q'$ also satisfies condition \eqref{eq:q-hier-cond-ii} of
Definition~\ref{def:qhierarchical}. 
Assume for contradiction that $q'$ does not satisfy this
condition. Then, there are $x\in\free(q')$ and
$y\in\Vars(q')\setminus\free(q')$ with
$\atoms_{q'}(x)\varsubsetneq\atoms_{q'}(y)$.

\emph{Case~1:} $\atoms_{q}(x)=\atoms_{q'}(x)$.
Then, $\atoms_q(x)\varsubsetneq\atoms_{q}(y)$, and thus $q$ does not
satisfy condition \eqref{eq:q-hier-cond-ii} of
Definition~\ref{def:qhierarchical} and hence is not q-hierarchical.

\emph{Case~2:} $\atoms_{q}(x)=\atoms_{q'}(x)\cup\set{\psi_2}$.
If $\atoms_q(y)=\atoms{q'}(y)\cup\set{\psi_2}$, then we are done by the
same reasoning as in Case~1.
On the other hand, if $\atoms_q(y)=\atoms_{q'}(y)$, then
$\psi_2\in\atoms_{q}(x)\setminus\atoms_q(y)$.
Furthermore, since $\atoms_{q'}(x)\varsubsetneq\atoms_{q'}(y)$, there are atoms
$\psi$ and $\psi'$ such that
$\psi\in\atoms_{q}(x)\cap\atoms_q(y)$ and
$\psi'\in\atoms_q(y)\setminus\atoms_q(x)$.
Thus, $q$ violates condition \eqref{eq:q-hier-cond-i} of
Definition~\ref{def:qhierarchical} and hence is not q-hierarchical.
\end{proof}

From the claim it follows that we can simplify a given query by
iteratively applying inclusion dependencies to pairs of atoms of the query.
In some cases, this transforms queries
that are hard in general into $\CONSTR$-equivalent queries that are
q-hierarchical and hence easy for dynamic evaluation.
For example, an iterated application of $\IND\deff E[2]\subseteq E[1]$
transforms the non-t-hierarchical query
\ $
 \{\, 
  (x,y) \, :\,
  \exists z_1\exists z_2\;(\,Exy \uund Eyz_1 \uund Ez_1z_2\,)
 \,\}
$ \
into the q-hierarchical query 
\ $
 \{\,
  (x,y) \, : \,
  Exy
  \,\}.
$
However, the limitations of this approach are documented by the query
\ $
 q\deff \{\,
  \emptytuple \,:\,
  \exists x\exists y\exists z\exists z'\,(\,
    Sx \uund Exy \uund Ty \uund Rzz'
  \,)
 \,\},
$ 
which is $\CONSTR$-equivalent to the \qhier query 
\ $
 q'\deff \{\,
  \emptytuple \,:\,
  \exists z\exists z'\; Rzz'
 \,\},
$
for $\CONSTR\deff\set{\, R[1,2]\subseteq E[1,2]\;,\; R[1]\subseteq
  S[1]\;,\; R[2]\subseteq T[1]\,}$, but where $q'$ cannot be obtained
by iteratively applying dependencies of $\CONSTR$ to $q$.

\smallskip

The presence of
\emph{functional dependencies} can also cause a hard query to become
tractable: 
Consider the functional dependency
$
 \FD \deff 
 E[1\to 2]
 $,
which is satisfied by a database $\DB$ iff for every $a\in\Dom$ there
is at most one $b\in\Dom$ such that $(a,b)\in E^\DB$.
On databases that satisfy $\FD$, 
the query $\qSET$ from Example~\ref{example:SD-constraints} 
can be evaluated with constant
 answer time and constant update time as follows:
 One can
 store for every $b$ the number $m_b$ of elements $(a,b)\in E^\DB$ such that
 $a\in S^\DB$ and in addition the number $m=\sum_{b\in T^\DB}m_b$,
 which is non-zero if and only if 
 $\qSET(\DB)=\Yes$.
 The functional
 dependency guarantees that every update affects  at most one number
 $m_b$ and one summand of
 $m$. Using constant access data structures, the query result can
 therefore be maintained with constant update time.

 The nature of this example is somewhat different compared to the
 approaches for small
 domain constraints or inclusion constraints described above: We can show that
 the query becomes tractable, but we are not aware of any $\{\FD\}$-equivalent
 q-hierarchical CQ or UCQ that would explain its tractability via a
 reduction to the setting without integrity constraints.
 To exploit the full power of functional dependencies for improving
 dynamic query evaluation, it seems therefore necessary to come up with
 new algorithmic approaches that go beyond the techniques we have for
 (q- or t-)hierarchical queries.

\bibliographystyle{plainurl}
\bibliography{literature}

\clearpage
\appendix

\section*{APPENDIX}
\section{Full proof of Theorem~\ref{thm:testing}\,\eqref{item:thm:testing:lowerbound} }
\label{appendix:TestingLowerBound}

\newcommand{\ellalt}{{\ell'}}
\newcommand{\phione}{\phiBTypical}
\newcommand{\phitwo}{\phiET}
\newcommand{\phiBool}{\queryphi_\exists}
\newcommand{\domn}{\ensuremath{\textup{dom}_\dimn}}

\subsection*{Proof of Theorem~\ref{thm:testing}\,\eqref{item:thm:testing:lowerbound} for the case that
$q$ violates condition \eqref{eq:t-hier-cond-i} 
of Definition~\ref{def:thierarchical}}

Assume we are given a query $q \deff q_\phi(z_1,\ldots,z_k)$ that is a homomorphic
core and that is not t-hierarchical because it 
violates condition \eqref{eq:t-hier-cond-i} 
of Definition~\ref{def:thierarchical}.
Thus, there
are two variables $\varx, \vary\in\Vars(q)\setminus\free(q)=\Vars(q)\setminus\set{z_1,\ldots,z_k}$ and three atoms
$\sgpsix, \sgpsixy, \sgpsiy$ of $q$ 
with
$\Vars(\sgpsix)\cap\set{\varx,\vary}=\set{\varx}$,
$\Vars(\sgpsixy)\cap\set{\varx,\vary}=\set{\varx,\vary}$, and
$\Vars(\sgpsiy)\cap\set{\varx,\vary}=\set{\vary}$.  

We show how a dynamic algorithm that solves the testing problem for $q$ can be used to
solve the OuMv-problem. 

Without loss of
generality we assume that
$\Vars(q)=\set{\varx,\vary,\varz_1,\ldots,\varz_\ell}$ for some $\ell\geq k$, and 
$|\Vars(q)|=\ell+2$.
For a
given $\dimn\times\dimn$ matrix $\matM$
we fix a domain $\domn$ that consists of $2n+\ell$ elements  
$\setc{\verta_\indi, \vertb_\indi}{\indi\in[\dimn]} \cup \setc{\vertc_\inds}{\inds\in[\ell]}$ 
from $\Dom\setminus\Cons(q)$.
For $\indi,\indj\in[\dimn]$ we let
$\iotasubij$
be the injective mapping
from $\Vars(q)\cup\Cons(q)$ to $\domn\cup\Cons(q)$
with
\begin{itemize}
\item
$\iotasubij(\varx)=\verta_\indi$, 
\item
$\iotasubij(\vary)=\vertb_\indj$,
\item
$\iotasubij(\varz_\inds)=\vertc_\inds$ \ for all $s\in[\ell]$, \ and
\item
$\iotasubij(d)=d$ \ for all $d\in\Cons(q)$.
\end{itemize}
We tacitly extend $\iotasubij$ to a mapping from $\Vars(q)\cup\Dom$ to $\Dom$ by letting
$\iotasubij(d)=d$ for every $d\in\Dom$.

For the matrix $\matM$ and for
$\dimn$-dimensional vectors $\vecu$ and $\vecv$, 
we define a $\sigma$-db
$\DB=\DB(q,\matM,\vecu,\vecv)$ with 
$\adom{\DB}\subseteq\domn\cup\Cons(q)$ as follows (recall our notational
convention that $\vecu_\indi$ denotes the $\indi$-th entry of a vector
$\vecu$).
For every atom
$\sgpsi=\relR\varw_1\cdots\varw_\arityr$ in $q$ we include
in $\relR^\DB$ the tuple
$\big(\iotasubij(\varw_1),\ldots,\iotasubij(\varw_\arityr)\big)$
\begin{itemize}
\item
for all $\indi,\indj\in[\dimn]$ such that 
 $\vecu_\indi=1$, \
 if \,$\sgpsi=\sgpsix$,  
\item
for all $\indi,\indj\in[\dimn]$ such that 
$\vecv_\indj=1$, \ 
if \,$\sgpsi=\sgpsiy$, 
\item
for all $\indi,\indj\in[\dimn]$ such that $\matM_{\indi,\indj}=1$, \ if
\,$\sgpsi=\sgpsixy$, \ and
\item 
for all $\indi,\indj\in[\dimn]$, \ if \,$\sgpsi\notin\set{\,\sgpsix,\,\sgpsixy,\,\sgpsiy\,}$. 
\end{itemize}

Note that the relations in the atoms $\sgpsix$, $\sgpsiy$, and
$\sgpsixy$ are used to encode $\vecu$, $\vecv$, and $\matM$,
respectively.  
Moreover, 
since $\sgpsix$ ($\sgpsiy$) does not contain the variable $\vary$ ($\varx$),
two databases $\DB=\DB(q,\matM,\vecu,\vecv)$ and
$\DB'=\DB(q,\matM,\mbox{$\vecu\,{}'$},\mbox{$\vecv\,{}'$})$ differ only in at most
$2\dimn$ tuples.
Therefore, $\DB'$ can be obtained from $\DB$ by $2\dimn$ update steps. 
It follows from the definitions that $\iotasubij$ is a 
homomorphism from $q$ to $\DB$ if and only if $\vecu_\indi=1$,
$\vecv_\indj=1$, and $\matM_{\indi,\indj}=1$.  Therefore,
$\vecu\trans \matM \vecv = 1$ if and only if there are
$\indi,\indj\in[n]$ such that $\iotasubij$ is a
homomorphism from $q$ to $\DB$.

We let 
$\homDBtoquery$ be the (surjective) mapping from $\domn\cup\Cons(q)$ to
$\Vars(q)\cup\Cons(q)$ 
defined by 
$\homDBtoquery(d)=d$ for all $d\in\Cons(q)$ and
$\homDBtoquery(\vertc_\inds)\defi\varz_\inds$,
$\homDBtoquery(\verta_\indi)\defi\varx$,
$\homDBtoquery(\vertb_\indj)\defi\vary$ for all $\indi,\indj\in[n]$
and $\inds\in[\ell]$.
Clearly, $\homDBtoquery$ is a
homomorphism from $\DB$ to $q$.
Obviously, the following is true for every mapping
$\homh$ from $\Vars(q)$ to $\Adom(\DB)$ and for all $\varw\in\Vars(q)$:
\begin{itemize}
\item
  if $\homh(w)=\vertc_\inds$ for some $\inds\in[\ell]$, 
  then $(\homDBtoquery\circ h)(w)=\varz_\inds$,
\item
  if $\homh(w)=\verta_\indi$ for some $\indi\in[n]$, 
  then $(\homDBtoquery\circ h)(w)=\varx$,
\item
  if $\homh(w)=\vertb_\indj$ for some $\indj\in[n]$, 
  then $(\homDBtoquery\circ h)(w)=\vary$,
\item
  if $\homh(w)=d$ for some $d\in\Cons(q)$, 
  then $(\homDBtoquery\circ h)(w)=d$.
\end{itemize}

We define the partition
$\partPfull=\Set{\set{\vertc_1},\ldots,\set{\vertc_\ell},\setc{\verta_\indi}{\indi\in[\dimn]},\setc{\vertb_\indj}{\indj\in[\dimn]}}$
of $\domn$ and say that a mapping
$\homh$ from $\Vars(q)\cup\Dom$ to $\Dom$
\emph{respects} $\partPfull$,
if for each set from the partition there is exactly one element in the
image of $\Vars(q)$ under $\homh$, i.e., the set 
$\setc{h(w)}{w\in\Vars(q)}$.
\begin{claim}\label{claim:respectinghom_uMv}
  $\vecu\trans \matM \vecv = 1$ \ $\iff$ \
  There exists a homomorphism $\homh\colon q\to \DB$ that
  respects $\partPfull$.
\end{claim}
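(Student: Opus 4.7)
The plan is to verify the two directions of the equivalence separately. The forward direction is the easier one: if $\vecu\trans\matM\vecv=1$, I choose $\indi,\indj\in[\dimn]$ with $\vecu_\indi=\matM_{\indi,\indj}=\vecv_\indj=1$; the observation recorded before the claim then ensures that $\iotasubij$ is a homomorphism from $q$ to $\DB$, and its image on $\Vars(q)$ is exactly $\set{\verta_\indi,\vertb_\indj,\vertc_1,\ldots,\vertc_\ell}$, which meets each part of $\partPfull$ in exactly one element. Hence $\iotasubij$ is the required partition-respecting homomorphism.

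For the converse, starting from a partition-respecting homomorphism $\homh\colon q\to\DB$, the strategy is to normalise $\homh$ to a canonical mapping of the form $\iota_{\indi_0,\indj_0}$. Consider the composition $\sigma\deff\homDBtoquery\circ\homh$. Since $\homh\colon q\to\DB$ and $\homDBtoquery\colon\DB\to q$ are both homomorphisms, $\sigma$ is an endomorphism of $q$; and because $q$ is a homomorphic core, every endomorphism of $q$ is an automorphism, so $\sigma^{-1}$ exists and is itself a homomorphism $q\to q$. The partition-respecting hypothesis provides an explicit witness to the bijectivity of $\sigma$ on $\Vars(q)$: since $\partPfull$ has $\ell+2$ parts and $|\Vars(q)|=\ell+2$, the image $\homh(\Vars(q))$ must lie entirely in $\domn$ and contain exactly one representative from each part, so applying $\homDBtoquery$ yields $\set{\varx,\vary,\varz_1,\ldots,\varz_\ell}=\Vars(q)$.

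Now set $\homh'\deff\homh\circ\sigma^{-1}$, which is a homomorphism $q\to\DB$ (as a composition of homomorphisms) satisfying $\homDBtoquery\circ\homh'=\mathrm{id}$ on $\Vars(q)$. This identity forces $\homh'(\varx)\in\homDBtoquery^{-1}(\varx)=\set{\verta_\indi:\indi\in[\dimn]}$, $\homh'(\vary)\in\set{\vertb_\indj:\indj\in[\dimn]}$, and $\homh'(\varz_\inds)=\vertc_\inds$ for each $\inds\in[\ell]$ (the last being forced since the $\vertc_\inds$-parts are singletons). Consequently $\homh'=\iota_{\indi_0,\indj_0}$ for some $\indi_0,\indj_0\in[\dimn]$, and the observation preceding the claim then immediately yields $\vecu_{\indi_0}=\matM_{\indi_0,\indj_0}=\vecv_{\indj_0}=1$, so $\vecu\trans\matM\vecv=1$.

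The main obstacle I anticipate is the invertibility of $\sigma$, which rests crucially on $q$ being a homomorphic core: without this assumption $\sigma$ could be a proper endomorphism and the normalisation step would break down. Everything else amounts to routine bookkeeping — identifying $\homh'$ with $\iota_{\indi_0,\indj_0}$ for specific indices $\indi_0,\indj_0$, and reading off the three equalities $\vecu_{\indi_0}=\matM_{\indi_0,\indj_0}=\vecv_{\indj_0}=1$ directly from the construction of $\DB$.
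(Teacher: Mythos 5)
Your proof is correct and follows essentially the same route as the paper's: the forward direction exhibits $\iotasubij$ directly, and the backward direction forms $\sigma=\homDBtoquery\circ\homh$, uses the partition-respecting hypothesis together with $|\partPfull|=|\Vars(q)|=\ell+2$ to see that $\sigma$ is a bijection on $\Vars(q)$, and composes $\homh\circ\sigma^{-1}$ to recover some $\iota_{\indi_0,\indj_0}$. One small remark: your appeal to the homomorphic-core property to invert $\sigma$ is not needed (and the paper deliberately avoids it here, reserving the core assumption for Claim~\ref{claim:respectinghom_core}), since your own counting argument already shows $\sigma$ is a bijective endomorphism of a finite structure and hence an automorphism.
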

\begin{proof}
  For one direction assume that $\vecu\trans \matM \vecv = 1$.  Then
  there are $\indi,\indj\in[\dimn]$ such that $\iotasubij$ is a homomorphism
  from $q$ to $\DB$
  that respects $\partPfull$.  For the other direction assume that
  $\homh\colon q\to \DB$ is a homomorphism that respects
  $\partPfull$.  
  Thus, there are elements $w_{a}, w_b, w_1,\ldots,w_{\ell}$ in $\Vars(q)$ such that
  $\homh(w_{a})\in\setc{a_i}{i\in[n]}$, $\homh(w_b)\in\setc{b_j}{j\in[n]}$, and $\homh(w_s)=c_s$ for each $s\in[\ell]$.
It follows that $(\homDBtoquery\circ\homh)$ is a
  bijective homomorphism from $q_\phi(z_1,\ldots,z_k)$ to $q_\phi((\homDBtoquery\circ\homh)(z_1),\ldots,(\homDBtoquery\circ\homh)(z_k))$.
  Therefore, 
  it can easily be verified that
  $\homh\circ(\homDBtoquery\circ\homh)^{-1}$ is a homomorphism from
  $q$ to $\DB$ which equals $\iotasubij$ for some
  $\indi,\indj\in[\dimn]$.  This implies that $\vecu\trans \matM \vecv = 1$.
\end{proof}
\begin{claim}\label{claim:respectinghom_core}
If  $q$ is a homomorphic core, then every homomorphism $\homh\colon q\to \DB$ respects $\partPfull$. 
\end{claim}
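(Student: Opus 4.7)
The plan is to consider the composed map $h' := g \circ h$ and argue that $h'$ is an endomorphism of $q$ as a $k$-ary CQ; the core assumption then immediately yields that $h'$ is an automorphism of $q$, and the claim follows from the structural form of $g$. First, I verify that $h'$ is a self-homomorphism of the underlying structure of $q$ fixing $\Cons(q)$: since $h$ takes values in $\Adom(\DB) \subseteq \domn \cup \Cons(q)$ and $g$ maps this set into $\Vars(q) \cup \Cons(q)$ while fixing $\Cons(q)$, the map $h'$ sends $\Vars(q) \cup \Cons(q)$ into itself and fixes $\Cons(q)$ pointwise. For every atom $R w_1 \cdots w_r$ of $q$, the tuple $\bigl(h(w_1),\ldots,h(w_r)\bigr)$ lies in $R^{\DB}$ because $h$ is a homomorphism, and applying $g$ (itself a homomorphism $\DB \to q$) yields the atom $R\,h'(w_1)\cdots h'(w_r)$ of $q$.

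The key step is to establish that $h'$ additionally fixes each free variable, i.e., $h'(z_s) = z_s$ for every $s \in [k]$; equivalently, $h(z_s) = c_s$. This is the only place where the core hypothesis is used non-trivially and is the main obstacle of the proof. The plan is to rule out every alternative value for $h(z_s)$: if $h(z_s)$ fell in $\setc{a_i}{i \in [\dimn]}$, in $\setc{b_j}{j \in [\dimn]}$, equalled some $c_t$ with $t \neq s$, or lay in $\Cons(q)$, then $h'$ would send the free variable $z_s$ respectively to the quantified variable $x$, to $y$, to a different variable $z_t$, or to a constant. In each case I expect that, by iterating $h'$ until it reaches an idempotent and possibly post-composing with an automorphism of the structure of $q$, one extracts a homomorphism from $q$ into a proper subquery of $q$ sharing the free-variable tuple $(z_1,\ldots,z_k)$, contradicting the assumption that $q$ is a homomorphic core.

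Once $h'$ is known to be an endomorphism of $q$ as a CQ, the core property directly yields that $h'$ is an automorphism of $q$, so in particular it restricts to a bijection of $\Vars(q) = \set{x,y,z_1,\ldots,z_\ell}$ onto itself. Because $g$ sends each $c_s$ bijectively to $z_s$, collapses $\setc{a_i}{i \in [\dimn]}$ onto $x$ and $\setc{b_j}{j \in [\dimn]}$ onto $y$, and fixes $\Cons(q)$, the only way $h' = g \circ h$ can realize this bijection is that no variable of $q$ is mapped by $h$ into $\Cons(q)$, and that $h(\Vars(q))$ meets each of the $\ell+2$ blocks $\set{c_s}$, $\setc{a_i}{i \in [\dimn]}$, $\setc{b_j}{j \in [\dimn]}$ in exactly one element. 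Since $|\Vars(q)| = \ell+2$ coincides with the number of these blocks, this is precisely the statement that $h$ respects $\partPfull$.
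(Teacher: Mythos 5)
Your overall architecture (pass to $h'\deff\homDBtoquery\circ\homh$, use the core property to force $h'$ to be a bijection of $\Vars(q)$, then read off from the block structure of $\homDBtoquery$ that $\homh(\Vars(q))$ meets every class of $\partPfull$ exactly once) is close in spirit to what is needed, and your first and last paragraphs are sound. The problem is exactly the step you single out as ``the main obstacle'': the assertion that $h'$ fixes every free variable $z_s$. This is not proved --- it is only announced as a plan (``by iterating $h'$ until it reaches an idempotent and possibly post-composing with an automorphism \ldots I expect'') --- and, worse, it is false, so no completion of that plan exists. Consider the homomorphic core $q(z_1,z_2)=\{\,(z_1,z_2) : \exists x\exists y\,(Ax\und Exy\und By\und Rz_1z_2\und Rz_2z_1)\,\}$, which violates condition~(i) of Definition~\ref{def:thierarchical} via $x,y$. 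In the constructed database we have $R^{\DB}=\set{(c_1,c_2),(c_2,c_1)}$, so whenever $\vecu_i=\matM_{i,j}=\vecv_j=1$ there is a homomorphism $\homh$ with $\homh(z_1)=c_2$, $\homh(z_2)=c_1$, $\homh(x)=\verta_i$, $\homh(y)=\vertb_j$. This $\homh$ respects $\partPfull$ (as the claim asserts), yet $h'(z_1)=z_2\neq z_1$. The core property only forbids \emph{head-fixing} homomorphisms into proper subqueries; it does not forbid endomorphisms of the canonical structure of $q$ that permute the free variables, so the case ``$h(z_s)=c_t$ with $t\neq s$'' in your case analysis genuinely occurs and cannot be ruled out.

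The paper's proof avoids this detour entirely by arguing the contrapositive in one step: if $\homh$ does not respect $\partPfull$, then (by counting, since the number of classes of $\partPfull$ equals $|\Vars(q)|$) some class is disjoint from $\homh(\Vars(q))$; hence some variable of $q$ is missed by $\homDBtoquery\circ\homh$, hence every atom containing that variable is missed, and $\homDBtoquery\circ\homh$ maps $q$ into a proper subquery of $q$ --- contradicting that $q$ is a core. Nothing about fixing the free variables pointwise is required. (Strictly speaking even this argument uses that $\homDBtoquery\circ\homh$ preserves the head, which does hold in the only place the claim is invoked, namely Claim~\ref{claim:testing-lowerbound-i}, where $\homh(z_s)=c_s$ for all $s\in[k]$.) You should replace your second paragraph by this direct non-surjectivity argument; as written, it is a gap that cannot be filled.
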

\begin{proof}
  Assume for contradiction that $\homh\colon q\to \DB$ is a
  homomorphism that does not respect $\partPfull$.  Then
  $(\homDBtoquery\circ\homh)$ is a homomorphism from $q$ into a
  proper subquery of $q$, contradicting that $q$
  is a homomorphic core.
\end{proof}

\begin{claim}\label{claim:testing-lowerbound-i}
If $q$ is a homomorphic core, then \
$\vecu\trans \matM \vecv = 1$
 $\iff$ 
$(c_1,\ldots,c_k)\in q(\DB)$\,.
\end{claim}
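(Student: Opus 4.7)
The plan is to chain together Claims~\ref{claim:respectinghom_uMv} and \ref{claim:respectinghom_core} to obtain the desired equivalence. The bridge connecting the arithmetic condition $\vec u\trans M \vec v = 1$ with membership in $q(\DB)$ will be the notion of a \emph{respecting} homomorphism: Claim~\ref{claim:respectinghom_uMv} already equates $\vec u\trans M \vec v = 1$ with the existence of such a homomorphism, and Claim~\ref{claim:respectinghom_core} will upgrade every homomorphism $q\to\DB$ to one that respects $\partPfull$ provided $q$ is a homomorphic core. The crucial ad-hoc observation linking respecting homomorphisms to the tuple $(c_1,\ldots,c_k)$ is that $\iota_{i,j}(z_s) = c_s$ by construction for every $s\in[\ell]$, hence in particular for every $s\in[k]$.

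For the direction $\vec u\trans M \vec v = 1 \Rightarrow (c_1,\ldots,c_k)\in q(\DB)$, I would invoke Claim~\ref{claim:respectinghom_uMv} directly: the hypothesis produces indices $i,j\in[n]$ for which $\iota_{i,j}\colon q\to\DB$ is a homomorphism. Since $\iota_{i,j}(z_s)=c_s$ for each $s\in[k]$, the tuple $\big(\iota_{i,j}(z_1),\ldots,\iota_{i,j}(z_k)\big)=(c_1,\ldots,c_k)$ witnesses membership in $q(\DB)$. Note that this half does not depend on $q$ being a core.

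For the converse direction, I would pick an arbitrary witnessing homomorphism $\beta\colon q\to\DB$ with $\beta(z_s)=c_s$ for $s\in[k]$; such a $\beta$ exists because $(c_1,\ldots,c_k)\in q(\DB)$. Here the core hypothesis kicks in: Claim~\ref{claim:respectinghom_core} guarantees that $\beta$ respects $\partPfull$. Invoking Claim~\ref{claim:respectinghom_uMv} in the other direction now yields $\vec u\trans M \vec v = 1$.

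The main conceptual obstacle, namely ruling out ``degenerate'' homomorphisms that collapse several variables into the same partition class (which would correspond to $\vec u\trans M \vec v = 0$), has already been handled by Claim~\ref{claim:respectinghom_core} and is exactly where the core assumption is used. Consequently, I expect this claim to reduce to a short bookkeeping argument that stitches the previous two claims together, with no additional technical difficulty.
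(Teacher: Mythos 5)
Your proposal is correct and matches the paper's own proof essentially verbatim: the forward direction uses the fact that $\vecu\trans\matM\vecv=1$ yields some homomorphism $\iotasubij$ with $\iotasubij(z_s)=c_s$, and the converse chains Claim~\ref{claim:respectinghom_core} (to upgrade an arbitrary witnessing homomorphism to one respecting $\partPfull$) with Claim~\ref{claim:respectinghom_uMv}. No gaps.
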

\begin{proof}
We already know that $\vecu\trans \matM \vecv = 1$ if and only if there are $i,j\in[n]$ such that 
$\iotasubij$ is a homomorphism from $q$ to $\DB$.
Furthermore, $\iotasubij(z_s)=c_s$ for all $s\in[\ell]$ and all $i,j\in[n]$.
Thus, if 
$\vecu\trans\matM \vecv = 1$, then there exist $i,j\in[n]$ such that 
$\big(\iotasubij(z_1),\ldots,\iotasubij(z_k)\big)=(c_1,\ldots,c_k)\in q(\DB)$.
This proves direction ``$\Longrightarrow$'' of the claim.
For the opposite direction, note that if $(c_1,\ldots,c_k)\in q(\DB)$, then there is
a homomorphism $h$ from $q$ to $\DB$. By Claim~\ref{claim:respectinghom_core}, $h$ respects
$\partPfull$, and hence by Claim~\ref{claim:respectinghom_uMv}, $\vecu\trans \matM \vecv = 1$.
\end{proof}

We are now ready for proving Theorem~\ref{thm:testing}\,\eqref{item:thm:testing:lowerbound} for the case that
$q$ violates condition \eqref{eq:t-hier-cond-i} 
of Definition~\ref{def:thierarchical}.
  Assume for contradiction that the
  testing problem
  for $q$ can be solved with update time
  $\updatetime=\bigoh(\actdomsize^{1-\smalleps})$ 
  and testing time $\testingtime=\bigoh(\actdomsize^{2-\smalleps})$.
  We can use this
  algorithm to solve the OuMv-problem in time
  $\bigoh(\actdomsize^{3-\smalleps})$ as follows.

  In the
  preprocessing phase, we are given the $\dimn\times\dimn$ matrix
  $\matM$ and let $\vecu^{\,0}$, $\vecv^{\,0}$ be the all-zero vectors
  of dimension $\dimn$.
  We start the preprocessing phase of our testing algorithm for
  $q$ with the empty database. As this database has constant size,
  the preprocessing phase finishes in constant time.
  Afterwards, we use $\bigOh(\dimn^2)$ insert operations to build the
  database $\DB(q,\matM,\vecu^{\,0},\vecv^{\,0})$.
  All this is done within time $\bigOh(n^2\updatetime)=\bigOh(n^{3-\epsilon})$.

  When a pair
  of vectors $\vecu^{\,\indt}$, $\vecv^{\,\indt}$ (for $t\in[\dimn]$) arrives, we change the
  current database \allowbreak
  $\DB(q,\matM,\vecu^{\,\indt-1},\vecv^{\,\indt-1})$ into 
  $\DB(q,\matM,\vecu^{\,\indt},\vecv^{\,\indt})$ by using
  at most $2\dimn$ update steps.  
  By Claim~\ref{claim:testing-lowerbound-i}
  we know that
  $(\vecu^{\,\indt})\trans \matM \vecv^{\,\indt} = 1$ if, and only if, 
  $(c_1,\ldots,c_k)\in q(\DB)$, for
  $\DB \deff \DB(q,\matM,\vecu^{\,\indt},\vecv^{\,\indt})$.
  Hence, after running the $\TEST$ routine with input $(c_1,\ldots,c_k)$
  in time $\testingtime=\bigOh(\setsize{\adom{\DB}}^{2-\smalleps}) \allowbreak = \bigoh(\dimn^{2-\smalleps})$
  we can output the value of $(\vecu^{\,\indt})\trans \matM \vecv^{\,\indt}$.

  The time we spend for each $t\in[n]$ is bounded by
  $\bigOh(2\dimn\updatetime+\testingtime)=\bigoh(\dimn^{2-\smalleps})$.
  Thus, the
  overall running time for solving the OuMv-problem sums up to $\bigoh(\dimn^{3-\smalleps})$, contradicting 
  the OuMv-conjecture and hence also the OMv-conjecture.

This completes the proof of Theorem~\ref{thm:testing}\,\eqref{item:thm:testing:lowerbound} for the case that
$q$ violates condition \eqref{eq:t-hier-cond-i} 
of Definition~\ref{def:thierarchical}.
\qed

\subsection*{Proof of Theorem~\ref{thm:testing}\,\eqref{item:thm:testing:lowerbound} for the case that
$q$ violates condition \eqref{eq:t-hier-cond-ii} 
of Definition~\ref{def:thierarchical}}

Assume we are given a query $q \deff q_\phi(z_1,\ldots,z_k)$ that is a homomorphic
core and that is not t-hierarchical because it 
violates condition \eqref{eq:t-hier-cond-ii} 
of Definition~\ref{def:thierarchical}.
Thus, there
are two variables $\varx\in\free(q)$ and $\vary\in\Vars(q)\setminus\free(q)$ and two atoms
$\sgpsixy$ and $\sgpsiy$ of $q$ 
with
$\Vars(\sgpsixy)\cap\set{\varx,\vary}=\set{\varx,\vary}$ and
$\Vars(\sgpsiy)\cap\set{\varx,\vary}=\set{\vary}$.  

We show how a dynamic algorithm that solves the testing problem for $q$ can be used to
solve the OuMv-problem. 

Without loss of
generality we assume that 
$\Vars(q)=\set{z_1,\ldots,z_\ell}$ with 
$\free(q)=\set{z_1,\ldots,z_k}$, $\ell>k$, 
$x=z_1$, and $y=z_{\ell}$.

For a given $\dimn\times\dimn$ matrix $\matM$
we fix a domain $\domn$ that consists of $2n+\ell-2$ elements  
$\setc{\verta_\indi, \vertb_\indi}{\indi\in[\dimn]} \cup \setc{\vertc_\inds}{\inds\in\set{2,\ldots,\ell{-}1}}$ 
from $\Dom\setminus\Cons(q)$.
For $\indi,\indj\in[\dimn]$ we let
$\iotasubij$
be the injective mapping
from $\Vars(q)\cup\Cons(q)$ to $\domn\cup\Cons(q)$
with
\begin{itemize}
\item
$\iotasubij(\varx)=\verta_\indi$, 
\item
$\iotasubij(\vary)=\vertb_\indj$,
\item
$\iotasubij(\varz_\inds)=\vertc_\inds$ \ for all $s\in\set{2,\ldots,\ell{-}1}$, \ and
\item
$\iotasubij(d)=d$ \ for all $d\in\Cons(q)$.
\end{itemize}
We tacitly extend $\iotasubij$ to a mapping from $\Vars(q)\cup\Dom$ to $\Dom$ by letting
$\iotasubij(d)=d$ for every $d\in\Dom$.

For the matrix $\matM$ and for
an $\dimn$-dimensional vector $\vecv$, 
we define a $\schema$-db
$\DB=\DB(q,\matM,\vecv)$ with 
$\adom{\DB}\subseteq\domn\cup\Cons(q)$ as follows (recall our notational
convention that $\vecv_\indj$ denotes the $\indj$-th entry of a vector
$\vecv$).
For every atom
$\sgpsi=\relR\varw_1\cdots\varw_\arityr$ in $q$ we include
in $\relR^\DB$ the tuple
$\big(\iotasubij(\varw_1),\ldots,\iotasubij(\varw_\arityr)\big)$
\begin{itemize}
\item
for all $\indi,\indj\in[\dimn]$ such that 
$\vecv_\indj=1$, \ 
if \,$\sgpsi=\sgpsiy$, 
\item
for all $\indi,\indj\in[\dimn]$ such that $\matM_{\indi,\indj}=1$, \ if
\,$\sgpsi=\sgpsixy$, \ and
\item 
for all $\indi,\indj\in[\dimn]$, \ if \,$\sgpsi\notin\set{\,\sgpsixy,\,\sgpsiy\,}$. 
\end{itemize}

Note that the relations in the atoms $\sgpsiy$ and
$\sgpsixy$ are used to encode $\vecv$ and $\matM$,
respectively.  
Moreover, 
since $\sgpsiy$ does not contain the variable $\varx$,
two databases $\DB=\DB(q,\matM,\vecv)$ and
$\DB'=\DB(q,\matM,\mbox{$\vecv\,{}'$})$ differ only in at most
$\dimn$ tuples.
Therefore, $\DB'$ can be obtained from $\DB$ by $\dimn$ update steps. 
It follows from the definitions that 
\[
\text{$\iotasubij$ is a 
homomorphism from $q$ to $\DB$}
\ \ \iff \ \ 
\text{$\matM_{\indi,\indj}=1$ and $\vecv_\indj=1$.}
\]  

We let 
$\homDBtoquery$ be the (surjective) mapping from $\domn\cup\Cons(q)$ to
$\Vars(q)\cup\Cons(q)$ 
defined by 
$\homDBtoquery(d)=d$ for all $d\in\Cons(q)$ and
$\homDBtoquery(\vertc_\inds)\defi\varz_\inds$,
$\homDBtoquery(\verta_\indi)\defi\varx$,
$\homDBtoquery(\vertb_\indj)\defi\vary$ for all $\indi,\indj\in[n]$
and $\inds\in\set{2,\ldots,\ell{-}1}$.
Clearly, $\homDBtoquery$ is a
homomorphism from $\DB$ to $q$.
Obviously, the following is true for every mapping
$\homh$ from $\Vars(q)$ to $\Adom(\DB)$ and for all $\varw\in\Vars(q)$:
\begin{itemize}
\item
  if $\homh(w)=\vertc_\inds$ for some $\inds\in\set{2,\ldots,\ell{-}1}$, 
  then $(\homDBtoquery\circ h)(w)=\varz_\inds$,
\item
  if $\homh(w)=\verta_\indi$ for some $\indi\in[n]$, 
  then $(\homDBtoquery\circ h)(w)=\varx$,
\item
  if $\homh(w)=\vertb_\indj$ for some $\indj\in[n]$, 
  then $(\homDBtoquery\circ h)(w)=\vary$,
\item
  if $\homh(w)=d$ for some $d\in\Cons(q)$, 
  then $(\homDBtoquery\circ h)(w)=d$.
\end{itemize}

We define the partition
$\partPfull=\Set{\set{\vertc_2},\ldots,\set{\vertc_{\ell-1}},\setc{\verta_\indi}{\indi\in[\dimn]},\setc{\vertb_\indj}{\indj\in[\dimn]}}$
of $\domn$ and say that a mapping
$\homh$ from $\Vars(q)\cup\Dom$ to $\Dom$
\emph{respects} $\partPfull$,
if for each set from the partition there is exactly one element in the
set $h(\Vars(q))\deff\setc{h(w)}{w\in\Vars(q)}$.
\begin{claim}\label{claim:respectinghom_uMv_testingii}
For every $i\in[n]$, the following are equivalent:
\begin{itemize}
\item
There is a $j\in[n]$ such that $M_{i,j}=1$ and $\vecv_j=1$.
\item
There is a homomorphism $\homh\colon q\to \DB$ that
respects $\partPfull$ such that $a_i\in\homh(\Vars(q))$.
\end{itemize}
\end{claim}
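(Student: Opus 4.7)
I split the equivalence into two directions. For the forward direction, given $j\in[n]$ with $M_{i,j}=1$ and $\vecv_j=1$, the map $h:=\iota_{i,j}$ is a homomorphism $q\to\DB$: the construction of $\DB$ inserts the image of $\sgpsixy$ into its relation precisely because $M_{i,j}=1$, the image of $\sgpsiy$ because $\vecv_j=1$, and the image of every other atom unconditionally. Since $\iota_{i,j}$ is injective with image $\{a_i,b_j\}\cup\{c_s:s\in\{2,\ldots,\ell{-}1\}\}$, $h$ respects $\partPfull$, and $a_i=\iota_{i,j}(x)\in h(\Vars(q))$.

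For the converse, let $h\colon q\to\DB$ respect $\partPfull$ with $a_i\in h(\Vars(q))$. Because $\partPfull$ has $\ell=|\Vars(q)|$ classes, respecting $\partPfull$ forces $h$ to be injective and its image to contain exactly one element from each class. Let $w_x$, $w_y$ and $w_s$ (for $s\in\{2,\ldots,\ell-1\}$) be the unique variables with $h(w_x)=a_i$, $h(w_y)=b_{j_0}$ for some $j_0\in[n]$, and $h(w_s)=c_s$. Then $\sigma:=g\circ h$ is a bijection $\Vars(q)\to\Vars(q)$ sending $w_x\mapsto x$, $w_y\mapsto y$, $w_s\mapsto z_s$, and it sends atoms of $q$ to atoms of $q$ (being a composition of two homomorphisms); being bijective on the finite variable set it permutes the atoms of $q$, so $\sigma^{-1}$ likewise maps atoms of $q$ to atoms of $q$.

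Setting $h':=h\circ\sigma^{-1}$, for each atom $R v_1\cdots v_r$ of $q$ we have $(h'(v_1),\ldots,h'(v_r))=h(R\sigma^{-1}(v_1)\cdots\sigma^{-1}(v_r))$, which lies in $R^\DB$ because $\sigma^{-1}(Rv_1\cdots v_r)$ is an atom of $q$ and $h$ is a homomorphism; hence $h'$ is a homomorphism $q\to\DB$. A direct computation on variables yields $h'=\iota_{i,j_0}$, so $\iota_{i,j_0}$ itself is a homomorphism $q\to\DB$. Applying it to $\sgpsixy=Rv_1\cdots v_r$ puts the tuple $T:=(\iota_{i,j_0}(v_1),\ldots,\iota_{i,j_0}(v_r))$ into $R^\DB$; by the definition of $\DB$, $T$ must arise as $(\iota_{i'',j''}(v_1'),\ldots,\iota_{i'',j''}(v_r'))$ for some atom $\psi'=Rv_1'\cdots v_r'$ of $q$ and some $(i'',j'')$ satisfying the case-specific condition for $\psi'$. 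The pairwise disjointness of the image blocks $\{a_1,\ldots,a_n\}$, $\{b_1,\ldots,b_n\}$, each singleton $\{c_s\}$, and $\Cons(q)$, together with the injectivity of each $\iota$-map on $\Vars(q)\cup\Cons(q)$, forces $v_k=v_k'$ for every $k$ (and moreover $i''=i$ at each $x$-position and $j''=j_0$ at each $y$-position); hence $\psi'=\sgpsixy$ and the case condition gives $M_{i,j_0}=1$. The same argument applied to $\sgpsiy$ gives $\vecv_{j_0}=1$, and $j:=j_0$ is the desired index.

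The hard part will be the final coordinate-matching step, which must be robust against self-joins in $q$; this is handled purely combinatorially by the disjointness of the $\iota$-image blocks and the injectivity of each $\iota$-map on $\Vars(q)\cup\Cons(q)$. Notably, the core property of $q$ is not used in the proof of this particular claim — the bijectivity of $\sigma$ comes entirely from respecting $\partPfull$ — although it will still be needed elsewhere in the surrounding lower-bound argument, e.g.\ to reduce the existence of a homomorphism witnessing $(a_i,c_2,\ldots,c_k)\in q(\DB)$ to the existence of a homomorphism that respects $\partPfull$ and has $a_i$ in its image.
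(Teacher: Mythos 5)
Your proof is correct and follows essentially the same route as the paper's: the forward direction via $\iotasubij$ directly, and the converse by composing $h$ with the homomorphism $\homDBtoquery$ to obtain a variable permutation $\sigma$, so that $h\circ\sigma^{-1}$ equals some $\iotasubij[b]$, wait---equals $\iota_{i,j_0}$, from which the matrix and vector entries are read off. The only difference is that you spell out two steps the paper leaves as ``it can easily be verified'' (the counting argument showing that respecting $\partPfull$ forces injectivity, and the coordinate-matching argument showing that $\iota_{i,j_0}$ being a homomorphism really forces $M_{i,j_0}=1$ and $\vecv_{j_0}=1$ even in the presence of self-joins), which is a welcome elaboration rather than a different approach.
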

\begin{proof}
Consider a fixed $i\in[n]$.
For one direction assume that there is a $j\in[n]$ such that
$M_{i,j}=1$ and $\vecv_j=1$. 
Then, $\iotasubij$ is a homomorphism from $q$ to $\DB$. Obviously,
$\iotasubij$ respects $\partPfull$, and $a_i=\iotasubij(x)\in\iotasubij(\Vars(q))$.  

For the other direction assume that
$\homh\colon q\to \DB$ is a homomorphism that respects
$\partPfull$ and $a_i\in \homh(\Vars(q))$.  
Thus, there are elements $w_{a_i}, w_b, w_2,\ldots,w_{\ell-1}$ in $\Vars(q)$ such that
$\homh(w_{a_i})=a_i$, $\homh(w_b)\in\setc{b_j}{j\in[n]}$, and $\homh(w_s)=c_s$ for each $s\in\set{2,\ldots,\ell{-}1}$.
It follows that $(\homDBtoquery\circ\homh)$ is a
  bijective homomorphism from $q_\phi(z_1,\ldots,z_k)$ to 
$q_\phi((\homDBtoquery\circ\homh)(z_1),\ldots,(\homDBtoquery\circ\homh)(z_k))$.

  Therefore, 
  it can easily be verified that
  $\homh\circ(\homDBtoquery\circ\homh)^{-1}$ is a homomorphism from
  $q$ to $\DB$ which equals $\iotasubij$ for some
  $\indj\in[\dimn]$.  Thus, for some $j\in[n]$ we have $M_{i,j}=1$ and $\vecv_j=1$.
\end{proof}
\begin{claim}\label{claim:respectinghom_core_testingii}
If  $q$ is a homomorphic core, then every homomorphism $\homh\colon q\to \DB$ respects $\partPfull$. 
\end{claim}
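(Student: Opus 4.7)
The plan is to follow the structure of the proof of Claim~\ref{claim:respectinghom_core}. Assume for contradiction that $h\colon q\to\DB$ is a homomorphism that does not respect $\partPfull$. I will then exhibit a homomorphism from $q$ into a proper subquery of $q$, contradicting that $q$ is a homomorphic core.

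The central observation is that $g\circ h$ maps every atom of $q$ to an atom of $q$. This holds because, by construction of $\DB$, every tuple in any relation $R^\DB$ is of the form $(\iotasubij(w_1),\ldots,\iotasubij(w_r))$ for some atom $Rw_1\cdots w_r$ of $q$ and some $i,j\in[n]$, combined with the fact that $g\circ\iotasubij$ is the identity on $\Vars(q)\cup\Cons(q)$. I would next verify that $g\circ h$ fixes every tuple variable $z_1,\ldots,z_k$: for $s\in\{2,\ldots,k\}$ the class $\{c_s\}\in\partPfull$ is a singleton, and hence any homomorphism $h$ must send $z_s$ to $c_s$; and for $x=z_1$, the atom $\sgpsixy$ together with the shape of the tuples in $\DB$ contributed for this atom forces $h(x)$ into the class $\{a_i:i\in[n]\}$, so that $g(h(x))=x$.

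Now, if $h$ does not respect $\partPfull$, then some class $P\in\partPfull$ satisfies $P\cap h(\Vars(q))=\emptyset$, and the corresponding variable $v^*\in\{x,y\}\cup\{z_s:s\in\{2,\ldots,\ell-1\}\}$ is missing from the image of $g\circ h$. By the argument above, $v^*$ cannot be any of the tuple variables $z_1,\ldots,z_k$, so it must be existentially quantified in $q$. The atoms of $q$ containing $v^*$ --- which exist since every variable of a core appears in some atom --- are therefore omitted from the image of $g\circ h$, so $g\circ h$ gives a homomorphism from $q$ into a proper subquery of $q$ (which has $\{z_1,\ldots,z_k\}$ as its free variables), contradicting the core property.

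The main obstacle is the verification that $h$ must send $x$ into $\{a_i:i\in[n]\}$. For self-join free queries this is immediate since the tuples in the relation of $\sgpsixy$ carry only $a$-values in the $x$-position; in the presence of self-joins, one additionally exploits the core assumption (as is done implicitly in \cite{BKS_enumeration_PODS17}) to rule out tuples from other atoms sharing the same relation symbol.
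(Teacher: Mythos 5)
Your overall strategy is the same as the paper's, which disposes of the claim in two lines: if $\homh$ does not respect $\partPfull$, then $\homDBtoquery\circ\homh$ is a homomorphism of $q$ into a proper subquery. You have, however, put your finger on exactly the step that this two-line argument suppresses: for $\homDBtoquery\circ\homh$ to contradict the core property it must \emph{fix the free variables} $z_1,\ldots,z_k$, since both the notion of subquery and the notion of homomorphism between queries are tied to the output tuple. The problem is that your justification of this step does not work. That the class $\set{c_s}$ is a singleton says nothing about where $\homh$ sends $z_s$, and $\homh(x)\in\setc{\verta_\indi}{\indi\in[\dimn]}$ is likewise not forced. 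In the presence of self-joins the assertion you need at this point is simply false, and so is the claim as literally stated: take $q=\{\,(z_1)\,:\,\exists z_2\,(Ez_1z_2\und Ez_2z_2\und Tz_2)\,\}$, a homomorphic core violating condition~\eqref{eq:t-hier-cond-ii} of Definition~\ref{def:thierarchical} with $\varx=z_1$, $\vary=z_2$, $\sgpsixy=Ez_1z_2$, $\sgpsiy=Tz_2$. The third atom contributes all tuples $(\vertb_\indj,\vertb_\indj)$ to $E^\DB$, so whenever $\vecv_1=1$ the map $z_1\mapsto \vertb_1$, $z_2\mapsto \vertb_1$ is a homomorphism from $q$ to $\DB$ whose image misses the class $\setc{\verta_\indi}{\indi\in[\dimn]}$ entirely. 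No appeal to the core assumption can rule this out, so the ``main obstacle'' you flag at the end is not an omitted routine verification but a genuine counterexample.

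What saves the lower bound, and what you should prove instead, is the weaker statement that is all Claim~\ref{claim:testing-lowerbound-ii} uses: every homomorphism $\homh$ with $\big(\homh(z_1),\ldots,\homh(z_k)\big)=(\verta_\indi,\vertc_2,\ldots,\vertc_k)$ respects $\partPfull$. Under this hypothesis $\homDBtoquery\circ\homh$ fixes $z_1,\ldots,z_k$ for free, and the rest of your argument closes: the classes of all free variables are hit; a counting argument (the $\ell$ classes of $\partPfull$ against the at most $\ell$ elements of $\homh(\Vars(q))$) shows that ``not respecting'' always produces a class with \emph{no} element in the image --- a case distinction you currently skip, since a doubly hit class alone gives no proper subquery; the missed variable $v^*$ is therefore existentially quantified; and dropping the atoms containing $v^*$ yields a proper subquery into which $\homDBtoquery\circ\homh$ maps $q$, contradicting the core property. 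So your proof is repairable, but only after the statement itself is restricted to the homomorphisms that actually arise in the reduction; as written, both your free-variable step and the claim fail.
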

\begin{proof}
  Assume for contradiction that $\homh\colon q\to \DB$ is a
  homomorphism that does not respect $\partPfull$.  Then
  $(\homDBtoquery\circ\homh)$ is a homomorphism from $q$ into a
  proper subquery of $q$, contradicting that $q$
  is a homomorphic core.
\end{proof}

\begin{claim}\label{claim:testing-lowerbound-ii}
If $q$ is a homomorphic core, then for every $i\in[n]$ the following are equivalent:
\begin{itemize}
\item
There is a $j\in[n]$ such that $M_{i,j}=1$ and $\vecv_j=1$.
\item
$(a_i,c_2\ldots,c_k)\in q(\DB)$\,.
\end{itemize}
\end{claim}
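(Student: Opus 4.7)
The plan is to mirror the argument used for Claim~\ref{claim:testing-lowerbound-i}, reducing the stated equivalence to the two preceding Claims~\ref{claim:respectinghom_uMv_testingii} and~\ref{claim:respectinghom_core_testingii}. The two directions are handled separately, exploiting that under the normalising assumptions $x=z_1$, $y=z_\ell$, $\free(q)=\set{z_1,\ldots,z_k}$, and $\iotasubij(z_s)=c_s$ for $s\in\set{2,\ldots,\ell{-}1}$, the output tuple $(a_i,c_2,\ldots,c_k)$ is precisely $\big(\iotasubij(z_1),\iotasubij(z_2),\ldots,\iotasubij(z_k)\big)$.

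For the direction ``$\Rightarrow$'', I will assume there is a $j\in[n]$ with $M_{i,j}=1$ and $\vecv_j=1$. By the construction of $\DB=\DB(q,\matM,\vecv)$ recorded before Claim~\ref{claim:respectinghom_uMv_testingii}, the map $\iotasubij$ is then a homomorphism from $q$ to $\DB$. Reading off its values on the free variables $z_1,\ldots,z_k$ yields exactly $(a_i,c_2,\ldots,c_k)$, which therefore lies in $q(\DB)$.

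For the direction ``$\Leftarrow$'', I will assume $(a_i,c_2,\ldots,c_k)\in q(\DB)$ and pick a witnessing homomorphism $h\colon q\to\DB$ with $h(z_1)=a_i$ and $h(z_s)=c_s$ for $s\in\set{2,\ldots,k}$. Since $q$ is a homomorphic core, Claim~\ref{claim:respectinghom_core_testingii} forces $h$ to respect the partition $\partPfull$. In particular, $a_i=h(z_1)\in h(\Vars(q))$, and Claim~\ref{claim:respectinghom_uMv_testingii} applied with this $i$ delivers the required $j\in[n]$ with $M_{i,j}=1$ and $\vecv_j=1$. The only step with real content is this appeal to the core assumption in the backward direction; once the two preceding claims are in hand, the remainder is just unpacking definitions.
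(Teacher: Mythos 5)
Your proof is correct and follows essentially the same route as the paper's: the forward direction reads off $\iotasubij$ on the free variables once $M_{i,j}=\vecv_j=1$ makes it a homomorphism, and the backward direction chains Claim~\ref{claim:respectinghom_core_testingii} (the core assumption forces any witnessing homomorphism to respect $\partPfull$) with Claim~\ref{claim:respectinghom_uMv_testingii} via the observation $a_i=h(z_1)\in h(\Vars(q))$. Nothing is missing.
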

\begin{proof}
Consider a fixed $i\in[n]$.
For one direction assume that there is a $j\in[n]$ such that $M_{i,j}=1$ and $\vecv_j=1$.
Then, $\iotasubij$ is a homomorphism from $q$ to $\DB$, and thus
$\big(\iotasubij(x),\iotasubij(z_2),\ldots,\iotasubij(z_k)\big)\in q(\DB)$. 
By definition of $\iotasubij$ we have
 $\big(\iotasubij(x),\iotasubij(z_2),\ldots,\iotasubij(z_k)\big)\allowbreak =(a_i,c_2,\ldots,c_k)$, and hence we are done 
(recall that $y=z_\ell$ and $\ell>k$).

For the other direction assume that $(a_i,c_2,\ldots,c_k)\in q(\DB)$. 
Thus, there exists a homomorphism $\homh:q\to\DB$ such that
$(a_i,c_2,\ldots,c_k)=\big(\homh(x),\homh(z_2),\ldots,\homh(z_k)\big)$.
According to Claim~\ref{claim:respectinghom_core_testingii}, $\homh$ respects $\partPfull$. 
Furthermore, $a_i\in\homh(\Vars(q))$, since $\homh(x)=a_i$.
From Claim~\ref{claim:respectinghom_uMv_testingii},
we obtain that there is a $j\in[n]$ such that $M_{i,j}=1$ and $\vecv_j=1$.
\end{proof}

We are now ready for proving Theorem~\ref{thm:testing}\,\eqref{item:thm:testing:lowerbound} for the case that
$q$ violates condition \eqref{eq:t-hier-cond-ii} 
of Definition~\ref{def:thierarchical}.
  Assume for contradiction that the
  testing problem
  for $q$ can be solved with update time
  $\updatetime=\bigoh(\actdomsize^{1-\smalleps})$ 
  and testing time $\testingtime=\bigoh(\actdomsize^{1-\smalleps})$.
  We can use this
  algorithm to solve the OuMv-problem in time
  $\bigoh(\actdomsize^{3-\smalleps})$ as follows.

  In the
  preprocessing phase, we are given the $\dimn\times\dimn$ matrix
  $\matM$ and let $\vecv^{\,0}$ be the all-zero vectors
  of dimension $\dimn$.
  We start the preprocessing phase of our testing algorithm for
  $q$ with the empty database. As this database has constant size,
  the preprocessing phase finishes in constant time.
  Afterwards, we use $\bigOh(\dimn^2)$ insert operations to build the
  database $\DB(q,\matM,\vecv^{\,0})$.
  All this is done within time $\bigOh(n^2\updatetime)=\bigOh(n^{3-\epsilon})$.

  When a pair
  of vectors $\vecu^{\,\indt}$, $\vecv^{\,\indt}$ (for $t\in[\dimn]$) arrives, we change the
  current database \allowbreak
  $\DB(q,\matM,\vecv^{\,\indt-1})$ into 
  $\DB(q,\matM,\vecv^{\,\indt})$ by using
  at most $\dimn$ update steps.  

  By assumption, $q$ is a homomorphic core. 
  Thus, Claim~\ref{claim:testing-lowerbound-ii} tells us that for 
  $\DB \deff \DB(q,\matM,\vecv^{\,\indt})$
  and for every $i\in[n]$ we have
  \[
    (a_i,c_2,\ldots,c_k)\in q(\DB)
    \quad \iff \quad
    \text{there is a $j\in[n]$ such that $M_{i,j}=1$ and $\vecv_j=1$}\,,
  \]
  
  Hence, after running the $\TEST$ routine with input $(a_i,c_2,\ldots,c_k)$ for
  each $i\in[n]$ with $\vecu_i=1$, 
  we can output the value of $(\vecu^{\,\indt})\trans \matM \vecv^{\,\indt}$.
  For this, we use at most $n$ calls of the $\TEST$ routine, and each such call is executed 
  within time 
  $\testingtime=\bigOh(\setsize{\adom{\DB}}^{1-\smalleps}) \allowbreak = \bigoh(\dimn^{1-\smalleps})$.
  The time we spend to compute $(\vecu^{\,t})\trans M \vecv^{\,t}$ for a fixed $t\in[n]$ is therefore bounded by
  $\bigOh(\dimn \updatetime+\dimn\testingtime)=\bigoh(\dimn^{2-\smalleps})$.
  Thus, the
  overall running time for solving the OuMv-problem sums up to $\bigoh(\dimn^{3-\smalleps})$, contradicting 
  the OuMv-conjecture and hence also the OMv-conjecture.

This completes the proof of Theorem~\ref{thm:testing}\,\eqref{item:thm:testing:lowerbound} for the case that
$q$ violates condition \eqref{eq:t-hier-cond-ii} 
of Definition~\ref{def:thierarchical}.
\qed

\end{document}